\pgfplotsset{width=5cm,compat=1.15}
\definecolor{zzttqq}{rgb}{0.6,0.2,0.0}
\newtheorem {theorem}{Theorem}
\newtheorem {corollary}{Corollary}
\newtheorem {definition}{Definition}
\newtheorem {example}{Example}
\newtheorem {lemma}{Lemma}
\newtheorem {proposition}{Proposition}
\newenvironment {proof}[1][Proof]{\noindent \textbf {#1.} }{\ \rule {0.5em}{0.5em}}
\begin{document}
\title{Quality Selection in Two-Sided Markets: \\ A Constrained Price Discrimination Approach\thanks{We would like to thank Dirk Bergemann, Kostas Bimpikis, Ozan Candogan, Nicole Immorlica, Mike Ostrovsky,  Yiangos Papanastasiou, Steve Tadelis, three anonymous referees and the Associate Editor, as well as seminar participants at Simons Institute, Stanford, INFORMS RMP, YoungEC workshop, and INFORMS annual meeting for helpful comments and advice. B. Light thanks for funding from Stanford GSB since the first version of the paper was finished while being a PhD student there. G. Weintraub thanks Joseph and Laurie Lacob for the support during the 2018-2019 academic year as a Joseph and Laurie Lacob Faculty Scholar at Stanford Graduate School of Business. This work was supported by the National Science Foundation under grant 1931696. }}

\author{Bar Light\protect\footnote{  Microsoft Research, NY, USA. e-mail: \textsf{barlight@microsoft.com}} ~ Ramesh Johari\protect\footnote{  Department of Management Science and Engineering, Stanford University, Stanford, CA 94305, USA. e-mail: \textsf{rjohari@stanford.edu}} ~  Gabriel Y. Weintraub\protect\footnote{  Graduate School of Business, Stanford University, Stanford, CA 94305, USA. e-mail: \textsf{gweintra@stanford.edu}\ }~ ~}
\maketitle
\thispagestyle{empty}

\begin{abstract}
Online platforms collect rich information about participants and then share some of this information back with them to improve market outcomes. In this paper 
we study the following information disclosure problem in two-sided markets: 
If a platform wants to maximize revenue, which sellers should the platform allow to participate, and how much of its available information about participating sellers' quality should the platform share with buyers?
We study this information disclosure problem in the context of two distinct two-sided market models: one in which the platform chooses prices and the sellers choose quantities (similar to ride-sharing), and one in which the sellers choose prices (similar to e-commerce). Our main results provide conditions under which simple information structures commonly observed in practice, such as banning certain sellers from the platform while not distinguishing between  participating sellers, maximize the platform’s revenue.  The platform's information disclosure problem naturally transforms into a constrained price discrimination problem where the constraints are determined by the equilibrium outcomes of the specific two-sided market model being studied. We analyze this constrained price discrimination problem 
 to obtain our structural results. 
 
 %We show that the information shared by the platform induces a ``menu" of equilibrium prices and sellers'  expected  qualities. Optimization over feasible menus yields a price discrimination problem. The problem is constrained because the  feasible menus are only those that can arise in the equilibrium of the two sided-market under a certain  information disclosure policy. 

%We analyze this constraineFd price discrimination problem, and apply our approach to two distinct two-sided market models: one in which the platform chooses prices and sellers choose quantities (similar to ride-sharing), and one in which sellers choose prices (similar to e-commerce).  We provide conditions under which a simple information structure of banning a certain portion of  sellers from the platform, and not sharing any information about the remaining participating sellers  maximizes the platform’s revenue.

\end{abstract}

\newpage

\section{Introduction} \label{Sec: Intro}

Online platforms have an increasingly rich plethora of information available about market participants. These include rating systems, public and private written feedback, and purchase behavior, among others.  Using these sources, platforms have become increasingly sophisticated in classifying the quality of the sellers that participate in their platform  (for example, see \cite{tadelis2016reputation}, \cite{filippas2018reputation},  \cite{donaker2019designing}, and \cite{garg2018designing}). This information can be used  both to increase the platform's revenue, and to enhance the welfare of the platform's participants.  For example, cleaning services and ridesharing platforms remove low quality sellers from their platforms. Platforms can also boost the visibility of high quality sellers with certain badges, as is done by online marketplaces such as Amazon Marketplace and eBay. We refer broadly to such market design choices by platforms as \emph{quality selection}.  

In this paper, we study quality selection in two-sided markets. In particular, we investigate which sellers a two-sided market platform should allow to participate in the platform, as well as the optimal amount of information about the participating sellers' quality that the platform should share with buyers in order to maximize its own revenue.
Our results characterize conditions under which simple information structures, such as just banning a portion of low quality suppliers or giving badges to high quality suppliers, emerge as optimal designs.  

%Methodologically, our main innovation is to transform the platform's revenue maximization problem to a {\em constrained} price discrimination problem; this connection is likely of independent interest in the study of two-sided market design. 

We introduce two different two-sided market models with heterogeneous buyers and heterogeneous sellers. Sellers are heterogeneous in their quality levels and buyers are heterogeneous in how they trade-off quality and price. In the first model, the platform chooses prices and the sellers choose quantities (e.g., how many hours to work).  This setting is loosely motivated by labor platforms such as ride-sharing and cleaning services. In the second model, the sellers choose prices, and quantities are determined in equilibrium. This setting is motivated by online e-commerce 
 marketplaces such as Amazon Marketplace. In both models, quality selection by the platform involves deciding  on an {\em information structure}, that is, how much of the information it has about the sellers' quality to share with buyers. The platform's goal is to choose an information structure that maximizes the platform's revenue. The information structure can consist of banning a certain portion of the sellers, and also richer structures that share more granular information with buyers about the quality of participating sellers.

The mapping from the information that the platform shares about the sellers' quality to market outcomes is generally complicated. After the platform chooses an information structure, the buyers and the sellers take strategic actions. Market outcomes such as prices and offered qualities are determined by these strategic actions and the resulting equilibrium conditions, typically including market clearing: not only must the buyers' incentive compatibility and individual rationality constraints be satisfied (as in a standard price discrimination problem, e.g., \cite{mussa1978monopoly}), but the total supply must also be compatible with the total demand.  

First, we observe that the platform’s information disclosure problem transforms into a {\em constrained price discrimination problem.} Every information structure induces a certain subset of price-expected quality pairs which we call a {\em menu}, from which the buyers can choose.  Optimization over feasible menus yields a price discrimination problem. This transformation to a price discrimination problem can be seen as a  revelation principle style argument applied to our setting. Similar observations  were noticed in previous literature (e.g., \cite{bergemann2019information}, \cite{drakopoulos2018persuading}, and \cite{lingenbrink2019optimal}).  The transformation to a constrained price discrimination problem is beneficial in our framework as it allows us to capture different market arrangements and two-sided market models. 

%We show that finding the optimal menu in the constrained price discrimination problem is equivalent to finding the optimal information structure.  This equivalence proves to be beneficial for two reasons.  First, deriving structural results in the constrained price discrimination problem (see Section \ref{Sec: price-dis}) is simpler than solving for the optimal information structure in a two-sided market model directly. This is similar to the Bayesian persuasion literature where the sender's optimization problem  is usually reformulated in order to simplify the analysis (see \cite{kamenica2019bayesian} and Section \ref{Sec: Related lit}).  Second, the constrained price discrimination problem is general and can capture different market arrangements and different two-sided market models.

Note that platforms can use the information they collect about the sellers' quality to induce a menu in many different ways. For example, giving badges to high quality sellers can influence the prices such sellers charge, the quantities they sell, and their market entry decisions \citep{hui2018certification}. Similarly, banning some low quality sellers can also influence the prices, the quantities sold, and the participating sellers' quality.

Using our analysis of  the constrained price discrimination problem, we provide  a broad set of conditions under which a simple information structure in which the platform bans a certain portion of low quality sellers and does not distinguish between  participating sellers maximizes the platform's revenue. This resembles a common practice in ride-sharing and cleaning services platforms (in these cases the participating suppliers' review scores are typically so high that they do not reveal much information \cite{tadelis2016reputation}).  To obtain this result, we require two conditions.  First, we require a regularity condition on the induced set of feasible menus in the constrained price discrimination problem; as we suggest later, this regularity condition is natural and likely to be satisfied in a wide range of market models.  Given this regularity condition, our second requirement is an appropriate convexity condition on the demand; as we note, this condition reduces to the requirement that the demand elasticity is not too low. Furthermore, we show that these conditions are generally necessary for the optimality of  the simple information structure in which the platform bans a certain portion of low quality sellers and does not distinguish between  participating sellers.
We also provide results involving only local demand elasticity conditions that guide the market design decision of whether to share less or more information about sellers' quality. We provide a simple example in Section \ref{Sec: toy} that illustrates the key features of our analysis. 

We then apply the equivalence between the constrained price discrimination problem and the information disclosure problem in order to study the two different two-sided market models mentioned above. 
 In both models, the platform's decisions (the platform decides on an information structure and prices in the first model, and on an information structure in the second model) generate a game between buyers and sellers. Given the platform's decisions there are four equilibrium requirements. First, the sellers choose their actions (prices or quantities) to maximize their profits. Second, the buyers choose whether to buy the product and if so, what  (expected) quality to buy to maximize their utility. Third, given the information structure that the platform chooses, the buyers form beliefs about the sellers' qualities that are consistent with Bayesian updating and with the sellers' actions. Fourth, we require market clearing: the total supply equals the total demand. We make this assumption because we envision our setting representing long-run outcomes in which prices should naturally evolve to clear the market. However, this last requirement can be relaxed so our results also apply for other market arrangements where supply and demand can be imbalanced.  

We show that each equilibrium of the game induces a certain subset of price-quality pairs; each pair consists of a price, and the expected quality of sellers selling at that price.  The platform's goal is to choose a menu that maximizes the platform's revenue. Finding the set of equilibrium menus that the platform can choose from depends on the equilibrium outcomes of the game. Hence, this set is determined by the specific two-sided market model being studied and can be challenging to characterize. 
For our first model (in which the platform sets prices), we show that for every information structure there exists a strictly convex optimization problem whose unique solution yields the unique menu of induced price-quality pairs. For the second model, Bertrand competition between the sellers pins down the equilibrium prices, so we are able to explicitly provide the menu that each information structure induces.  In each setting, we then leverage the analysis of the constrained price discrimination problem to (1) characterize the platform’s optimal information disclosure, and in particular to find conditions under which the policy of banning low quality sellers, and not distinguishing between the remaining high quality sellers, is optimal; and (2) to provide local improvement results.
 
 The rest of the paper is organized as follows. Section \ref{Sec: Related lit} discusses related literature. In Section \ref{Sec: toy} we describe a simple example that captures the main features of our analysis. In Section \ref{Sec: price-dis} we study the general constrained price discrimination problem. 
 In Section \ref{Sec: INFORMATION} we present the platform's initial information and information structures. In Section \ref{Sec: Model 1 quant} we present our first model where the platform chooses prices and the sellers choose quantities. In Section \ref{Sec: Two-sided2 prices} we present our second model where the sellers choose prices and quantities are determined in equilibrium. 
 In Section \ref{Sec: summary} we provide concluding remarks. All proofs are provided in the Appendix that follows.

\subsection{Related Literature} \label{Sec: Related lit}

Our paper is related to several strands of literature. We discuss each of them separately below. 

\noindent \textbf{Information design}. There is a vast recent literature on how different information disclosure policies influence the decisions of strategic agents and equilibrium outcomes in different settings. Applications include Bayesian persuasion (\cite{aumann1966game} and \cite{kamenica2011bayesian}), dynamic contests \citep{bimpikis2019designing}, matching markets \citep{ostrovsky2010information}, queueing models \citep{lingenbrink2019optimal}, games with common interests \citep{lehrer2010signaling}, transportation \citep{meigs2020optimal}, inventory systems \citep{kostami2019price}, ad-auctions \citep{Ash2018}, exploration in recommendation systems (\cite{papanastasiou2017crowdsourcing} and \cite{immorlica2019bayesian}), social networks (\cite{candogan2017optimal} and \cite{candogan2019persuasion}), social services \citep{anunrojwong2020information}, the retail industry (\cite{lingenbrink2018signaling} and \cite{drakopoulos2018persuading}), warning policies \citep{alizamir2020warning}, and many more.  (See \cite{candogan2020information} for a recent review of information design in operations.)

In this paper we focus on the amount of information about the sellers’ quality that a two-sided market platform should share with buyers. 
Similar to the Bayesian persuasion literature, we reformulate the platform's optimization problem. In the Bayesian persuasion literature, it can be shown that the platform's (sender) payoffs are determined by the receivers' posterior beliefs. The standard approach is to optimize over these posterior beliefs instead of over information structures. This approach leads, at least in some cases, to sharp characterizations of the optimal information disclosure policy (see, e.g., \cite{aumann1966game} and \cite{kamenica2011bayesian}). In our setting,  the platform's payoffs are determined by the buyers' (i.e., the receivers) equilibrium posterior quality means and by the equilibrium prices. Our approach is to optimize jointly over posterior means and prices, and thus, we transform the information disclosure problem to a price discrimination problem. Similar transformations in different settings were observed in previous literature (for example,    \cite{drakopoulos2018persuading} and \cite{lingenbrink2018signaling} study the gains from personalized information provision and provide similar observations). In our setting, the transformation to  price discrimination is beneficial mainly because it allows one to capture many different two-sided market models and market arrangements. We discuss in Section \ref{Subsec:ImbalancesModel1} market arrangements where our transformation to a constrained price discrimination problem and the results we obtain using it  fail. 

Our information disclosure problem is, in principle, under the umbrella of Bayesian Persuasion problems with a continuum of receivers, however, existing results do not apply to our setting.  The conditions we provide that imply the optimality of a simple information structure where the platform bans certain sellers from the platform while not distinguishing between participating sellers inherently relate to the receivers types' distribution (their valuations in our setting), the receivers' utility functions, and to the structure of the total supply and the total demand in the specific two-sided market that is being studied. In particular, the optimal information structure depends on demand elasticities that relate to the buyers valuations' distribution and on the constraint set that depends on the two-sided market model being studied. %\delrj{In this sense, the results we obtain for  the information disclosure problem  are different from the previous literature as they depend on specific equilibrium constraints that arise in two-sided markets and specific assumptions on the receivers' types.}  \rj{I don't think we need the last sentence; the rest of the paragraph makes the point well, and the last sentence seems to recast this as us making assumptions that narrow our work relative to Bayesian persuasion, which isn't really the point.}

%Our information disclosure policy problem is different from the previous literature because the platform faces equilibrium constraints when informing buyers about the sellers' quality; these constraints emerge because actual two-sided market outcomes are determined endogenously by buyers' and sellers' behavior, subsequent to the information disclosure choices of the platform. There are at least three salient characteristics of our setting. First, the platform does not have full information about the sellers' quality. Second, buyers' beliefs about the sellers' quality can depend on the sellers' actions (in addition to the standard dependence of the buyers' beliefs on the platform's information disclosure policy). For example, if the sellers choose quantities (e.g., how many hours to work) these quantities influence the expected qualities.\footnote{Because the buyers' beliefs are consistent with the sellers' actions, our model also relates to the adverse selection literature (see  \cite{akerlof1978market}).} Third, the prices and the sellers' expected qualities must form an equilibrium in the two-sided market (i.e., the total  supply equals the total demand). Overall, these constraints significantly limit the platform's feasible information structures, and therefore, the typical techniques used in the Bayesian persuasion literature cannot be applied. 

\noindent \textbf{Nonlinear pricing}. Nonlinear pricing schemes are widely studied in the economics and management science literature (see \cite{wilson1993nonlinear} for a textbook treatment). The price discrimination problem that we consider in this paper is closest to the classical second-degree price discrimination problem , cf.~\citet{mussa1978monopoly} and \citet{maskin1984monopoly}.  

The problem that the platform solves in our setting differs from the previous literature on price discrimination in two major aspects. First, the costs for the platform from producing higher quality products are zero. This is because in the two-sided market models that we study, the costs of producing a higher quality product are incurred by the sellers and not by the platform. Hence, the platform's revenue maximization problem transforms into a constrained price discrimination problem with no costs.\footnote{In that sense our problem resembles the classic mechanism design problem studied in \cite{myerson1981optimal} but with constraints on the possible allocations. } Second, the platform cannot simply choose any subset of price-quality pairs (menus) that satisfies the incentive compatibility and individual rationality constraints. The set of  menus from which the platform can choose is determined by the additional equilibrium requirements described in the introduction.

These differences significantly change the analysis and the platform's optimal menu. First, a key part of our analysis is to incorporate equilibrium constraints into the price discrimination problem. 
In addition, under the regularity assumption that the virtual valuation function is increasing, \cite{mussa1978monopoly} show that the optimal menu assigns different qualities of the product to different types. In contrast, the results in our paper are drastically different: under certain regularity assumptions, the optimal menu assigns the same quality of the product to different types.\footnote{Another difference from most of the previous literature is that in our model each menu is finite (i.e., there is a finite number of price-quality pairs), and thus the standard techniques used to analyze the price discrimination problems in the previous literature cannot be used.  \cite{bergemann2011mechanism} study a price discrimination problem with a finite menu in order to study a setting with limited information. However, because the platform's costs are zero in our setting, we cannot use the Lloyd-Max optimality condition that \cite{bergemann2011mechanism} employs.}

\noindent \textbf{Two-sided market platforms}. Recent papers study how platforms can use information and other related market design levers to improve market outcomes. In the context of matching markets, \cite{arnosti2018managing} and \cite{kanoria2017facilitating} suggest different restrictions on the  agents' actions in order to mitigate inefficiencies that arise in those markets. \cite{vellodi2018ratings} studies the role of design of rating systems in shaping industry dynamics. In   \cite{romanyuk2019cream} the platform designs what buyer information the sellers should observe before the platform decides to form a match. 

The paper most closely related to ours is the contemporaneous work by \cite{Yiangos2019} that studies the interaction between information disclosure and the quantity and quality of the sellers participating in the platform. Studying a dynamic game theoretic model, \cite{Yiangos2019} focuses on how information design influences supply-side decisions, showing that  information design can be a substitute to charging lower fees when solving the ``cold start” problem. As in our paper, in the papers noted above the full disclosure policy is not necessarily optimal, and hiding information can increase the social welfare and/or the platform's revenue. 
\section{A Simple Motivating Model} \label{Sec: toy}
In this section we provide a simple model that illustrates many important features of our paper. While this model ignores important features of our more general model, it will be helpful to highlight important aspects of our analysis and main results. 

Consider a platform where heterogeneous sellers and heterogeneous buyers interact. In our simple model of this section, there are two types of sellers: high quality sellers $q_{H}$ and low quality sellers $q_{L}$ with $q_{H} > q_{L}>0$. The platform knows the sellers' quality and considers two policies. Policy $B$ is to ban the low quality sellers and keep only the high quality sellers on the platform. Policy $K$ is to keep both low quality and high quality sellers on the platform and share the information about the sellers' quality with the buyers. %[SHOULD WE SAY SOMETHING ABOUT KEEPING BOTH BUT NOT DIFFERENTIATING THEM?] 

The total supply of products by sellers whose quality level is $i=H,L$ is given by the function $S_{i}(p^{j}_{i})$.  When the platform chooses policy $j=B,K$, $p^{j}_{i}$ is the price of the product sold by sellers whose quality level is $i=H,L$. We assume that the total supply is increasing in the price. The total supply can also depend on the mass of sellers whose quality level is $i=H,L$ and on the sellers' costs.
In our two-sided market models the supply function will be micro-founded, but we abstract away from these details for now.

Buyers are heterogeneous in how much they value quality relative to price. A buyer with type $m$ that decides to purchase from a seller whose quality level is $i=H,L$ has a utility $mq_{i}-p^{j}_{i}$. We normalize the utility associated to not buying to zero. The distribution of the buyers' types is described by a probability distribution function $F$. We assume that $F$ admits a density function $f$.  The buyers choose to buy or not to buy the product from sellers whose quality level is $i=H,L$ in order to maximize their own utility. The buyers' decisions generate demand for quality $i=H,L$ sellers $D^{K}_{i}(p^{K}_{L},p^{K}_{H})$ when the platform chooses policy $K$, and demand for quality $H$ sellers $D^{B}_{H}(p^{B}_{H})$ when the platform chooses policy $B$ (when the platform chooses option $B$, there is no demand for low quality sellers as they are banned). 

The platform's goal is to choose a policy that maximizes the total transaction value given that prices form an equilibrium. Equilibrium requires that the market must clear: that is, supply must equal demand. Note that if the platform charges commissions from each side of the market, maximizing the total transaction value is equivalent to maximizing the platform's revenue. For this reason, we will refer to the platform's objective as ``revenue'' or ``total transaction value'' interchangeably. If the platform chooses policy $B$, then the total transaction value is $p^{B}_{H}D^{B}_{H}(p^{B}_{H})$ and the equilibrium requirement is  $S_{H}(p^{B}_{H})=D^{B}_{H}(p^{B}_{H})$. If the platform chooses policy $K$, then the total transaction value is 
$$p^{K}_{H}D^{K}_{H}(p^{K}_{L},p^{K}_{H})+p^{K}_{L}D^{K}_{L}(p^{K}_{L},p^{K}_{H})$$ and the equilibrium requirements are \begin{equation}
    \label{Eq: Sup-de toy}
 S_{H}(p^{K}_{H})=D^{K}_{H}(p^{K}_{L},p^{K}_{H}) \text{ and } S_{L}(p^{K}_{L})=D^{K}_{L}(p^{K}_{L},p^{K}_{H}). 
 \end{equation}
For simplicity, we assume that the prices that satisfy the equilibrium requirements are unique. That is, $(p^{K}_{L},p^{K}_{H})$ are the unique prices that solve the equations in (\ref{Eq: Sup-de toy}) and $p^{B}_{H}$ is the unique price that solves $D^{B}_{H}(p^{B}_{H})=S_{H}(p^{B}_{H})$. In this case, the platform's revenue maximization problem transforms into a \emph{constrained price discrimination problem}. Choosing policy $B$ is equivalent to showing the buyers the price-quality pair $(q_{H},p^{B}_{H})$, while choosing policy $K$ is equivalent to showing the buyers the price-quality pairs $(q_{H},p^{K}_{H})$ and $(q_{L},p^{K}_{L})$. Hence, each policy is equivalent to a subset of price-quality pairs that we call a {\em menu}, and the platform's goal is to choose the menu with the higher revenue. This transformation to a price discrimination problem is useful for capturing different two-sided markets with different assumptions on the behavior of sellers and buyers, and different market arrangements. We  note that the transformation is not a reduction in the sense that the number of menus that the platform chose from is not smaller than the number of policies.

In this example, we assume that the demand matches the supply perfectly. In general two-sided market models there can be supply and demand imbalances. In Section \ref{Subsec:ImbalancesModel1}, we study these imbalances and show that our main results hold for the case where the equilibrium conditions allow  supply to be  greater than or equal demand. On the other hand, when demand is greater than supply, the transformation to a price discrimination problem fails and our techniques do not apply.

%In other words, the platform's revenue maximization problem is to choose a subset of price-quality pairs $C \in \mathcal{C}$ to maximize the total transaction value \begin{equation*}\sum _{ (p_{i} ,q_{i})  \in C}p_{i}D_{i}(C)\end{equation*} where $D_{i}(C)$ is the mass of buyers that choose the price-quality pair $(p_{i},q_{i})$ under the menu $C$ and $\mathcal{C}$ is the set of possible menus.

In our simple model, the set of feasible menus (denoted by $\mathcal{C}$) contains only two menus. We introduce our general model in Section \ref{Sec: price-dis}, where we study a general price discrimination problem with a rich set of possible menus $\mathcal{C}$, defined by a general constraint set.  Furthermore, in the model we consider in this section, the sellers' qualities are fixed and the prices are constrained by the equilibrium requirements. In the general two-sided market models we consider (see Sections \ref{Sec: Model 1 quant} and \ref{Sec: Two-sided2 prices}), the expected qualities are also determined in equilibrium. Hence, the set of feasible menus $\mathcal{C}$ in the corresponding price discrimination problem is determined by the specific two-sided market model that we study. When the market model is complex, characterizing the set $\mathcal{C}$ can be challenging as it requires computation of the equilibria of the two-sided market model.   

While the price discrimination problem in this example is simple, we later show that we can solve a general constrained price discrimination problem with similar arguments (see Section \ref{Sec: price-dis}).  We analyze the price discrimination problem in two stages.  In the first stage, we compare the revenue from policy $K$ (showing the price-quality pairs $(q_{H},p^{K}_{H})$ and $(q_{L},p^{K}_{L})$) to the revenue from the {\em infeasible policy} $I$: showing the price-quality pair $(q_{H},p^{K}_{H})$. Policy $I$ might be infeasible because while the pair $(q_{H},p^{K}_{H})$ and $(q_{L},p^{K}_{L})$ clears the market, only showing $(q_{H},p^{K}_{H})$ will generally not do so: demand will be higher than supply.

Note that the equilibrium requirements imply that the price of the product sold by high quality sellers is higher than the price of the product sold by low quality sellers, i.e., $p^{K}_{H}>p^{K}_{L}$. Now, if the platform were to choose policy $I$ then fewer buyers would participate in the platform compared to policy $K$, but the participating buyers would pay the higher price $p^{K}_{H}$. Policy $I$ would be better than policy $K$ if and only if the revenue gains from the participating buyers that pay a higher price when choosing $I$ instead of $K$ outweigh the revenue losses from the mass of buyers that do not participate in the platform when choosing $I$ instead of $K$. This depends on the {\em elasticity of the density function}  $\partial \ln f(m)/\partial \ln m$. Intuitively, when the density function's elasticity is not too ``low'' the mass of buyers that the platform loses is not too ``high''. We show in Theorem \ref{Theorem: Main} a general version of the following: when the density function's elasticity is bounded below by $-2$, policy $I$ yields more revenue than policy $K$ (see a detailed analysis of the elasticity condition in Section \ref{Sec: price-dis}).

In the second stage of the analysis, we compare the revenue from policy $B$ to the revenue from (potentially infeasible) policy $I$. The equilibrium requirements imply that $p^{B}_{H} \geq p^{K}_{H}$. To see this, note that $D^{B}_{H}(p^{K}_{H}) \geq D^{K}_{H}(p^{K}_{L},p^{K}_{H})=S_{H}(p^{K}_{H})$, i.e., the demand for high quality sellers in policy $B$ is greater than the demand for high quality sellers in policy $K$ when the price is $p^{K}_{H}$. This follows because for some buyers, buying from the high quality sellers yields a positive utility that is smaller than the utility from buying from the low quality sellers. Hence, in policy $B$, these buyers buy from the high quality sellers, while in policy $K$ they buy from the low quality sellers. Thus, the demand for high quality sellers under the price $p^{K}_{H}$ exceeds the supply.\footnote{We can also see this in Figure \ref{Fig: 1} in Section \ref{Sec: price-dis} that shows the demand for a given specific prices and qualities. In the figures, in the left column, the platform chooses the policy $K$. The black color represents the buyers that choose to not participate in the platform, the green color represents the buyers that choose $L$, and the red color represents the buyers that choose $H$.  In the figures in the right column, the platform chooses policy $B$, the black color represents the buyers that choose to not participate in the platform, and the orange color represents the buyers that choose  $H$. Note that buyers whose valuations are between $2$ and $2.5$ choose $L$ when the platform chooses $K$ but choose $H$ when the platform chooses $B$. Hence, the demand for  high quality sellers is greater under $B$ than under $K$. } Because the supply is increasing and the demand is decreasing in the price, we must have $p^{B}_{H} \geq p^{K}_{H}$ so that the market clears. 

Before proceeding with the second stage of the analysis, we note that for some models it is the case that $p^{B}_{H} = p^{K}_{H}$, like in the Bertrand competition model that we study in Section \ref{Sec: Two-sided2 prices}. In this model, because  
supply is perfectly elastic prices drop down all the way to marginal cost independently of whether low quality sellers participate in the platform. In this case, this second stage of the analysis is not necessary.

Now, if the platform shows the buyers the menu $(q_{H},p)$ only the buyers whose valuations satisfy $mq_{H} - p \geq 0$ buy the product from the high quality sellers. Thus, $pD^{B}_{H}(p)= p(1 - F\left (p/q_{H}) \right )$. When the density function's elasticity is bounded below by $-2$, the revenue function $R_{H}(p) := p \left (1 - F\left (p/q_{H}) \right ) \right)$ is concave in the price $p$. Thus, as  shown in Figure \ref{Fig 2: Revenue} below, policy $B$ yields more revenue than policy $I$ if the equilibrium price $p^{B}_{H}$ is lower than the {\em monopoly price} $p^{M}_H$, i.e., the unconstrained price that maximizes the platform's revenue $p^{M}_{H}$ ignoring equilibrium conditions:
\begin{equation*} 
    p^{M}_{H}=\operatorname{argmax}_{p \geq 0}p \left (1 - F\left (\frac{p}{q_{H}} \right ) \right ).
\end{equation*}

%INSERT FIG 2% 
\begin{figure}[ht]
    \centering
   
\definecolor{rvwvcq}{rgb}{0.08235294117647059,0.396078431372549,0.7529411764705882}
\begin{tikzpicture}[line cap=round,line join=round,>=triangle 45,x=1cm,y=1cm]
\begin{axis}[
    axis lines = left,
    xmin=-0,
xmax=1,
ymin=0,
ymax=0.35,
    xlabel = $p$,
    ylabel = {$R_{H}(p)$},
]

\addplot [line width=2pt] [
    domain=0:1, 
    samples=100, 
    color=black,
]
{x-x^(2))};
\begin{scriptsize}
\draw [fill=rvwvcq] (0.5,0.25) circle (2.5pt);
\draw[color=rvwvcq] (0.5,0.3) node {$p_{H}^{M}$};
\draw [fill=rvwvcq] (0.3,0.21) circle (2.5pt);
\draw[color=rvwvcq] (0.3,0.26) node {$p_{H}^{B}$};
\draw [fill=rvwvcq] (0.1,0.09) circle (2.5pt);
\draw[color=rvwvcq] (0.07,0.14) node {$p_{H}^{K}$};
\end{scriptsize}

\end{axis}
\end{tikzpicture}

\caption{The platform's revenue as a function of the price.}
\label{Fig 2: Revenue}
\end{figure}

 Intuitively, the equilibrium price $p_B^H$ is lower than the price that maximizes the platform's revenue $p_H^M$ if the total supply of high quality sellers is large enough. In particular, if the total supply of high quality sellers exceeds the total demand under the price $p^{M}_{H}$, then the equilibrium price $p^{B}_{H}$ must be lower than $p^{M}_{H}$ to ensure the market clears. In many two-sided markets, competition between platforms and between sellers, platform subsidies on the supply side, penetration pricing strategies, and other factors decrease equilibrium prices considerably. Hence, in our context it is natural to assume that the monopoly price is higher than or equal to the equilibrium price, i.e., $p^{H}_{B} \leq p^{M}_{H}$. In addition, if the equilibrium price was higher than the price that maximizes the platform's revenue the platform could  introduce balanced transfers for each side of the market, i.e., paying suppliers and charging buyers in order to decrease the equilibrium price.

 In the general two-sided market models that we study in Sections \ref{Sec: Model 1 quant} and \ref{Sec: Two-sided2 prices}, the qualities are also determined in equilibrium and the set of possible menus that the platform can choose from can be very large.
We will call this set \textit{regular} if it satisfies a general version of the conditions   $p^{M}_{H} \geq p^{B}_{H} \geq p^{K}_{H}$ discussed above. That is, the set is regular if removing low quality sellers increases the equilibrium price for high quality sellers; and if, in addition,  the monopoly price is higher than this equilibrium price.  These conditions give rise to natural constraints on the equilibria that can arise in the two-sided market models that we study (see the discussion after Definition \ref{Def: marketN} in Section \ref{Sec: price-dis}).

We conclude that when the elasticity of the density function is not too low, and the monopoly price is higher than the equilibrium price, then policy $B$ yields more revenue than policy $K$. That is, banning low quality sellers and keeping only the high quality sellers yields more revenue than keeping both low quality and high quality sellers on the platform and distinguishing them for buyers. In the next sections we study this and other structural results in the context of general two-sided market models and information structures.

\section{A Constrained Price Discrimination Problem}\label{Sec: price-dis}

In the simple model of the previous section, we observed that the platform's problem of choosing how much information to share with the buyers about the sellers' quality transforms into a price discrimination problem with constraints on the menu that can be chosen by the platform. In this section,  we study a general constrained price discrimination problem; the simple model in the previous section is a special case.  In the price discrimination problem we consider, the platform chooses a subset of price-quality pairs, i.e., a {\em menu}, from a feasible space of possible menus (referred to as the {\em constraint set}).  The constraint set restricts the possible choices of menus available to the platform.

In the two-sided market models that we study in Sections \ref{Sec: Model 1 quant} and \ref{Sec: Two-sided2 prices}, the constraint set is determined by the endogenously-determined equilibrium in these markets: i.e., the price-quality pairs in the menu must form an equilibrium, in the sense that the prices and qualities agree with the buyers' and sellers' optimal actions, and supply equals demand. Different two-sided market models generate different constraint sets. In this section, we consider a general constraint set. The platform's problem is to choose a subset of price-quality pairs (the menu) that belongs to the constraint set in order to maximize the total transaction value, while knowing only the distribution of valuations of possible buyers. As previewed in the simple model of the previous section, in Sections \ref{Sec: Model 1 quant} and \ref{Sec: Two-sided2 prices} we will show that the platform's information disclosure problem in our  two-sided market models transforms into the constrained price discrimination problem that we study in this section. 

The rest of this section is organized as follows. First, we provide preliminary concepts followed by our main characterization result regarding the optimality of simple menus. Then, we discuss the necessity of the conditions required for this characterization. We finish by showing local improvement results and structural results for more general distribution functions than the ones assumed in the main result.

\subsection{Preliminaries}

In this subsection we collect together basic concepts needed for our subsequent development.

\noindent {\bf Menus}.  A {\em menu} $C$ is a finite set of price-quality pairs.

\noindent {\bf Constraint set}.  We denote by $\mathcal{C}$ the nonempty set of all possible menus from which the platform can choose. $\mathcal{C}$ is called a \emph{constraint set}. We provide in Example \ref{Example: 1-rich} examples of constraint sets.

\noindent {\bf Buyers}.  We assume a continuum of buyers.  Given a menu, the buyers choose whether to buy a unit of the product and if so, at which price-quality pair to buy it.  Each buyer has a {\em type} that determines how much they value quality relative to price. The utility of a type $m$ buyer over price $(p)$-quality $(q)$ combinations is  $mq-p$. The type  distribution is given by a continuous cumulative distribution function $F$ with a density function $f$. We assume that $F$ is supported on an interval $[a ,b] \subseteq \mathbb{R}_{ +} : =[0 ,\infty )$.\footnote{All the results in the paper can be extended to the case that the utility of a type $m$ buyer over price-quality combinations is  $z(m)q-p$ for some strictly increasing function $z$. In this case we can define the distribution function $\bar{F}:=F(z^{-1})$ and our results hold when the assumptions on $F$ are replaced by the same assumptions on $\bar{F}$. }  Our results also hold in the case that the support of $F$ is unbounded.

\noindent {\bf Sellers.} 
In this section, we will not directly model the sellers. Depending on the particular model in focus, sellers might either set prices or determine quantities. In subsequent sections, namely Sections \ref{Sec: Model 1 quant} and \ref{Sec: Two-sided2 prices}, we will delve into the specifics of sellers' decisions. While we might not explicitly discuss sellers' actions in this section, their behavior and the ensuing market structure determine the constraint set mentioned earlier, which is represented as a general set in this section. Consequently, in the two-sided market models we study, the prices that can arise are constrained by both the sellers and buyers' behavior and the specific market arrangement. We note that the two-sided market models that we study allow for imbalances between supply and demand. Our focus in this paper is on a demand-constrained market where supply meets or surpasses demand. This demand-constrained market scenario is common in real-world marketplaces, a point we elaborate on in Section 5.6. This premise also indicates that the platform's total transaction value hinges on buyer demand. We now describe the platform's optimization problem.

\noindent {\bf Platform optimization problem and optimal menus.}  Given the constraint set $\mathcal{C}$, the platform chooses a menu $C =\left \{\left (p_{1} ,q_{1}\right ) ,\ldots  ,\left (p_{k} ,q_{k}\right )\right \} \in \mathcal{C}$ to maximize the total transaction value, subject to the standard incentive compatibility and individual rationality constraints.\footnote{Because we study two-sided markets where the 
 price-quality menus are finite, we focus on finite menus in our analysis. Our results can be readily extended for infinite menus by standard arguments.}  

In other words, the platform chooses a menu $C \in \mathcal{C}$ to maximize:
  \begin{equation*}\pi \left (C\right ) : =\sum _{ (p_{i} ,q_{i})  \in C}p_{i}D_{i}(C),
\end{equation*}
where $D_{i}(C)$ is the total mass of buyers that choose the price-quality pair $(p_{i} ,q_{i})$ when the platform chooses the menu $C \in \mathcal{C}$. That is,\footnote{If there is a subset of price-quality pairs $C^{\prime}$ such that for some type $m$ buyer we have $mq_{i}-p_{i} \geq 0$ and $mq_{i} -p_{i} =\max _{i \in C}mq_{i} -p_{i}$ for all $(p_{i},q_{i}) \in C^{\prime}$ then we assume that the buyer chooses the price-quality pair with the highest index. This assumption does not change our analysis because we assume that $F$ does not have atoms.}  
\begin{equation*}D_{i}(C) : =\int _{a}^{b}1_{\{m :mq_{i} -p_{i} \geq 0\}}(m)1_{\{m :mq_{i} -p_{i} =\max _{(p_{i},q_{i}) \in C}mq_{i} -p_{i}\}}(m)F(dm),
\end{equation*}
where $1_{A}$ is the indicator function of the set $A$. A menu $C^{\prime} \in \mathcal{C}$ is called {\em optimal} if it maximizes the total transaction value, i.e., $C^{\prime} = \operatorname{argmax}_{C \in \mathcal{C}} \pi (C)$.

\noindent {\bf $k$-separating menus}.  Let $\mathcal{C}_{p}=\{C \in \mathcal{C}: D_{i}(C)>0 \text{ for all } (p_{i} ,q_{i}) \in C \}$ be the set that contains all the menus $C$ such that the mass of buyers that choose the price-quality pair $(p_{i} ,q_{i})$ is positive for every  $(p_{i} ,q_{i}) \in C$. A menu $C =\left \{\left (p_{1} ,q_{1} \right ) ,\ldots  ,\left (p_{k} ,q_{k}\right )\right \} \in \mathcal{C}_{p}$ is said to be {\em $k$-separating} for a positive integer $k$ if $C$ contains exactly $k$ different price-quality pairs. That is, a $k$-separating menu $C$ satisfies $\vert C\vert  =k$ where $\vert C\vert $ is the number of price-quality pairs on the menu $C$. We let $\mathcal{C}_{1} \subseteq \mathcal{C}_{p}$ be the set of all $1$-separating menus. For the rest of the section, we assume without loss of generality that prices are labeled so that $p_{1} \leq p_{2} \leq \ldots  \leq p_{k}$ for every $k$-separating menu $C =\left \{\left (p_{1} ,q_{1}\right ) ,\ldots  ,\left (p_{k} ,q_{k}\right )\right \}$. %For a menu $C$ we sometimes denote a price-quality pair $(p_{i} ,q_{i}) \in C$  by $(p_{i}(C) ,q_{i}(C))$. 

\subsection{Optimality of $1$-Separating Menus}

The main result of this section (Theorem \ref{Theorem: Main}) shows that under certain conditions, a $1$-separating menu is optimal. Translating this to the two-sided market model, it means that the platform bans a portion of the sellers and provides no further  information to buyers about the quality of the remaining sellers that participate in the platform.  

Our theorem shows that this result holds under two key conditions on the model, each of which is related to the conditions discussed in Section \ref{Sec: toy}.   The first is a regularity condition that will be satisfied by a wide range of two-sided market models, including those we consider in this paper. The second is the convexity of $F(m)m$ which relates to demand elasticities. We now discuss each condition in turn. 

\noindent {\bf Regularity}.  The first condition that we introduce is regularity. This condition imposes natural restrictions on the possible constraint set that can arise in the two-sided market models. As we discussed in Section \ref{Sec: toy}, the constraint set in the price discrimination problem describes the set of menus that is generated from buyers' and sellers' behavior and the specific market arrangement. Hence, the condition on the constraint set that we describe next relates to the properties of the two-sided market models under consideration. 

\begin{definition}\label{Def: marketN}
We say that the constraint set $\mathcal{C}$ is regular if the following two conditions hold: 

(i) If $C =\left \{\left (p_{1} ,q_{1}\right ) ,\ldots  ,\left (p_{k} ,q_{k}\right )\right \} \in \mathcal{C}_{p}$ then there exists a feasible $1$-separating menu $\{(p,q)\} \in \mathcal{C}_{1}$ such that $p \geq p_{k}$ and $q \geq q_{k}$.\footnote{Recall that we assume without loss of generality that $p_{1} \leq p_{2} \leq \ldots  \leq p_{k}$ for every  menu $C =\left \{\left (p_{1} ,q_{1}\right ) ,\ldots  ,\left (p_{k} ,q_{k}\right )\right \}$.} 

(ii) Let $\{(p,q)\} \in \mathcal{C}_{1}$ be such that $p \geq p'$ for all $\{(p',q')\} \in \mathcal{C}_{1}$. Then $p \leq p^{M}(q)$.\footnote{Recall that given some quality $q$, the monopoly price ignoring equilibrium conditions, $p^{M}(q)$ is given by 
\begin{equation*} 
    p^{M}(q)= \inf \operatorname{argmax}_{p \geq 0}p \left (1 - F\left (\frac{p}{q} \right ) \right ).
\end{equation*}
}
\end{definition}
Condition (i) in Definition \ref{Def: marketN} can be interpreted in the two-sided market models as follows: For a feasible menu (i.e., a menu that can arise in equilibrium), suppose that the platform bans  all  sellers other than the highest quality sellers in that menu, then there is a feasible menu  with just one price-quantity pair, reflecting the enhanced price and quality of these high quality sellers. 
 This is a natural condition in markets as decreasing the supply of low quality sellers increases the demand for high quality sellers (see Section \ref{Sec: toy}).  Condition (ii) in Definition \ref{Def: marketN} means that when the platform uses a $1$-separating menu, the highest price that can arise in the two-sided market model is lower than the monopoly price. As we discussed in Section \ref{Sec: toy}, this is also a natural condition because market factors such as competition and  subsidizing supply suggest that the equilibrium price should be lower than the monopoly price.

In the two-sided market models that we study, a sufficient condition that implies condition (ii) in Definition \ref{Def: marketN} is that the supply of high quality sellers is not very low. In this case, the equilibrium price is not very high and condition (ii) holds (see Section \ref{Sec: Model 1 quant}). The two conditions in Definition \ref{Def: marketN} generalize the regularity condition discussed in the simple model we presented in  Section \ref{Sec: toy}. 
We believe that regularity is a mild condition over two-sided market models; hence, we think of the demand elasticity condition that we introduce next as the primary determinant of the optimality of $1$-separating menus.

\noindent {\bf Convexity of $F(m)m$}. The second condition that we require is the convexity of $F(m) m $.  If we suppose that $F$ has a strictly positive and continuously differentiable density $f$, then an elementary calculation shows that $F(m)m$ is convex if and only if:
\begin{equation*}
\frac{ \partial f(m)}{ \partial m}\frac{m}{f(m)} =\frac{f^{ \prime }(m)m}{f(m)} \geq  -2.
\end{equation*}

In words, the {\em elasticity} of the density function must be bounded below by $-2$. 
A number of distributions satisfy this condition, e.g., power law distributions ($F(m) = d + cm^k$ for some constants $k > 0$, $c$, $d$); beta distributions ($f\left (m\right ) =\frac{\Gamma \left (\alpha  +\beta \right )}{\Gamma \left (\alpha \right )\Gamma \left (\beta \right )}m^{\alpha  -1}\left (1 -m\right )^{\beta  -1}$ with $\beta \leq 1$, where $\Gamma $ is the gamma function); and Pareto distributions ($F\left (m\right ) =1 -\genfrac{(}{)}{}{}{c}{m}^{\alpha }$ on $[c,\infty)$, where $c \geq 1$ is a constant and $\alpha \leq 1$).  It is also worth noting that the condition that $F(m)m$ is convex is distinct from monotonicity of the so-called {\em virtual value function} $r(m):= m-(1-F(m))/f(m)$, a condition that plays a key role in the price discrimination literature.\footnote{See \cite{mussa1978monopoly} and \cite{maskin1984monopoly}, and more generally the mechanism design literature (e.g., \cite{myerson1981optimal}), for use of the monotonicity of the virtual valuation function.  Convexity of $F(m)m$ can be shown to be equivalent to monotonicity of the {\em product} of the virtual valuation with the density, $r(m)f(m)$.} 
%To see this, reason as follows: 
%\begin{align*}
%    F(m)m \text{ is convex } & \Leftrightarrow m(1-F(m)) \text{ is concave } \\
%    & \Leftrightarrow mf(m) - (1-F(m)) \text{ is increasing } \\
%    & \Leftrightarrow r(m)f(m) \text{ is increasing. } \end{align*}

To see the dependence on the density function's elasticity, consider a simple price discrimination setting inspired by the example of Section \ref{Sec: toy}.  In particular, suppose that the platform has only two price-quality pairs available: $(p_L, q_L) = (1,1.5)$ and $(p_H, q_H) = (2,4)$, and the platform can either choose the $1$-separating menu $\{ (p_H, q_H) \}$ consisting of high quality only, or the full ($2$-separating) menu $\{ (p_L, q_L), (p_H, q_H) \}$ consisting of both qualities.  In Figure \ref{Fig: 1} we demonstrate the consequences of different elasticities of $f$.  In the figures in the left column, the platform chooses the full menu, the black color represents the buyers that choose not to  participate in the platform, the green color represents the buyers that choose $L$, and the red color represents the buyers that choose $H$.  In the figures in the right column, the platform chooses the $1$-separating high quality menu, the black color represents the buyers that choose to not participate in the platform, and the orange color represents the buyers that choose to buy the product. 

The $1$-separating high quality menu yields more revenue than the full menu if and only if the area between the points $B$ and $C$ times $p_{H}$ is greater than or equal to the area between the points $A$ and $C$ times $p_L$, that is, the revenue losses from losing the participation in the platform of buyers whose valuations are between $1.5$ and $2$ are smaller than the revenue gains from charging the participating buyers whose valuations are between $2$ and $2.5$ the higher price.  Intuitively, when the elasticity is lower, this difference is higher. In other words, when the elasticity is lower, the full menu is more attractive because the platform loses too much revenue when choosing the $1$-separating high quality menu instead.

\begin{figure}[htb]
\centering

\definecolor{rvwvcq}{rgb}{0.08235294117647059,0.396078431372549,0.7529411764705882}
\begin{subfigure}{0.25\textwidth}
\begin{tikzpicture}[scale=0.75]
\begin{axis}[
    axis lines = left,
       xmin=1,
xmax=3,
ymin=0,
ymax=1,
    xlabel = $m$,
    ylabel = {$f(m)$},
]

\addplot [line width=2pt] [
    domain=1:1.5, 
    samples=100, 
    color=black,
]
{0.5*x^(-1.5)};

\addplot [line width=2pt] [
    domain=1.5:2.5, 
    samples=100, 
    color=green,
]
{0.5*x^(-1.5)};
 
 \addplot [line width=2pt] [
    domain=2.5:3.5, 
    samples=100, 
    color=red,
]
{0.5*x^(-1.5)};

\begin{scriptsize}
\draw [fill=rvwvcq] (1.5,0.272) circle (2.5pt);
\draw[color=rvwvcq] (1.5,0.4) node {$A$};
\draw [fill=rvwvcq] (2,0.1767) circle (2.5pt);
\draw[color=rvwvcq] (2,0.1767+0.11) node {$B$};
\draw [fill=rvwvcq] (2.5,0.126) circle (2.5pt);
\draw[color=rvwvcq] (2.5,0.25) node {$C$};
\end{scriptsize}

 \end{axis}
\end{tikzpicture}
\caption{\small{$2$-separating menu. Constant elasticity of $-1.5$.}}
\end{subfigure}\hfil
\begin{subfigure}{0.25\textwidth}
\begin{tikzpicture}[scale=0.75]
\begin{axis}[
    axis lines = left,
    xmin=1,
xmax=3,
ymin=0,
ymax=1,
    xlabel = $m$,
    ylabel = {$f(m)$},
] 

\addplot [line width=2pt] [
    domain=1:2, 
    samples=100, 
    color=black,
]
{0.5*x^(-1.5)};

\addplot [line width=2pt] [
    domain=2:3.5, 
    samples=100, 
    color=orange,
]
{0.5*x^(-1.5)};
 
\begin{scriptsize}
\draw [fill=rvwvcq] (2,0.176) circle (2.5pt);
\draw[color=rvwvcq] (2,0.3) node {$B$};
\draw [fill=rvwvcq] (2.5,0.126) circle (2.5pt);
\draw[color=rvwvcq] (2.5,0.25) node {$C$};
\end{scriptsize}

\end{axis}
\end{tikzpicture}
\caption{\small{$1$-separating menu. Constant elasticity of $-1.5$.}}
\end{subfigure}

\medskip
\begin{subfigure}{0.25\textwidth}
\begin{tikzpicture}[scale=0.75]
\begin{axis}[
    axis lines = left,
     xmin=1,
xmax=3,
ymin=0,
ymax=1,
    xlabel = $m$,
    ylabel = {$f(m)$},
]

\addplot [line width=2pt] [
    domain=1.3:1.5, 
    samples=100, 
    color=black,
]
{3*x^(-4)};

\addplot [line width=2pt] [
    domain=1.5:2.5, 
    samples=100, 
    color=green,
]
{3*x^(-4)};
 
 \addplot [line width=2pt] [
    domain=2.5:3.5, 
    samples=100, 
    color=red,
]
{3*x^(-4)};

\begin{scriptsize}
\draw [fill=rvwvcq] (1.5,0.592) circle (2.5pt);
\draw[color=rvwvcq] (1.57,0.72) node {$A$};
\draw [fill=rvwvcq] (2,0.185) circle (2.5pt);
\draw[color=rvwvcq] (2,0.185+0.11) node {$B$};
\draw [fill=rvwvcq] (2.5,0.0768) circle (2.5pt);
\draw[color=rvwvcq] (2.5,0.2) node {$C$};
\end{scriptsize}

 \end{axis}
\end{tikzpicture}
\caption{\small{$2$-separating menu. Constant elasticity of $-4$.}}
\end{subfigure}\hfil
\begin{subfigure}{0.25\textwidth}
\begin{tikzpicture}[scale=0.75]
\begin{axis}[
    axis lines = left,
     xmin=1,
xmax=3,
ymin=0,
ymax=1,
    xlabel = $m$,
    ylabel = {$f(m)$},
] 

\addplot [line width=2pt] [
    domain=1.3:2, 
    samples=100, 
    color=black,
]
{3*x^(-4)};

\addplot [line width=2pt] [
    domain=2:3.5, 
    samples=100, 
    color=orange,
]
{3*x^(-4)};

\begin{scriptsize}
\draw [fill=rvwvcq] (2,0.1875) circle (2.5pt);
\draw[color=rvwvcq] (2,0.3) node {$B$};
\draw [fill=rvwvcq] (2.5,0.0768) circle (2.5pt);
\draw[color=rvwvcq] (2.5,0.2) node {$C$};
\end{scriptsize}

\end{axis}
\end{tikzpicture}
\caption{\small{$1$-separating menu. Constant elasticity of $-4$.}}
\end{subfigure}\hfil

\caption{\small The black color represents the buyers that choose not to  participate in the platform, the green color represents the buyers that choose $L$, and the red color represents the buyers that choose $H$ in the $2$-separating menu  and the orange color represents the buyers that choose to buy the product in the $1$-separating menu.} 
\label{Fig: 1}
\end{figure}

%\begin{definition} \label{Def: 1-rich}
%We say that a constraint set $\mathcal{C}$ is $1$-rich if for every menu,\footnote{Recall that we assume without loss of generality that $p_{1} \leq p_{2} \leq \ldots  \leq p_{k}$ for every  menu $C =\left \{\left (p_{1} ,q_{1}\right ) ,\ldots  ,\left (p_{k} ,q_{k}\right )\right \}$.}
 %$C =\left \{\left (p_{1} ,q_{1}\right ) ,\ldots  ,\left (p_{k} ,q_{k}\right )\right \} \in \mathcal{C}_{p}$ there exists a $1$-separating menu $C^{ \prime} \in \mathcal{C}$ such that $\pi (\{p_{k},q_{k}\}) \leq \pi (C^{ \prime })$. 
%\end{definition}

%We also note that although the constraint sets in Example \ref{Example: 1-rich} are $1$-rich independently of the distribution function $F$, in general the $1$-rich property may depend on the shape of $F$.  For example, this is the case for the two-sided market model studied in Section \ref{Sec: Model 1 quant} and in the simple model of Section \ref{Sec: toy}. In these models the condition that $p_H^M \geq p_H^B$ can be shown to imply $1$-richness; this is a condition that depends on the distribution function $F$. 

\noindent {\bf Main result}.  We can now state our main result using the previous two conditions.  The following theorem  states that our constrained price discrimination problem admits an optimal solution that is $1$-separating. All the proofs in the paper are deferred to the Appendix. 

\begin{theorem}
\label{Theorem: Main}
Suppose that $F(m)m$ is a strictly\footnote{The assumption that $F(m)m$ is {\em strictly} convex implies that the monopoly price is unique. This assumption is for mathematical convenience and does not influence the result.}  convex function on $[a ,b]$ and that $\mathcal{C}$ is regular. Assume that the set of all $1$-separating menus $\mathcal{C}_{1}$ is  a compact subset of $\mathbb{R}^2$.\footnote{In the two-sided market models that we study the constraint set is finite, and hence, $\mathcal{C}_{1}$ is compact.} Then there is an optimal $1$-separating menu.  In addition, the optimal $1$-separating menu $\{(p,q)\}$ is {\em maximal} in $\mathcal{C}_{1}$: for every $ \{(p',q') \} \in \mathcal{C}_{1}$ such that $(p',q') \neq (p,q)$ we have $p > p'$ or $q > q'$. 

\end{theorem}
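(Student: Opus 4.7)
The plan is to mirror the two-stage analysis from Section~\ref{Sec: toy}, now for arbitrary $k$-separating menus in a regular constraint set $\mathcal{C}$. For Stage~1, fix $C=\{(p_1,q_1),\ldots,(p_k,q_k)\}\in\mathcal{C}_p$ with $p_1\le\cdots\le p_k$. Standard incentive-compatibility reasoning yields the thresholds $t_1=p_1/q_1$ and $t_j=(p_j-p_{j-1})/(q_j-q_{j-1})$ for $j\ge 2$, so that a type-$m$ buyer in $[t_j,t_{j+1})$ chooses pair $j$ (with $t_{k+1}=b$); in particular, $p_j=\sum_{i\le j}t_i(q_i-q_{i-1})$ (taking $q_0=0$). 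Substituting and exchanging the order of summation transforms the revenue into
\[
\pi(C)=\sum_{j=1}^{k}(q_j-q_{j-1})\,h(t_j),\qquad h(t):=t(1-F(t)).
\]
Convexity of $F(m)m$ is equivalent to concavity of $h$, and the weights $w_j:=(q_j-q_{j-1})/q_k$ are nonnegative, sum to one, and have barycenter $\sum_j w_j t_j=p_k/q_k$. Jensen's inequality then delivers the Stage~1 bound $\pi(C)\le q_k\,h(p_k/q_k)=p_k(1-F(p_k/q_k))$, the revenue of the possibly infeasible single pair $\{(p_k,q_k)\}$.

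Stage~2 replaces this infeasible bound with a feasible $1$-separating menu. Letting $m^M$ denote the unique maximizer of $h$ (so that $p^M(q)=m^M q$ for every $q$, since $m^M$ does not depend on $q$), regularity~(i) supplies $(\tilde p,\tilde q)\in\mathcal{C}_1$ with $\tilde p\ge p_k$ and $\tilde q\ge q_k$; the set $A:=\{(p,q)\in\mathcal{C}_1:p\ge\tilde p,\,q\ge\tilde q\}$ is nonempty and compact, so it admits a componentwise-maximal element $(p^+,q^+)$ (say, a maximizer of $p+q$ on $A$). The central claim is $p^+/q^+\le m^M$: otherwise, any max-price menu $(p^*,q^*)\in\mathcal{C}_1$ would satisfy $p^*/q^*\le m^M<p^+/q^+$ by regularity~(ii), together with $p^*\ge p^+$, and a brief case split (according as $p^*>p^+$, or $p^*=p^+$ with disagreeing ratios) forces $q^*>q^+$, so that $(p^*,q^*)\in A$ strictly dominates $(p^+,q^+)$ componentwise, contradicting maximality. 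With $p^+\ge p_k$, $q^+\ge q_k$ and $p^+\le m^M q^+$ in hand, the monotonicity properties of $R(p,q):=p(1-F(p/q))$ (increasing in $q$ for fixed $p$, and increasing in $p$ on $[0,m^M q]$ for fixed $q$) yield
\[
R(p^+,q^+)\ge R(p_k,q^+)\ge R(p_k,q_k),
\]
so $\pi(\{(p^+,q^+)\})$ dominates the Stage~1 bound and hence $\pi(C)$.

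With the two stages in place, compactness of $\mathcal{C}_1$ and continuity of $\pi$ on it produce an optimal $1$-separating menu $\{(p^o,q^o)\}$, and the bound above extends its optimality to all of $\mathcal{C}$ (menus $C\in\mathcal{C}\setminus\mathcal{C}_p$ reduce to their positive-demand sub-menus, which carry the same revenue). For the maximality assertion, I would rerun Stage~2 with $(p^o,q^o)$ in place of $(p_k,q_k)$: the existence of any $(p',q')\in\mathcal{C}_1$ with $(p',q')\ne(p^o,q^o)$, $p'\ge p^o$ and $q'\ge q^o$ would produce a maximal $(p^+,q^+)\ne(p^o,q^o)$ in the corresponding $A$, still with $p^+/q^+\le m^M$; strict convexity of $F(m)m$ then upgrades the monotonicity of $R$ to the strict form needed to conclude $R(p^+,q^+)>R(p^o,q^o)$, contradicting optimality of $\{(p^o,q^o)\}$ on $\mathcal{C}_1$. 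The step I expect to be the main obstacle is the claim $p^+/q^+\le m^M$: regularity~(ii) is a global condition that pins down only the max-price menu in $\mathcal{C}_1$, and propagating it to the locally constructed $(p^+,q^+)$ requires the partial-order maximality argument inside $A$ rather than any direct local property of $(p^+,q^+)$.
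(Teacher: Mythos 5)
Your proof is correct and follows essentially the same two-stage route as the paper: your weighted Jensen bound on the concave $h(t)=t(1-F(t))$ is exactly the paper's perspective-function inequality (its Steps 2--3 and 5), and your Stage 2 --- extracting a componentwise-maximal element of $\mathcal{C}_1$ above $(p_k,q_k)$ and tying it to the max-price menu via regularity (ii) --- matches the paper's Steps 6--7. The only cosmetic difference is that you exploit the identity $p^M(q)=m^M q$ directly, where the paper instead derives monotonicity of $p^M(\cdot)$ from first-order conditions.
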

In the Appendix we also show that we can slightly weaken the regularity condition. 

We note that if for every menu $C =\left \{\left (p_{1} ,q_{1}\right ) ,\ldots  ,\left (p_{k} ,q_{k}\right )\right \} $ that belongs to $\mathcal{C}$, the $1$-separating menu $C^{ \prime } =\{p_{k} ,q_{k}\}$ belongs to $\mathcal{C}$ then the second condition in Definition \ref{Def: marketN} is not needed in order to prove the optimality of a $1$-separating menu. The proof of this follows immediately from the proof of Theorem \ref{Theorem: Main}. The intuition for this result follows from the argument in Section \ref{Sec: toy} that shows that the second stage of the analysis of the example provided there is not needed when such menu $C'$ belongs to $\mathcal{C}$. As we discussed in Section \ref{Sec: toy}, this is useful for the two-sided market model where sellers compete in a Bertrand competition (see Section \ref{Sec: Two-sided2 prices}). We use the next Corollary to prove the optimality of a $1$-separating menu in that model.

\begin{corollary} \label{Corr: Bert}
Suppose that $F(m)m$ is a convex function on $[a ,b]$ and that for every menu $C =\left \{\left (p_{1} ,q_{1}\right ) ,\ldots  ,\left (p_{k} ,q_{k}\right )\right \} \in  \mathcal{C}$ we have $C^{ \prime } =\{p_{k} ,q_{k}\} \in \mathcal{C}$. Assume that the set of all $1$-separating menus $\mathcal{C}_{1} \in \mathcal{C}$ is compact. Then there is an optimal $1$-separating menu. 
\end{corollary}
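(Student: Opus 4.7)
My plan is to reduce Corollary~\ref{Corr: Bert} to a single revenue inequality: for every $C \in \mathcal{C}_p$, the singleton menu $C' = \{(p_k,q_k)\}$ consisting of its highest-priced pair satisfies $\pi(C') \geq \pi(C)$. Once this inequality is in hand, compactness of $\mathcal{C}_1$ and continuity of $\pi$ on $\mathcal{C}_1$ give an optimal $1$-separating menu, and the hypothesis $\{(p_k,q_k)\} \in \mathcal{C}$ ensures this optimum dominates every $C \in \mathcal{C}$. Note that because $(p_k,q_k)$ has positive demand inside $C$, we have $p_k/q_k \leq m_k < b$, so $\{(p_k,q_k)\}$ indeed has positive demand as a singleton and belongs to $\mathcal{C}_1$; moreover, pairs with zero demand inside a menu $C \in \mathcal{C} \setminus \mathcal{C}_p$ contribute nothing to $\pi(C)$, so one may restrict attention to $\mathcal{C}_p$.

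Fix $C = \{(p_1,q_1),\ldots,(p_k,q_k)\} \in \mathcal{C}_p$ with $p_1 < \cdots < p_k$; positive demand of every pair forces the sorting $q_1 < \cdots < q_k$, and the indifference thresholds $m_1 = p_1/q_1$ and $m_j = (p_j - p_{j-1})/(q_j - q_{j-1})$ for $j \geq 2$ satisfy $m_1 < m_2 < \cdots < m_k$, with a buyer of type $m \in [m_j, m_{j+1})$ (setting $m_{k+1} = b$) choosing pair $j$. Using $q_1 m_1 = p_1$ and $(q_j - q_{j-1}) m_j = p_j - p_{j-1}$, summation by parts produces
$$\pi(C) \;=\; \sum_{j=1}^{k} p_j [F(m_{j+1}) - F(m_j)] \;=\; \sum_{j=1}^{k} \delta_j \, h(m_j),$$
where $\delta_1 := q_1$, $\delta_j := q_j - q_{j-1}$ for $j \geq 2$, and $h(m) := m(1 - F(m))$.

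Two elementary telescoping identities are the heart of the argument: $\sum_{j=1}^{k} \delta_j = q_k$ and $\sum_{j=1}^{k} \delta_j m_j = p_1 + \sum_{j=2}^{k} (p_j - p_{j-1}) = p_k$. Together they identify $m_* := p_k/q_k$ as exactly the convex combination $\sum_j (\delta_j/q_k)\, m_j$ of the cutoffs $m_j$. Meanwhile, the singleton revenue can be rewritten as $\pi(C') = p_k(1 - F(m_*)) = q_k h(m_*) = \sum_{j=1}^k \delta_j h(m_*)$, so
$$\pi(C') - \pi(C) \;=\; \sum_{j=1}^{k} \delta_j \bigl[h(m_*) - h(m_j)\bigr].$$

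The proof finishes by Jensen's inequality. Convexity of $F(m)m$ on $[a,b]$ is equivalent to concavity of $h(m) = m - mF(m)$, so $h(m_*) \geq \sum_j (\delta_j/q_k) h(m_j)$; multiplying by $q_k$ gives $\pi(C') \geq \pi(C)$ as required. The main obstacle is the algebraic bookkeeping in the previous paragraph: recognizing $p_k/q_k$ as precisely the right convex combination of the indifference thresholds $m_j$ is what allows plain concavity of $h$ to close the argument. Strict convexity of $F(m)m$ (needed in Theorem~\ref{Theorem: Main} to pin down a unique monopoly price) is not required here, because the hypothesis that $\{(p_k,q_k)\} \in \mathcal{C}$ bypasses the "Stage 2" monopoly-price comparison from Section~\ref{Sec: toy} entirely.
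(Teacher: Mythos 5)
Your proof is correct and follows essentially the same route as the paper: the paper proves this corollary by reusing Steps 1--5 of the proof of Theorem \ref{Theorem: Main} (the demand decomposition via indifference thresholds $m_j$ plus Jensen's inequality applied to the convexity of $F(m)m$, there packaged as convexity of the perspective function $xF(x/y)$) and, exactly as you observe, dropping Steps 6--7 since the hypothesis $\{(p_k,q_k)\}\in\mathcal{C}$ makes the regularity/monopoly-price comparison unnecessary. Your reformulation via the concave $h(m)=m(1-F(m))$ with weights $\delta_j/q_k$ is the same inequality as the paper's Step 3, just written as a one-variable weighted Jensen instead of a two-variable perspective-function Jensen.
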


Corollary \ref{Corr: Bert} can be applied for some important constraint sets as the following example shows.  

\begin{example} \label{Example: 1-rich}
(i) In this example, the platform can choose any subset of price-quality pairs from a pre-fixed set of price-quality pairs.  Suppose that there is a given set $\mathcal{P}$ of $R$ price-quality pairs, $\mathcal{P} =\left \{\left (p_{1} ,q_{1}\right ) ,\ldots  ,\left (p_{R} ,q_{R}\right )\right \}$. Then the constraint set is  $\mathcal{C}_{\mathcal{P}} =2^{\mathcal{P}}$ where $2^{\mathcal{X}}$ is the set of all subsets of a set $\mathcal{X}$.

(ii) In this example, the platform can choose any finite string $(p_{1} ,q_{1} ,\ldots  ,p_{k} ,q_{k})$ in $\mathbb{R}^{2k}$ for $k \leq N$ where $N \geq 1$, $p_{i} \in [0 ,\overline{p}]$ and $q_{i} \in [0 ,\overline{q}]$ for all $1 \leq i \leq k$. That is, the constraint set is given by 

 $\mathcal{C}_{N} =\{C :C$ is a $k$-separating menu for $k \leq N$ such that $(p ,q) \in [0 ,\overline{p}] \times [0 ,\overline{q}]$ for all $(p ,q) \in C\}$.
\end{example}

In the two-sided market model in Section \ref{Sec: Two-sided2 prices}, the constraint set that the platform faces is the same as the constraint set in Example \ref{Example: 1-rich} part (i) (see Theorem \ref{thm: Bertrand}).  
 The constraint set in Example \ref{Example: 1-rich} part (ii) is standard in the price discrimination literature (see for example \cite{bergemann2011mechanism}).

\subsection{Local Results} \label{Section: local}

In practice, because of operational considerations or other constraints, a platform might only consider a small number of options.  For example, an e-commerce platform can introduce a new top rated sellers category or remove an existing category. In this section we show that our main result holds also locally. That is, the values of the density function's  elasticity on some local region remain the key condition when deciding which option will yield more total transaction value.

For simplicity, suppose that the platform considers only two menus $C = \{(p_1,q_1),\ldots,(p_{n},q_{n}) \} \in \mathcal{C}_{p}$ and $C' = C  \setminus \{(p_1,q_1)\}$ where $p_{i} < p_{j}$, $q_{i} < q_{j}$ if $i<j$. (Below, we provide a more general version of the local result.)
In our two sided-market model where sellers choose prices, the menu $C'$ is feasible and can be obtained from the menu $C$ by banning some low quality sellers (see Section \ref{Sec: Two-sided2 prices}).  Intuitively, when the market is competitive, removing  some sellers should not have a major impact on the qualities and the prices of the other sellers. The platform does not seek to find the optimal menu across all menus but only to determine which menu yields more total transaction value: $C$ or $C'$. In Proposition \ref{Prop: localA} we show that the menu $C$ yields lower (resp., higher) total transaction value than the menu $C'$ if the density function's elasticity is bounded below (resp., above) by $-2$ on the interval $A := \left [p_{1}/q_{1},(p_{2}-p_{1})/(q_{2}-q_{1}) \right ]$. 

%\begin{proposition} \label{Prop: local}
% Let $C = \{(p_1,q_1),\ldots,(p_{n},q_{n}) \} \in \mathcal{C}_{p}$ and let  $C'= C \setminus \{(p_1,q_1)\} $. Assume without loss of generality that $p_{i} < p_{j}$ whenever $i<j$. 

% Then, $\pi(C) \leq \pi(C') $ if $F(m)m$ is convex on $\left [p_{1}/q_{1},(p_{2}-p_{1})/(q_{2}-q_{1}) \right ]$ and $\pi(C) \geq \pi(C') $ if $F(m)m$ is concave on $\left [p_{1}/q_{1},(p_{2}-p_{1})/(q_{2}-q_{1}) \right ]$. 
% \end{proposition}
 
We can obtain some intuition for this result as follows.  A type $m$ buyer chooses the price-quality pair $(p_{1},q_{1})$ under the menu $C$ if and only if $m \in A$. Thus, in order to compare $C$ and $C'$, the density function's elasticity must be bounded below or above $-2$ on the set of buyers' types that choose the price-quality pair $(p_{1},q_{1})$. Further,  the elasticity of many standard density functions is decreasing. In such a case, we can check the density function's elasticity at just one point (which can be typically done by price experimentation) to determine which menu yields more total transaction value: $C$ or $C'$. Thus, the local results in this section can be used to guide the market design decision of whether to introduce a new category of sellers, or to remove an existing category.  

We actually prove a more general version of the result discussed above. We compare any two menus $C$ and $C'$ such that $C' \in 2^{C}$ where $2^{C}$ is the power set of $C$. In the two-sided market model the menu $C'$ can be obtained by removing some sellers from the platform (not necessarily the lowest quality sellers). We show that $C'$ yields more (resp., less) total transaction value than $C$ under convexity (resp., concavity) of $F(m)m$ on a certain relevant local region.

\begin{definition}
For a menu $C=\{(p_{1},q_{1}),\ldots,(p_{n},q_{n})\} \in \mathcal{C}_{p}$, we define $m_{i}(C) =\left (p_{i} -p_{i-1}\right )/\left (q_{i} -q_{i-1}\right )$ for $i =1,\ldots ,n$ where $p_{0} = q_{0}=0$.
\end{definition}
 
\begin{proposition} \label{Prop: localA}
 Let $C = \{(p_1,q_1),\ldots,(p_{n},q_{n}) \} \in \mathcal{C}_{p}$ and let  $C'=\{(p_{\mu(1)} , q_{\mu(1)}),\ldots,(p_{\mu(k)} , q_{\mu(k)} )\}   \in 2^{C}$ so $\mu(i) \in \{1,\ldots,n\}$ for all $i$. Assume without loss of generality that $p_{i} < p_{j}$ and $\mu(i) < \mu(j)$ whenever $i<j$. Define $ \mu (0) = 0$. 
 
Assume that $\mu(k) = n$.\footnote{We show in the proof of Theorem \ref{Theorem: Main} (see Step 3) that if $\mu(k)<n$ then $C'$ is not optimal, so we only consider the case that $C'$ is such that $\mu(k)=n$. The intuition for this fact is that removing the highest quality sellers can only decrease the revenues.} 
 Then, $\pi(C) \leq \pi(C') $ if $F(m)m$ is convex on $ [m_{\mu(j-1)+1}(C),m_{\mu(j)}(C) ]$ for all $j$ such that $\mu(j) - \mu (j-1) > 1$. Further, $\pi(C) \geq \pi(C') $ if $F(m)m$ is concave on $ [m_{\mu(j-1)+1}(C),m_{\mu(j)}(C) ]$ for all $j$ such that $\mu(j) - \mu (j-1) > 1$. 
 \end{proposition} 
 
 Proposition \ref{Prop: localA} applies directly to the two-sided market model where sellers choose prices we introduce in Section \ref{Sec: Buyers prices}. In this model the market is competitive as sellers compete in Bertrand competition, and hence, equilibrium prices of high quality sellers do not change when low quality sellers are removed from the platform. For less competitive markets, we may expect that equilibrium prices of high quality sellers increase when low quality sellers are removed from the platform from the same logic discussed in Section 2: some demand shifts from the banned low quality sellers to high quality sellers (see also the discussion in the end of Section \ref{Sec: toy}).   A similar ``local'' analysis to the one in Proposition \ref{Prop: localA} can be applied to this case also, but this requires an 
 additional condition on the menus under consideration that  can be seen as a local version of the regularity condition (see Definition \ref{Def: marketN} and the discussion after that definition). To simplify the conditions on the constraint set we analyze the case where the platform considers to ban low quality sellers. In Section \ref{Sec: Model 1 quant}, we use this  analysis to provide local results  for the model where sellers choose quantities (see  Proposition \ref{prop:localModel1}). 
 
 \begin{proposition} \label{Prop: localB}
 Let $C = \{(p_1,q_1),\ldots,(p_{n},q_{n}) \} \in \mathcal{C}_{p}$ and let  $C'=\{(p_{n}' , q_{n} )\} \in \mathcal{C}_{p}$ be a $1$-separating menu where $p_{n}' \geq p_{n}$.  Assume that $m(1-F(m))$ is strictly quasi-concave on $[a,b]$ and that $p^{M}(1) \in (a,b)$.\footnote{Recall that $p^{M}(1)$ is the monopoly price when the quality is $1$ (see Section 3.2). The strict quasi-concavity of $m(1-F(m))$  implies that $p^{M}(1)$ is uniquely defined.} We have $\pi(C) \leq \pi(C')$ if  $ m_{n}(C') \leq p^{M} (1)$ and $F(m)m$ is convex on $[m_{1}(C),m_{n}(C)]$ and $\pi(C) \geq \pi(C')$ if  $m_{n}(C) \geq p^{M} (1)$ and $F(m)m$ is concave on $[m_{1}(C),m_{n}(C)]$. 
 \end{proposition}

%\begin{proposition} \label{Prop: localB}
 %Let $C = \{(p_1,q_1),\ldots,(p_{n},q_{n}) \} \in \mathcal{C}_{p}$ and let  $C'=\{(p_{j}', q_{j}),\ldots,(p_{n}' , q_{n} )\} \in \mathcal{C}_{p}$ be a $n-j+1$-separating menu for $2 \leq j \leq n$ where $p_{i}' \geq p_{i}$ for all $i=j,\ldots,n$.  Assume that $m(1-F(m))$ is strictly quasi-concave on $[a,b]$ and that $p^{M}(1) \in (a,b)$.\footnote{The strict quasi-concavity of $m(1-F(m))$  implies that $p^{M}(1)$ is uniquely defined.} We have $\pi(C) \leq \pi(C')$ if  $\max \{ m_{n}(C), m_{n}(C')\} \leq p^{M} (1)$ and $F(m)m$ is convex on $[m_{1}(C),m_{n}(C)]$ and $\pi(C) \geq \pi(C')$ if  $\min \{m_{n}(C'),m_{n}(C)\} \geq p^{M} (1)$ and $F(m)m$ is concave on $[m_{1}(C),m_{n}(C)]$. 
 %\end{proposition}

We end this section with a corollary that shows that it is enough to assume that the function $F(m)m$ is convex on a subset of $[a,b]$ in order to prove that there exists a $1$-separating menu that yields more total transaction value than any other menu $C$. The proof of Corollary \ref{Coro: convex on int} follows immediately from the proof of Theorem \ref{Theorem: Main}. 

\begin{corollary} \label{Coro: convex on int}
Let $C =\{(p_{1},q_{1}) ,\ldots ,(p_{k} ,q_{k})\} \in \mathcal{C}_{p}$ be a $k$-separating menu where $p_{i} <p_{j}$ if $i <j$. Suppose that  $F(m)m$ is convex on\footnote{Note that $C \in \mathcal{C}_{p}$ implies $m_{i}(C) < m_{j}(C)$ for $i<j$ and that $[m_{1}(C) ,m_{k}(C)] \subseteq [a ,b]$ (see the proof of Theorem \ref{Theorem: Main}).} $[m_{1}(C) ,m_{k}(C)]$ and that $\mathcal{C}$ is regular. Then there exists a $1$-separating menu $C^{ \ast }$ that yields more revenue than $C$, i.e., $\pi (C) \leq \pi (C^{ \ast })$.

In addition, if $F(m)m$ is convex on $[m_{1}(C) ,m_{k}(C)]$ for every menu and $\mathcal{C}_{1}$ is compact, then there is a $1$-separating menu that maximizes the total transaction value.
\end{corollary}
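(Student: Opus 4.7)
The plan is to replay the two-stage argument used to prove Theorem \ref{Theorem: Main}, and to check that under the weaker, local convexity hypothesis, each step still goes through. Given $C = \{(p_1, q_1), \ldots, (p_k, q_k)\} \in \mathcal{C}_p$, Stage~1 compares $\pi(C)$ with the revenue of the (possibly infeasible) $1$-separating menu $\{(p_k, q_k)\}$; Stage~2 then uses regularity of $\mathcal{C}$ to upgrade $\{(p_k, q_k)\}$ to a genuinely feasible $1$-separating menu $C^* \in \mathcal{C}_1$ whose revenue dominates $\pi(\{(p_k, q_k)\})$.

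For Stage~1, I would write
\[
\pi(C) = \sum_{i=1}^{k} p_i \bigl( F(m_{i+1}(C)) - F(m_i(C)) \bigr),\qquad m_{k+1}(C) := b,
\]
and contrast this with $\pi(\{(p_k, q_k)\}) = p_k (1 - F(p_k/q_k))$. The rearrangement carried out in the proof of Theorem \ref{Theorem: Main} converts the difference of these revenues into a telescoping sum of two-point convexity inequalities for $F(m)m$. Because $C \in \mathcal{C}_p$ forces all the thresholds $m_i(C)$ together with $p_k/q_k$ to lie inside $[m_1(C), m_k(C)]$, every invocation of convexity in that rearrangement evaluates $F(m)m$ at a point of this interval, so local convexity of $F(m)m$ on $[m_1(C), m_k(C)]$ is enough to conclude $\pi(C) \leq \pi(\{(p_k, q_k)\})$. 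Stage~2 then proceeds verbatim as in Theorem \ref{Theorem: Main}: regularity condition~(i) produces $\{(p^*, q^*)\} \in \mathcal{C}_1$ with $p^* \geq p_k$ and $q^* \geq q_k$, and regularity condition~(ii) together with concavity of the one-dimensional revenue function (a consequence of the local convexity of $F(m)m$) yields $\pi(\{(p^*, q^*)\}) \geq \pi(\{(p_k, q_k)\})$. Chaining the two stages proves the first assertion.

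For the second assertion, compactness of $\mathcal{C}_1$ and continuity of $\pi$ give a maximizer $C^\star \in \mathcal{C}_1$; applying the first assertion to every $C \in \mathcal{C}_p$ then yields $\pi(C) \leq \pi(C^\star)$, and menus outside $\mathcal{C}_p$ can be pruned to equivalent menus in $\mathcal{C}_p$ with the same revenue, so $C^\star$ is globally optimal in $\mathcal{C}$.

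The main technical obstacle is the bookkeeping in Stage~1: one has to revisit the proof of Theorem \ref{Theorem: Main} and verify explicitly that every appeal to convexity of $F(m)m$ involves an evaluation point inside $[m_1(C), m_k(C)]$, rather than elsewhere on $[a,b]$. Once this localization is confirmed, both halves of the corollary follow without further work, which is exactly the authors' claim that it is ``immediate'' from the Theorem's proof.
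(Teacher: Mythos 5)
Your overall route is the paper's: the authors give no separate argument for this corollary and simply assert it follows from the proof of Theorem \ref{Theorem: Main}, and your two-stage replay is the natural reading of that assertion. Your Stage~1 is correct, and it is the place where the localization genuinely works: in Steps 2--3 and 5 of that proof, Jensen's inequality is applied to the perspective function $f(x,y)=xF(x/y)$ at the points $(p_i-p_{i-1},\,q_i-q_{i-1})$ and their average, so $F(m)m$ is only ever evaluated at the thresholds $m_i(C)$ and at $p_k/q_k$, all of which lie in $[m_1(C),m_k(C)]$ (the ratio $p_k/q_k=\sum_i(p_i-p_{i-1})/\sum_i(q_i-q_{i-1})$ is sandwiched between $m_1(C)$ and $m_k(C)$). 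Hence $\pi(C)\le \pi(\{(p_k,q_k)\})$ under the local hypothesis alone.

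The problem is Stage~2, which you say ``proceeds verbatim'' with the concavity of $p\mapsto p(1-F(p/q'))$ being ``a consequence of the local convexity of $F(m)m$.'' That is exactly where the localization fails, not Stage~1, so your closing claim that the only bookkeeping is in Stage~1 has it backwards. Two appeals to convexity in Steps 6--7 of the Theorem's proof leave the interval $[m_1(C),m_k(C)]$. First, Step~6 establishes $p'\le p^{M}(q')$ for a maximal feasible $1$-separating menu by proving that $p^{M}(q)$ is nondecreasing in $q$, and that argument uses strict monotonicity of $F(m)+mf(m)$, i.e., strict convexity of $F(m)m$, on all of $[a,b]$. Second, the inequality $p_k\left(1-F(p_k/q')\right)\le p'\left(1-F(p'/q')\right)$ needs $p\left(1-F(p/q')\right)$ to be nondecreasing on $[p_k,p']$, which the paper derives from concavity of this revenue function on $[p_k,p^{M}(q')]$, i.e., convexity of $F(m)m$ on $[p_k/q',\,p^{M}(q')/q']$; since $q'\ge q_k$ the left endpoint can fall below $m_1(C)$, and the right endpoint $p^{M}(q')/q'$ has no reason to stay below $m_k(C)$. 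So either the corollary implicitly retains global (strict) convexity for the feasibility upgrade --- with ``local'' referring only to the Jensen step --- or the conclusion must be weakened to the possibly infeasible menu $\{(p_k,q_k)\}$. As written, your parenthetical is unjustified, and you should either flag these two non-local uses of convexity explicitly or add the corresponding hypotheses.
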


\subsection{Necessity of the Conditions For the Optimality of 1-separating Menus} \label{Sec:Necessity}

Theorem \ref{Theorem: Main} provides sufficient conditions for the optimality of 1-separating menus. In this section, we show that these conditions are necessary for the optimality of 1-separating menus in the sense that if one of the conditions is violated we can find a constrained price discrimination problem where $1$-separating menus are not optimal. These constrained price discrimination problems arise in the two-sided market models we study in Sections \ref{Sec: Model 1 quant} and \ref{Sec: Two-sided2 prices} for specific reasonable models' parameters (e.g., sellers' costs).    
We first show that when the function $F(m)m$ is not convex, we can always find a a regular constraint set  $\mathcal{C}$ such that no $1$-separating menu exists that maximizes the total transaction value. In particular, we can find a simple regular constraint set $\mathcal{C} =2^{C}$ where  $C =\left \{\left (p_{1} ,q_{1}\right ) ,\left (p_{2} ,q_{2}\right )\right \}$ (see Example \ref{Example: 1-rich} part (i)), for which the $2$-separating menu is optimal. In a similar manner we can provide more complicated constraint sets where a $k$-separating menu is optimal for $k \geq 2$ when $F(m)m$ is not convex.   

\begin{proposition} \label{Prop: Converse}
 Suppose that $F(m)m$ is not convex on $(a ,b)$. Then for any $k \geq 2$, there exists a menu
 $C =\left \{\left (p_{1} ,q_{1}\right ) ,\ldots, \left (p_{k} ,q_{k}\right )\right \}$  and a regular constraint set $\mathcal{C}$ such that the menu $C \in \mathcal{C}$ maximizes the  revenues and yields strictly more revenue than any $1$-separating menu in $\mathcal{C}$.  
\end{proposition}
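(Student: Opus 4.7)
The idea is to translate the non-convexity of $g(m):=F(m)m$ into a two-item menu that strictly dominates both of its singleton sub-menus. Writing $h(m):=m(1-F(m))=m-g(m)$, non-convexity of $g$ on $(a,b)$ gives $m_1<m_2$ in $(a,b)$ and $\lambda\in(0,1)$ with the strict \emph{reverse} Jensen inequality
$$\lambda h(m_1)+(1-\lambda)h(m_2) \;>\; h(\lambda m_1+(1-\lambda)m_2). \qquad (\star)$$
Since $F$ is a CDF with interval support $[a,b]$, one may also ensure $F(m_2)<1$ (automatic when $F$ is strictly increasing, e.g.\ has a positive density; otherwise obtained by a small downward perturbation of $m_2$ using continuity of $g$ to preserve $(\star)$).

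Next, I construct the candidate menu. Pick any $0<q_1<q_2$ with $q_1/q_2=\lambda$ (e.g.\ $q_1=\lambda$, $q_2=1$) and define $p_1=m_1 q_1$ and $p_2=m_1 q_1+m_2(q_2-q_1)$. Set $C=\{(p_1,q_1),(p_2,q_2)\}$ and $\mathcal{C}=2^C$. These prices are chosen so that the buyer indifferent between not buying and item~$1$ is exactly type $m_1=p_1/q_1$, the buyer indifferent between items~$1$ and $2$ is $m_2=(p_2-p_1)/(q_2-q_1)$, and the buyer indifferent between not buying and item~$2$ is $p_2/q_2=\lambda m_1+(1-\lambda)m_2$. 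The standard Mussa--Rosen decomposition then gives
$$\pi(C)=p_1[F(m_2)-F(m_1)]+p_2[1-F(m_2)]=q_1 h(m_1)+(q_2-q_1)h(m_2).$$

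The remaining step is to compare $\pi(C)$ with the two singleton revenues in $\mathcal{C}$. For $\{(p_1,q_1)\}$ the purchase threshold is $m_1$ and the revenue is $q_1 h(m_1)$, so
$$\pi(C)-\pi(\{(p_1,q_1)\})=(q_2-q_1)h(m_2)>0$$
using $q_2>q_1$, $m_2>0$, and $F(m_2)<1$. For $\{(p_2,q_2)\}$ the threshold is $p_2/q_2=\lambda m_1+(1-\lambda)m_2$ and the revenue is $q_2\,h(\lambda m_1+(1-\lambda)m_2)$, so factoring $q_2$ out of the decomposition and invoking $(\star)$ yields
$$\pi(C)=q_2\bigl[\lambda h(m_1)+(1-\lambda)h(m_2)\bigr]>q_2\,h(\lambda m_1+(1-\lambda)m_2)=\pi(\{(p_2,q_2)\}).$$
Since the empty menu gives revenue $0$, I conclude that $C$ is optimal in $\mathcal{C}$ and strictly beats every $1$-separating menu.

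The main obstacle is the clean extraction of a strict non-convexity witness $(m_1,m_2,\lambda)$ at an interior point where $F(m_2)<1$; once that is in hand, the argument reduces to the Mussa--Rosen decomposition together with the key algebraic identity $p_2/q_2=\lambda m_1+(1-\lambda)m_2$ induced by the choice $\lambda=q_1/q_2$. That identity is precisely what makes convexity of $F(m)m$ equivalent to the singleton $\{(p_2,q_2)\}$ weakly dominating $C$, so $(\star)$ delivers the strict reverse dominance required by the proposition.
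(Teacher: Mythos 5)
Your proof is correct, and while it rests on the same underlying fact as the paper's --- that the revenue of a two-item menu decomposes as a perspective/Jensen combination, so non-convexity of $F(m)m$ flips the comparison with the top singleton --- your execution differs in a way worth noting. The paper works with the perspective function $f(x,y)=xF(x/y)$, argues it fails midpoint convexity, extracts a midpoint violation at points $(x_1,y_1),(x_2,y_2)$, must separately rule out $x_1/y_1=x_2/y_2$, and only then builds the menu from these increments. You instead take the non-convexity witness $(m_1,m_2,\lambda)$ of $g(m)=F(m)m$ directly, choose $q_1/q_2=\lambda$ so that the three indifference thresholds are exactly $m_1$, $m_2$, and $\lambda m_1+(1-\lambda)m_2$, and read off $\pi(C)=q_2\left[\lambda h(m_1)+(1-\lambda)h(m_2)\right]$ with $h(m)=m(1-F(m))$; the strict reverse Jensen inequality then does all the work against $\{(p_2,q_2)\}$, and $F(m_2)<1$ handles $\{(p_1,q_1)\}$ and the empty menu. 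This buys you two simplifications: no detour through midpoint convexity (your free choice of $q_1/q_2$ plays the role that degree-one homogeneity of $f$ plays in the paper), and no need to verify that the two witness points have distinct ratios, since your thresholds $m_1<m_2$ are distinct by construction. The only point to tighten is your aside about ensuring $F(m_2)<1$: under the paper's standing assumption that $F$ is supported on the interval $[a,b]$, $F$ is strictly increasing there, so $F(m_2)<1$ is automatic for $m_2\in(a,b)$ and the perturbation fallback is unnecessary.
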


When $F(m)m$ is not convex on $(a,b)$, Proposition \ref{Prop: Converse} shows that  we  can construct a constraint set where a $2$-separating menu yields more total transaction value than any  $1$-separating menu. In terms of the two-sided markets we study this result means  that there is a range of  parameters (e.g., different sellers' costs or qualities) in which a $1$-separating menu is not optimal when $F(m)m$ is not convex on $(a,b)$. Hence, the convexity of $F(m)m$ on $(a,b)$ is a necessary condition for the optimality of $1$-separating menus. The intuition for this result is similar to the intuition provided in Sections \ref{Sec: toy} and 3.2. In the case that the function $mF(m)$ is not convex, $1$-separating menus are less attractive as they cause the platform to lose a relatively significant amount of participants when moving from a $2$-separating menu to a $1$-separating menu (see Figure \ref{Fig: 1}). 

When $F(m)m$ is convex on $(a,b)$ we can find a constraint set $\mathcal{C}$ that satisfies condition (i)  but not condition (ii) or satisfies condition (ii) but not condition (i) in the definition of regularity (see Definition \ref{Def: marketN}) such that no $1$-separating menu exists that maximizes the total transaction value. Hence, the regularity conditions are also necessary for the optimality of $1$-separating menus in the sense  that if one of these 
 conditions do not hold then a $1$-separating menu is not necessarily optimal.

\begin{proposition} \label{prop:converse-regular}
 Suppose that $F(m)m$ is convex on $(a,b)$ and $\mathcal{C}_{1}$ is non-empty. 
 
1. There exists a constraint set $\mathcal{C}$ that satisfies condition (i)  in Definition \ref{Def: marketN} such that there is a menu $C \in \mathcal{C}$  that is not $1$-separating and yields strictly more revenue than any $1$-separating menu in $\mathcal{C}$.

2. There exists a constraint set $\mathcal{C}'$ that satisfies condition (ii)  in Definition \ref{Def: marketN} such that there is a menu $C' \in \mathcal{C}'$ that is not $1$-separating and yields strictly more revenue than any $1$-separating menu in $\mathcal{C}$. 

\end{proposition}

The proof of the last proposition consists of simple  examples and the following  observations. For part (1), if condition (ii) of Definition  \ref{Def: marketN} does not hold, $1$-separating menus can be arbitrarily bad as these menus can have very high prices compared to the optimal ones. For part (ii), if condition (i) of Definition  \ref{Def: marketN} does not hold, the $1$-separating menus can have lower prices compared to other menus, and hence, lower revenues. These violations of regularity are not reasonable in typical two-sided market models as we explain after Definition \ref{Def: marketN} and show in both of the two-sided market models we study. 

%The interpretation of the previous two propositions in terms of the two-sided market models is that when our sufficient conditions for the optimality of a $1$-separating menu do not hold, we can find a two-sided market where $1$-separating menus are not optimal. 

\subsection{Convex-concave distributions} \label{sec:convex-concave}

In this section we expand the class of customer type distributions for which we provide structural results by analyzing convex-concave distributions. That is, we study distributions for which $F(m)m$ is convex on some interval $[a,m^{*}]$ and concave on $[m^{*},b]$ for some $m^{*} \in [a,b]$. This condition is satisfied for many distributions of interest, in particular, it typically holds for distributions from the exponential family, such as the exponential distribution,  chi-squared distribution, and log-normal distribution. For example, the exponential distribution with a parameter $\lambda$ has a density function $\lambda \exp (-\lambda x)$ and the density function's elasticity is above $-2$ on $[0,2/\lambda]$ and below $-2$ on $[2/ \lambda , \infty)$. In this example, the density function's elasticity is decreasing which captures settings where the total mass of buyers that are willing to buy the product is decreasing with the product's quality.  As opposed to the convex case, the convex-concave case allows the demand to have a thin-tailed distribution as the decrease in the mass of buyers when the quality increases can be arbitrarily high. 

The following proposition shows that we can characterize the size of the optimal menu as a function of the point where the density function's elasticity crosses $-2$ for the important constraint set $\mathcal{C} = 2^{C}$ that directly applies to the two-sided market model where sellers choose prices.\footnote{Using Proposition \ref{Prop: localB}, a similar analysis can be provided for the case where the prices of high quality sellers increase when the platform removes low quality sellers under an additional condition on the constraint set that resembles the regularity condition. Such a result can be applied for the model where sellers choose quantities we study in Section \ref{Sec: Model 1 quant}. We omit the details for the sake of brevity.} 

\begin{proposition}  \label{Corr:convex-concave}
 Let $C = \{(p_1,q_1),\ldots,(p_{n},q_{n}) \} \in \mathcal{C}_{p}$ and  $\mathcal{C} = 2^{C}$. Assume that $F(m)m$ is strictly convex on  $[a,m^{*}]$ and strictly concave on $[m^{*},b]$ for some $m^{*} \in [a,b]$. Let $1 \leq k \leq n-1$ be an integer. Then the optimal menu is $k$-separating or $k+1$-separating when $m^{*} \in [m_{n-k}(C),m_{n-k+1}(C)]$. 
\end{proposition}

%As we discussed in Section \ref{Section: local}, in practice, checking  the density function's elasticity in specific points can be typically done by price experimentation. 
Proposition \ref{Corr:convex-concave} expands the set of distributions  for which we can apply our techniques and shows when a $k$-separating menus can be optimal for a general $k$. Intuitively, when $m^{*}$ is higher, i.e., $F(m)m$ is convex on a larger interval, smaller menus are optimal, and when $m^{*}$ is lower, i.e., $F(m)m$ is concave on a larger interval, bigger menus are optimal. 
\section{Information Structures \label{Sec: INFORMATION}}

Having described our constrained price discrimination problem, we are now in a position to describe how we apply that framework to design information disclosure policies in two-sided markets.  We begin in this section by describing the information the platform has about the sellers' quality levels and the set of information structures from which the platform can choose.

\noindent {\bf Seller quality}.  Let $X$ be the set of possible sellers' quality levels.  We assume that $X$ is the  interval\protect\footnote{All our results can be easily generalized for the case that $X$ is any compact set in $\mathbb{R}^{n}_{+}$.} $[0 ,\overline{x}]$ for some $\overline{x} > 0$. We denote by $\mathcal{B}(X)$ the Borel sigma-algebra on $X$ and by $\mathcal{P}(X)$ the space of all Borel probability measures on $X$.  The distribution of the sellers' quality levels is described by a probability measure $\phi \in \mathcal{P}(X)$.

\noindent {\bf Platform's information.}  The platform's information is summarized by a finite (measurable) partition $I_{o} =\{A_{1}\ldots  ,A_{l}\}$ of $X$. We assume that $\phi (A_{i}) > 0$ for all $A_{i} \in I_{o}$. The platform has no information about the sellers' quality levels if $\vert I_{o}\vert =1$ where $\vert I_{o}\vert$ is the number of elements in the partition $I_{o}$. 

\noindent {\bf Information structures.}  Given the platform's information $I_{o}$, the platform chooses an information structure to share with buyers. We now define an information structure.
\begin{definition}
 An \emph{information structure} $I$ is a family of disjoint sets such that every set in $I$ is a union of sets in $I_{o}$, i.e., $B \in I$ implies $ \cup _{i}A_{i} =B$ for some sets $A_{i} \in I_{o}$.  
 \end{definition}
While the class of information structures we study is relatively simple, it provides enough richness for our analysis. An interesting direction for future work is to expand  our analysis to other information structures.
We now provide examples of information structures.\footnote{Note that equilibrium conditions will be required to fully specify buyers' beliefs on seller quality within each element of the information structure.}

\begin{example}\label{ex: information structure}
Suppose that $X =[0,1]$, $I_{o} =\{A_{1} ,A_{2} ,A_{3} ,A_{4}\}$, $A_{j} =[0.25(j -1) ,0.25j)$, $j =1 ,\ldots ,4$.

\begin{center}
    \definecolor{rvwvcq}{rgb}{0.08235294117647059,0.396078431372549,0.7529411764705882}
\begin{tikzpicture}[line cap=round,line join=round,>=triangle 45,x=1cm,y=1cm]
\clip(-11,-4.5) rectangle (-1.5,-3.2);
\draw [line width=1pt,color=black] (-9.88,-4.15)-- (-7.88,-4.15);
\draw [line width=1pt,color=black] (-7.88,-4.15)-- (-5.88,-4.15);
\draw [line width=1pt,color=black] (-5.88,-4.15)-- (-3.88,-4.15);
\draw [line width=1pt,color=black] (-3.88,-4.15)-- (-1.88,-4.15);
\begin{scriptsize}
\draw [fill=rvwvcq] (-9.88,-4.15) circle (2.5pt);
\draw[color=rvwvcq] (-9.88,-3.8) node {0};
\draw [fill=rvwvcq] (-7.88,-4.15) circle (2.5pt);
\draw[color=rvwvcq] (-7.88,-3.8) node {0.25};
\draw[color=black] (-8.88,-3.8) node {$A_{1}$};
\draw [fill=rvwvcq] (-5.88,-4.15) circle (2.5pt);
\draw[color=rvwvcq] (-5.88,-3.8) node {0.5};
\draw[color=black] (-6.88,-3.8) node {$A_{2}$};
\draw [fill=rvwvcq] (-3.88,-4.15) circle (2.5pt);
\draw[color=rvwvcq] (-3.88,-3.8) node {0.75};
\draw[color=black] (-4.88,-3.8) node {$A_{3}$};
\draw [fill=rvwvcq] (-1.88,-4.15) circle (2.5pt);
\draw[color=rvwvcq] (-1.88,-3.8) node {1};
\draw[color=black] (-2.88,-3.8) node {$A_{4}$};
\end{scriptsize}
\end{tikzpicture}
\end{center}

Two examples of information structures are the information structure $I_{1} =\{A_{3} ,A_{4}\}$

\begin{center}
    \definecolor{rvwvcq}{rgb}{0.08235294117647059,0.396078431372549,0.7529411764705882}
\begin{tikzpicture}[line cap=round,line join=round,>=triangle 45,x=1cm,y=1cm]
\clip(-11,-4.5) rectangle (-1.5,-3.2);
\draw [line width=1pt,color=gray] (-9.88,-4.15)-- (-7.88,-4.15);
\draw [line width=1pt,color=gray] (-7.88,-4.15)-- (-5.88,-4.15);
\draw [line width=1pt,color=ForestGreen] (-5.88,-4.15)-- (-3.88,-4.15);
\draw [line width=1pt,color=red] (-3.88,-4.15)-- (-1.88,-4.15);
\begin{scriptsize}
\draw [fill=rvwvcq] (-9.88,-4.15) circle (2.5pt);
\draw[color=rvwvcq] (-9.88,-3.8) node {0};
\draw [fill=rvwvcq] (-7.88,-4.15) circle (2.5pt);
\draw[color=rvwvcq] (-7.88,-3.8) node {0.25};
\draw[color=black] (-8.88,-3.8) node {$A_{1}$};
\draw [fill=rvwvcq] (-5.88,-4.15) circle (2.5pt);
\draw[color=rvwvcq] (-5.88,-3.8) node {0.5};
\draw[color=black] (-6.88,-3.8) node {$A_{2}$};
\draw [fill=rvwvcq] (-3.88,-4.15) circle (2.5pt);
\draw[color=rvwvcq] (-3.88,-3.8) node {0.75};
\draw[color=black] (-4.88,-3.8) node {$A_{3}$};
\draw [fill=rvwvcq] (-1.88,-4.15) circle (2.5pt);
\draw[color=rvwvcq] (-1.88,-3.8) node {1};
\draw[color=black] (-2.88,-3.8) node {$A_{4}$};
\end{scriptsize}
\end{tikzpicture}

\end{center}
and the information structure $I_{2} =\{A_{3} \cup A_{4}\}$

\begin{center}
\definecolor{rvwvcq}{rgb}{0.08235294117647059,0.396078431372549,0.7529411764705882}
\begin{tikzpicture}[line cap=round,line join=round,>=triangle 45,x=1cm,y=1cm]
\clip(-11,-4.5) rectangle (-1.5,-3.2);
\draw [line width=1pt,color=gray] (-9.88,-4.15)-- (-7.88,-4.15);
\draw [line width=1pt,color=gray] (-7.88,-4.15)-- (-5.88,-4.15);
\draw [line width=1pt,color=orange] (-5.88,-4.15)-- (-3.88,-4.15);
\draw [line width=1pt,color=orange] (-3.88,-4.15)-- (-1.88,-4.15);
\begin{scriptsize}
\draw [fill=rvwvcq] (-9.88,-4.15) circle (2.5pt);
\draw[color=rvwvcq] (-9.88,-3.8) node {0};
\draw [fill=rvwvcq] (-7.88,-4.15) circle (2.5pt);
\draw[color=rvwvcq] (-7.88,-3.8) node {0.25};
\draw[color=black] (-8.88,-3.8) node {$A_{1}$};
\draw [fill=rvwvcq] (-5.88,-4.15) circle (2.5pt);
\draw[color=rvwvcq] (-5.88,-3.8) node {0.5};
\draw[color=black] (-6.88,-3.8) node {$A_{2}$};
\draw [fill=rvwvcq] (-3.88,-4.15) circle (2.5pt);
\draw[color=rvwvcq] (-3.88,-3.8) node {0.75};
\draw[color=black] (-4.88,-3.8) node {$A_{3}$};
\draw [fill=rvwvcq] (-1.88,-4.15) circle (2.5pt);
\draw[color=rvwvcq] (-1.88,-3.8) node {1};
\draw[color=black] (-2.88,-3.8) node {$A_{4}$};
\end{scriptsize}
\end{tikzpicture}

\end{center}

In the information structure $I_{1}$, the sellers whose quality levels belong to the sets $A_{1}$ and $A_{2}$ are ``banned'' from the platform, and the sellers whose quality levels belong to the sets $A_{3}$ and $A_{4}$ can participate in the platform. The platform shares the information it has about the sellers whose quality levels belong to the sets $A_{3}$ and $A_{4}$, i.e., the buyers know that the quality level of a seller in the set $A_{4}$ is between $0.75$ and $1$, and the quality level of a seller in the set $A_{3}$ is between $0.5$ and $0.75$. In the information structure $I_{2}$, the sellers whose quality levels belong to the sets $A_{1}$ and $A_{2}$ are again banned from the platform and the platform does not share the information it has about the other sellers. Hence, buyers cannot distinguish between sellers in $A_3$ and $A_4$.    
\end{example}

Note that the platform's information structure $I=\{B_{1},\ldots,B_{n}\}$ determines both which sellers are banned from the platform (in particular, sellers in $X \setminus \cup_{B_{i} \in I} B_{i}$ are banned from the platform), as well as the amount of information that the platform shares with buyers regarding the sellers that participate in the platform.  

Given an information structure $I$, we define the measure space $\Omega _{I} =(X ,\sigma (I))$ where $\sigma (I)$ is the sigma-algebra generated by $I$. Recall that a function $p :(X ,\sigma (I)) \rightarrow \mathbb{R}$ is $\sigma (I)$ measurable if and only if $p$ is constant on each element of $I$, i.e., $x_{1} ,x_{2} \in B$ and $B \in I$ imply that $p(x_{1}) =p(x_{2}) : =p(B)$. 

Given the platform's initial information on the sellers' quality levels  $I_{o}$, we denote by $\mathbb{I}(I_{o})$ the set of all possible information structures.

\noindent {\bf $k$-separating information structures.}  We say that an information structure $I$ is {\em $k$-separating} if $I$ contains exactly $k$ elements, i.e., $\vert I\vert  =k$. For example, the information structure $I_{1}$ described in Example \ref{ex: information structure} is $2$-separating and the information structure $I_{2}$ is $1$-separating.

\subsection{Remarks On The Assumptions}

We now provide a few remarks on our assumptions.

\noindent \textbf{Exogenous quality.} In our two-sided market models we assume that sellers choose quantities or prices while their qualities are their types. In some platforms sellers can choose or improve their quality. In those cases, the sellers' types can be their opportunity cost, investment cost, or another feature. In principle, we could incorporate this into our model and the transformation to a constrained price discrimination problem would still hold. However, the set of feasible menus (equilibrium menus) is
determined by the specific two-sided market model we study and by the market arrangement. Hence, the set of feasible menus would be different and harder to characterize when sellers can also choose their quality.

\noindent \textbf{The platform's initial information.} As we discussed in the introduction, platforms collect information about the sellers' quality from many sources. In this paper we abstract away from the data collection process and assume that the platform has already collected some information about the sellers' quality and classified the sellers' quality (the partition $I_{o}$ represents this classification). We focus on how much of this information the platform should share with buyers to maximize its revenues. An interesting future research direction is to incorporate dynamic considerations that are related to learning, such as learning the sellers' quality over time, into our framework. 

\noindent \textbf{Constrained information structures.} The information structures available to the platform in our model are more limited than the information structures available to the platform (sender) in the standard information design literature. For example, we do not allow the platform to use a mixed strategy (i.e., mix over sets in the platform's initial information $I_{o}$). Allowing for mixed strategies would actually simplify our analysis as is typically the case in the information design literature. However, in our context of quality selection we think that platform's pure strategies are more realistic.  Also, because we assume that the platform's information about the sellers' quality is partial and is given by a finite partition, every information structure that the platform can choose as well as the set of possible information structures that the platform can choose from are finite. The analysis of the constrained price discrimination problem in Section \ref{Sec: price-dis} shows that our framework can be generalized to the case of uncountable information structures.     

\section{Two-Sided Market Model 1: Sellers Choose Quantities} \label{Sec: Model 1 quant}

In this section we consider a model in which the platform chooses the prices, and the sellers choose the quantities.

The platform chooses an information structure $I \in \mathbb{I}(I_{o})$ and a $\sigma (I)$ measurable pricing function $p$. The measurability of the pricing function means that if the platform does not reveal any information about the quality of two sellers, i.e., the two sellers belong to the same set $B$ in the information structure $I$, then these sellers are given the same price under the platform's pricing function. The  measurability condition is natural because the buyers do not have any information on the sellers' quality except the information provided by the platform, so any rational buyer will not buy from a seller $x$ whose price is higher than a seller $y$ when $x$ and $y$ have the same expected quality. 

With slight abuse of notation, for an information structure $I =\{B_{1} ,\ldots  ,B_{n}\}$, we denote a $\sigma (I)$ measurable pricing function by $\boldsymbol{p}=(p(B_{1}),\ldots,p(B_{n}))$ where $p(B_{i})$ is the price that every seller $x$ in $B_{i}$ charges. A pricing function  $\boldsymbol{p}=(p(B_{1}),\ldots,p(B_{n}))$ is said to be {\em positive} if $p(B_{i})>0$ for all $B_{i} \in I$.

An information structure $I =\{B_{1} ,\ldots  ,B_{n}\}$ and a pricing function $\boldsymbol{p}$ generate a game between the sellers and the buyers. The platform's decisions and the structure of the game are common knowledge at the start of the game. In the game, the sellers choose quantities,\footnote{Here quantities can correspond, for example, to how many hours the sellers choose to work.} and the buyers choose whether to buy a product and if so, from which set of sellers $B_{i} \in I$ to buy it. Each equilibrium of the game induces a certain revenue for the platform. The platform's goal is to choose an information structure and prices that maximize the platform's equilibrium revenue. We now describe the buyers' and sellers' decisions in detail.

\subsection{\label{Sec: Buyers-quantity}Buyers}
Buyers are heterogeneous in how much they value the quality of the product relative to its price; in particular, every buyer has a {\em type} in $[a,b] \subseteq \mathbb{R}_{ +} : =[0 ,\infty )$, with buyers' types distributed according to the probability distribution function $F$ on $[a,b]$, with continuous probability density function $f$.  The buyers do not know the sellers' quality levels, but they know the information structure $I =\{B_{1} ,\ldots  ,B_{n}\}$ and the pricing function $\boldsymbol{p}$ that the platform has chosen. 

The buyers choose whether to buy a single product and if so, from which set of sellers $B_{i} \in I$  to buy it. A type $m \in [a ,b]$ buyer's utility from buying a product from a type $x \in B_{i}$ seller is given by
\begin{equation*}
Z(m ,B_{i} ,p(B_{i})) =m\mathbb{E}_{\lambda _{B_{i}}}(X) - p(B_{i}).
\end{equation*} 
The probability measure $\lambda _{B_{i}}$ describes the buyers' beliefs about the quality levels of sellers in the set $B_{i}$, and $\mathbb{E}_{\lambda _{B_{i}}}(X)$ is the seller's expected quality given the buyers' beliefs  $\lambda _{B_{i}}$.\footnote{All of our results hold if a type $m \in [a ,b]$ buyer's utility is given by $Z(m ,B_{i} ,p(B_{i})) =mv(\lambda_{B_{i}}) - p(B_{i})$ for some function $v:\mathcal{P}(X) \rightarrow \mathbb{R}_{+}$ that is increasing with respect to stochastic dominance. For example, the function $v$ can capture buyers' risk aversion.} In equilibrium, the buyers' beliefs are consistent with the sellers' quantity decisions and with Bayesian updating.  

A type $m$ buyer buys a product from a type $x \in B_{i}$ seller if $Z(m ,B_{i} ,p(B_{i})) \geq 0$ and $Z(m ,B_{i} ,p(B_{i})) =\max _{B \in I}Z(m ,B ,p(B))$, and does not buy it  otherwise.\footnote{
If there are multiple sets $\{B_{i}\}_{B_{i} \in \overline{P}}$ such that for some type $m$ buyer we have $Z(m ,B_{i} ,p(B_{i})) \geq 0$ and $Z(m ,B_{i} ,p(B_{i})) =\max _{B \in I}Z(m ,B ,p(B))$, then we break ties by assuming that the buyer chooses to buy from the set of sellers with the highest index, i.e., $\max _{i \in \{i :B_{i} \in \bar{P}\}}i$.} The total demand in the market for products sold by types $x \in B_{i}$ sellers given the information structure $I$ and the pricing function $\boldsymbol{p}$,  $D_{I}(B_{i} ,\boldsymbol{p})$ is given by  
\begin{equation*}D_{I}(B_{i} ,\boldsymbol{p}) =\int _{a}^{b}1_{\{Z(m ,B_{i} ,p(B_{i})) \geq 0\}}1_{\{Z(m ,B_{i} ,p(B_{i})) =\max _{B \in I}Z(m ,B ,p(B))\}}F(dm). 
\end{equation*}

\subsection{Sellers} \label{Sec: Sellers-quantity}
Given the information structure $I$ and the pricing function $\boldsymbol{p}$, a type $x \in B_{i} \subseteq X$ seller's utility is given by 
\begin{equation*}U(x ,h ,p(B_{i})) = hp(B_{i}) -\frac{k(x)h^{\alpha +1}}{\alpha+1}.
\end{equation*} 
Each seller chooses a quantity $h \in \mathbb{R}_{+}$ in order to maximize their utility. For a type $x$ seller, the cost of producing $h$ units is given by $k(x)h^{\alpha  +1}/(\alpha + 1)$. The seller's cost function depends on their type and on the quantity that they sell. We assume that $k$ is measurable and is bounded below by a positive number. We also assume that the cost of producing $h$ units is strictly convex in the quantity, i.e., $\alpha  > 0$. This cost structure is quite general and simplifies the characterization of the constraint set, i.e., the set of equilibrium menus (see Proposition \ref{Prop: unique equilibrium} and Lemma \ref{Lem: expected sellers} in the Appendix) but showing that the constraint set is regular can be done under more general cost structures. 

Let $g(x ,p(B_{i})) =\ensuremath{\operatorname*{argmax}}_{h \in \mathbb{R}_{+}}U(x ,h ,p(B_{i}))$ 
be the quantity that a type $x \in B_{i}$ seller chooses when the pricing function is $\boldsymbol{p}=(p(B_{1}),\ldots,p(B_{n}))$. Note that $g$ is single-valued because $U$ is strictly convex in $h$. Let $$S_{I}(B_{i} ,p(B_{i})) =\int _{B_{i}}g(x ,p(B_{i}))\phi (dx)$$ be the total supply in the market of sellers with types $x \in B_{i}$.

\subsection{Equilibrium} \label{Sec: eq quantities}
Given the information structure and the pricing function that the platform chooses, there are four equilibrium requirements. First, the sellers choose quantities in order to maximize their utility. Second, the buyers choose whether to buy a product and if so, from which set of sellers to buy it in order to maximize their own utility. Third, the buyers' beliefs about the sellers' quality are consistent with Bayesian updating and with the sellers' actions. Fourth, demand equals supply for each set $B_{i}$ that belongs to the information structure.
We now define an equilibrium formally.

\begin{definition} \label{Def:EqModel1}
Given an information structure $I =\{B_{1},\ldots  ,B_{n}\}$ and a positive pricing function $\boldsymbol{p}=(p(B_{1}),\ldots,p(B_{n}))$, an equilibrium is given by the buyers' demand $\{D_{I}(B_{i},\boldsymbol{p})\}_{i =1}^{n}$, sellers' supply $\{S_{I}(B_{i},p(B_{i}))\}_{i =1}^{n}$, and buyers' beliefs $\{\lambda _{B_{i}}\}_{i =1}^{n}$ that satisfy the following conditions: 

(i) Sellers' optimality: The sellers' decisions are optimal. That is, 
\begin{equation*}g(x ,p(B_{i})) =\ensuremath{\operatorname*{argmax}}_{h \in \mathbb{R}_+} ~ U(x ,h ,p(B_{i}))
\end{equation*} 
is the optimal quantity for each seller $x \in B_{i} \in I$. 

(ii) Buyers' optimality: The buyers' decisions are optimal.  That is, for each buyer $m \in [a ,b]$ that buys from type $x \in B_{i}$ sellers, we have $Z(m ,B_{i} ,p(B_{i})) \geq 0$ and $Z(m ,B_{i} ,p(B_{i})) =\max _{B \in I}Z(m ,B ,p(B))$. 

(iii) Rational expectations: $\lambda _{B_{i}}(A)$ is the probability that a buyer is matched to sellers whose quality levels belong to the set $A$ given the sellers' optimal decisions, i.e.,  
\begin{equation} \label{Eq: lambda}
\lambda _{B_{i}}(A) =\frac{\int _{A}g(x ,p(B_{i}))\phi (dx)}{\int _{B_{i}}g(x ,p(B_{i}))\phi (dx)}
\end{equation}
for all $B_{i} \in I$ and for all measurable sets $A \subseteq B_{i}$.\protect\footnote{We assume uniform matching within each set $B_i$. Further,
If $\int _{B_{i}}g(x ,p(B_{i}))\phi (dx) =0$ then we define $\lambda _{B_{i}}$ to be the Dirac measure on the point $0 =\min X$.    
} 

(iv) Market clearing: For all $B_{i} \in I$ the total supply equals the total demand, i.e.,   
\begin{equation*} S_{I}(B_{i} ,p(B_{i})) =D_{I}(B_{i} ,\boldsymbol{p}) \ ,
\end{equation*}
 where $D_{I}(B_{i} ,\boldsymbol{p})$ and $S_{I}(B_{i} ,p(B_{i}))$ are defined in Sections \ref{Sec: Buyers-quantity} and \ref{Sec: Sellers-quantity} respectively.  
\end{definition}

 The equilibrium requirements limit the platform's ability to design the market. The buyers' beliefs about the expected sellers' quality depends on the sellers' quantity decisions, which the platform cannot control. Thus, the platform's ability to influence the buyers' beliefs by choosing an information structure is constrained. Furthermore, the prices and the expected sellers' qualities must form an equilibrium (i.e., supply equals demand)\footnote{In Section \ref{Subsec:ImbalancesModel1} we discuss the case that there are supply and demand imbalances.} in each set of the information structure. This equilibrium requirement is in addition to the more standard requirement in the market design literature that the buyers' and sellers' decisions are optimal. Hence, the platform cannot implement every pair of an information structure and pricing function.  This motivates the following definition.
  
 \begin{definition}
An information structure and pricing function pair $(I,\boldsymbol{p})$ is called \emph{implementable} if there exists an equilibrium $(D,S,\lambda )$ under $(I ,\boldsymbol{p})$ where $D =\{D_{I}(B_{i} ,\boldsymbol{p})\}_{B_{i} \in I}$, $S =\{S(B_{i},p(B_{i})\}_{B_{i} \in I}$, and $\lambda  =\{\lambda _{B_{i}}\}_{B_{i} \in I}$. We say that $(D,S,\lambda )$ implements $(I,\boldsymbol{p})$ if $(D ,S ,\lambda )$ is an equilibrium under $(I,\boldsymbol{p})$. 
 \end{definition}

We denote by $\mathcal{W}^{Q}$ the set of all implementable pairs of an information structure and pricing function $(I,\boldsymbol{p})$. In Section \ref{subsec:results-quant} we provide a convex program to find the equilibrium prices given an information structure if these prices exist. The platform's goal is to choose an information structure $I =\{B_{1},\ldots ,B_{n}\}$ and a pricing function $\boldsymbol{p}$ that maximize the total transaction value $\pi ^{Q}$ given by 
\begin{equation*}\pi ^{Q} (I,\boldsymbol{p}) :=\sum _{B_{i} \in I}p(B_{i})\min \{D_{I}(B_{i} ,\boldsymbol{p}) ,S_{I}(B_{i} ,p(B_{i}))\}
\end{equation*}
under the constraint that $(I,\boldsymbol{p})$ is implementable. That is, the platform's revenue maximization problem is given by  $\max_{(I,\boldsymbol{p}) \in \mathcal{W}^{Q}} \pi ^{Q}(I,\boldsymbol{p})$.\footnote{We can easily incorporate into the model commissions $\gamma _{1} ,\gamma _{2}$ on each side of the market. In this case the platform's revenue is given by $\sum _{B_{i} \in I}p(B_{i})\min \{D_{I}(B_{i} ,\boldsymbol{p}) ,S_{I}(B_{i} ,p(B_{i}))\}(\gamma _{1} +\gamma _{2}) $. The commissions may change the demand, supply and equilibrium prices. Nonetheless, for fixed commissions, the platform's revenue maximization problem is equivalent to maximizing the total transaction value. Hence, our analysis still holds for the case where the platform introduces fixed commissions.  }  

\subsection{Equivalence with Constrained Price Discrimination} \label{Subsec: equiv Quantity}
 
The main motivation for studying the constrained price discrimination problem that we analyzed in Section \ref{Sec: price-dis} is that the platform's revenue maximization problem described above transforms into this constrained price discrimination problem. To see this, let $(I,\boldsymbol{p})$ be an information structure-pricing function pair where $I =\{B_{1} ,B_{2} ,\ldots  ,B_{n}\}$ and $\boldsymbol{p}=(p(B_{1}),\ldots,p(B_{n}))$. Let $D =\{D_{I}(B_{i} ,\boldsymbol{p})\}_{B_{i} \in I}$, $S =\{S(B_{i},p(B_{i})\}_{B_{i} \in I}$, and $\lambda  =\{\lambda _{B_{i}}\}_{B_{i} \in I}$ be an equilibrium under $(I,\boldsymbol{p})$. Then $(I,\boldsymbol{p})$ \emph{induces} a subset of price-expected quality pairs $C$. The menu $C$ is given by $C =\{(p(B_{1}) ,\mathbb{E}_{\lambda _{B_{1}}}(X)) ,\ldots  ,(p(B_{n}) ,\mathbb{E}_{\lambda _{B_{n}}}(X))\}$ where $\mathbb{E}_{\lambda _{B_{i}}}(X)$ is the equilibrium expected quality of the sellers that belong to the set $B_{i}$.

Denoting, $q_{i}:=\mathbb{E}_{\lambda _{B_{i}}}(X)$, the menu $C$ yields the total transaction value 
\begin{align*}\pi \left (C\right ) &: =\sum _{ (p_{i} ,q_{i})  \in C}p_{i}D_{i}(C) \\ 
& =  \sum _{B_{i} \in I}p(B_{i})D_{I}(B_{i},\boldsymbol{p}) \\ 
&  = \sum _{B_{i} \in I}p(B_{i})\min \{D_{I}(B_{i},\boldsymbol{p}),S_{I}(B_{i} ,p(B_{i}))\} \\
& = \pi ^{Q} (I,\boldsymbol{p}). \end{align*}
 The first equality follows from the definition of $\pi$ (see Section \ref{Sec: price-dis}). The third equality follows from the fact that $(I ,\boldsymbol{p})$ is implementable. We conclude that the implementable information structure-pricing function pair $(I, \boldsymbol{p})$ yields the same revenue as the menu $C$ that it induces. 
 
 We denote by $\mathcal{C}^{Q}$ the set of all menus $C$ that are induced by some implementable $(I, \boldsymbol{p}) \in \mathcal{W}^{Q}$. With this notation, the platform's revenue maximization problem is equivalent to the constrained price discrimination problem of choosing a menu $C \in \mathcal{C}^{Q}$ to maximize $\sum p_{i}D_{i}(C)$ that we studied in Section \ref{Sec: price-dis}. That is, we have $\ensuremath{\operatorname*{max}}_{(I ,\boldsymbol{p}) \in \mathcal{W}^{Q}}\pi ^{Q} (I,\boldsymbol{p}) = \ensuremath{\operatorname*{max}}_{C \in \mathcal{C}^{Q}}\pi (C)$.

An information structure is {\em optimal} if it induces a menu that maximizes the platform's revenue.  The next subsection studies optimal information structures in this model, leveraging the equivalence with the constrained price discrimination problem.

 \subsection{Results} \label{subsec:results-quant}

 In this section we present our main results regarding the two-sided market model where the sellers choose quantities and the platform choose prices. 

Note that if $(I,\boldsymbol{p})$ induces the menu $C$ and $I$ is a $k$-separating information structure, then $C$ is a $k$-separating menu. We let $\mathcal{C}^{Q}_{k} \subseteq \mathcal{C}^{Q}$ be the set of $k$-separating menus. From the fact that the platform's revenue maximization problem transforms into the constrained price discrimination problem, Theorem  \ref{Theorem: Main} implies that if $\mathcal{C}^{Q}$ is regular and $F(m)m$ is convex, then the optimal information structure is $1$-separating, i.e., the optimal information structure consists of one element. In this subsection, we  establish certain natural conditions on the market model primitives that ensure regularity; these conditions then imply that if in addition $mF(m)$ is convex, then a $1$-separating information structure is optimal.
 
 Let $\varphi ^{Q}:\mathbb{I}(I_{o}) \rightrightarrows \mathcal{C}^{Q}$ be the set-valued mapping from the set $\mathbb{I}(I_{o})$ of all possible information structures to the set of  menus $\mathcal{C}^{Q}$ such that $C \in \varphi ^{Q}(I)$ if and only if $C$ is a menu that is induced by some implementable $(I,\boldsymbol{p})$. That is, $\varphi ^{Q}(I)$ contains all the menus that can be induced when the platform uses the information structure $I$. 
We note that the mapping $\varphi ^{Q}$ is generally complicated and there is no simple characterization of this mapping. However, we make substantial progress via the following proposition.  In particular, it can be shown that associated to every information structure $I$ there is a strictly convex program over the space of pricing functions $\boldsymbol{p}$, such that $(I,\boldsymbol{p})$ is implementable if and only if the solution to the program is $\boldsymbol{p}$.  Since every strictly convex program has at most one solution, this result also implies that the cardinality of $\varphi ^{Q}(I)$ is at most one; in other words, there is at most one menu $C$ that is induced when the platform uses the information structure $I$.  In addition, if the convex program does not have a solution for an information structure $I$, then there are no equilibrium prices associated with that information structure.

\begin{proposition} \label{Prop: unique equilibrium}
For every information structure $I \in \mathbb{I}(I_{o})$, there exists a strictly convex program over pricing functions such that $(I,\boldsymbol{p})$ is implementable if and only if the solution to the program is $\boldsymbol{p}$.  Therefore, there is at most one menu $C$ such that $C \in \varphi ^{Q}(I)$. 
\end{proposition}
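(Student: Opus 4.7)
The plan is to exploit the closed-form structure of the seller's problem. Under the isoelastic cost $k(x)h^{\alpha+1}/(\alpha+1)$, the first-order condition gives $g(x,p) = (p/k(x))^{1/\alpha}$, which factors multiplicatively as $p^{1/\alpha} \cdot k(x)^{-1/\alpha}$. A first observation is that this separability forces the equilibrium belief $\lambda_{B_i}$ appearing in the rational expectations condition (\ref{Eq: lambda}) to depend only on the set $B_i$ and the primitives $k, \phi$, not on the price: the factor $p(B_i)^{1/\alpha}$ cancels in numerator and denominator. In particular, the expected quality $q_i := \mathbb{E}_{\lambda_{B_i}}(X)$ is a fixed constant associated with each $B_i \in I$, equal to $\int_{B_i} x\, k(x)^{-1/\alpha}\phi(dx)/A_i$ with $A_i := \int_{B_i} k(x)^{-1/\alpha}\phi(dx) > 0$.

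Given this, the remaining equilibrium conditions reduce to finding a positive price vector $\boldsymbol{p}=(p_1,\ldots,p_n)$ that clears the market in each $B_i$, where the supply from $B_i$ is $S_i(p_i) = A_i\, p_i^{1/\alpha}$ and demand $D_i(\boldsymbol{p})$ arises from the discrete choice problem of buyers facing the menu $\{(p_i,q_i)\}_{i=1}^n$. The next step is to introduce the aggregate convex program
\begin{equation*}
\min_{\boldsymbol{p} \in \mathbb{R}_+^n} \Phi(\boldsymbol{p}) := \sum_{i=1}^{n} \frac{\alpha}{\alpha+1} A_i\, p_i^{(\alpha+1)/\alpha} + \int_a^b \max\Bigl(0,\, \max_{1 \leq i \leq n}\,(mq_i - p_i)\Bigr) f(m)\,dm.
\end{equation*}
Writing $PS_i(p_i) := \frac{\alpha}{\alpha+1} A_i p_i^{(\alpha+1)/\alpha}$, each $PS_i$ is strictly convex because $A_i > 0$ and $(\alpha+1)/\alpha > 1$, while the second (consumer-surplus) term $CS(\boldsymbol{p})$ is convex as the integral of a pointwise maximum of affine functions of $\boldsymbol{p}$. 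Thus $\Phi$ is strictly convex and coercive, so it admits a unique minimizer $\boldsymbol{p}^*$.

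The core step is to verify that $\boldsymbol{p}^*$ is precisely the pricing function that implements $(I,\boldsymbol{p}^*)$. By construction $PS_i'(p_i)=S_i(p_i)$, and by an envelope/differentiation-under-the-integral argument applied to $CS$ one obtains $\partial CS/\partial p_i = -D_i(\boldsymbol{p})$ at any point of differentiability; the set of buyer types for which ties occur in $\max_i(mq_i-p_i)$ has $F$-measure zero since $F$ has a density, so this passes cleanly to a subdifferential statement. The first-order condition $0 \in \partial \Phi(\boldsymbol{p}^*)$ therefore becomes exactly the market-clearing system $S_i(p_i^*) = D_i(\boldsymbol{p}^*)$ for each $i$. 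I would also confirm $\boldsymbol{p}^* > 0$ by noting that $PS_i'(0)=0$ whereas $\partial CS/\partial p_i(0) = -D_i < 0$ whenever $q_i > 0$, so the minimum is interior. Conversely, any implementable $(I,\boldsymbol{p})$ satisfies the same first-order conditions and hence coincides with $\boldsymbol{p}^*$ by strict convexity.

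The main obstacle will be the non-smoothness of the consumer-surplus term: one must carefully argue that the subdifferential inclusion at the minimizer corresponds exactly to market clearing under the tie-breaking convention of Sections \ref{Sec: Buyers-quantity}--\ref{Sec: eq quantities}, and that boundary minimizers with some $p_i^*=0$ are ruled out (since the definition of implementability requires a positive pricing function). Once uniqueness of $\boldsymbol{p}$ is established, uniqueness of the induced menu $C \in \varphi^Q(I)$ is immediate, because the qualities $q_i$ are functions of $I$ alone.
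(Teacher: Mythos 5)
Your construction is, at its core, the same as the paper's: you build a convex potential whose gradient is the excess supply $S_i(p_i)-D_i(\boldsymbol{p})$ and let strict convexity deliver uniqueness. The paper's $\psi$ is exactly your $\Phi$ in disguise --- its producer-surplus term is identical, and the term $-p_n+\sum_i F_2\bigl((p_{i+1}-p_i)/(q_{i+1}-q_i)\bigr)(q_{i+1}-q_i)$ is a closed form for (minus) consumer surplus on the region where demands are positive and prices are ordered, so the two potentials have the same gradient there and differ by a constant. What your version buys is a cleaner existence and convexity argument: coercivity gives a minimizer for free, and strict convexity is read off term by term rather than via the paper's strict-monotonicity computation for $\nabla\psi$. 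The one genuinely different design choice is the domain, and that is where the gap sits.

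Because you minimize over all of $\mathbb{R}^n_+$, your program \emph{always} has a unique solution $\boldsymbol{p}^*$, yet not every $I$ is implementable; so the literal ``if'' direction of the proposition (solution of the program $\Rightarrow$ implementable) fails for your formulation. Concretely: (1) your interiority argument assumes $D_i>0$ at $p_i=0$, but if $a>0$ and one cell's expected quality is low enough, that cell is dominated by another option even at price zero, the minimizer sits on the boundary with $p_i^*=0$ and $D_i=0$, and no positive price clears that cell's market; (2) if two cells happen to have the same expected quality $q_i=q_j$ (which Lemma \ref{Lem: expected sellers} does not rule out), ties occur on a positive-measure set of types, the subdifferential condition $0\in\partial\Phi(\boldsymbol{p}^*)$ only asserts that \emph{some} split of the tied demand clears both markets, while the model's tie-breaking rule sends all of it to one cell, so market clearing fails and $I$ is not implementable even though your program has a solution. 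The paper avoids both issues by posing the program over the open convex set $\boldsymbol{P}=\{\boldsymbol{p}: D_I(B_i,\boldsymbol{p})>0 \ \forall i,\ 0<p(B_1)<\dots<p(B_n)\}$, which is empty or admits no minimizer precisely when no equilibrium exists. Your forward direction (implementable $\Rightarrow$ critical point of a strictly convex function $\Rightarrow$ unique), which is all that the ``at most one menu'' conclusion needs, is sound; to get the stated equivalence you should restrict your program to the analogue of $\boldsymbol{P}$ rather than to $\mathbb{R}^n_+$.
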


To construct the claimed convex program in the preceding proposition, for every information structure $I=\{B_{1},\ldots,B_{n}\}$ we define an associated excess supply function. We show that the excess supply function satisfies the law of supply, i.e., the excess supply function is strictly monotone\footnote{A function $\zeta:\boldsymbol{P} \rightarrow \mathbb{R}^{n}$ is strictly monotone on $\boldsymbol{P}$ if for all $\boldsymbol{p} =(p_{1} ,\ldots  ,p_{n})$ and $\boldsymbol{p}^{\prime } =(p^{ \prime }_{1} ,\ldots ,p^{ \prime }_{n})$ that belong to $\boldsymbol{P}$ and satisfy $\boldsymbol{p} \neq \boldsymbol{p}^{ \prime }$, we have 
\begin{equation*}\left \langle  \zeta (\boldsymbol{p}) - \zeta (\boldsymbol{p}^{ \prime }) ,\boldsymbol{p} -\boldsymbol{p}^{ \prime }\right \rangle  > 0
\end{equation*} 
where $\left \langle \boldsymbol{x} ,\boldsymbol{y}\right \rangle  : =\sum _{i =1}^{n}x_{i}y_{i}$ denotes the standard inner product between two vectors $\boldsymbol{x}$ and $\boldsymbol{y}$ in $\mathbb{R}^{n}$.}  on a convex and open set $\boldsymbol{P} \subseteq \mathbb{R}^{n}$ such that if $\boldsymbol{p}$ is an equilibrium price vector then $\boldsymbol{p} \in \boldsymbol{P}$. The excess supply function is the gradient of some function $\psi$. Thus, minimizing $\psi$ over $\boldsymbol{P}$ is a strictly convex program that has a solution (minimizer) if and only if the solution is a zero of the excess supply function, i.e., an equilibrium price vector.  The result is helpful because it introduces a tractable convex program that for a given information structure provides an implementable price vector as its solution.

In the remainder of this subsection, we establish conditions for regularity of the space of menus induced under $\varphi ^{Q}$; these conditions are analogous to those discussed for the simple model in Section \ref{Sec: toy}.  First, note that in the Appendix we prove Lemma \ref{Lem: expected sellers} that states that given an information structure, the sellers' expected qualities do not depend on the prices as long as the prices are positive.  
This follows from the sellers' cost functions which imply that the sellers' optimal quantity decisions are homogeneous in the prices.  We assume for the rest of the section that $\mathbb{E}_{\lambda _{A_{1}}}(X)  < \ldots < \mathbb{E}_{\lambda _{A_{l}}}(X)$.   Let $\mathbb{E}_{\lambda _{B}}(X)$ be the resulting sellers' expected quality under the $1$-separating information structure $\{{B}\}$; then 
\begin{equation*}
p^{M}(B) =\ensuremath{\operatorname*{argmax}}_{p \geq 0}p \left (1 -F \left (\frac{p}{\mathbb{E}_{\lambda _{B}}(X)} \right ) \right ) 
\end{equation*} 
is the price that maximizes the platform's revenue under the $1$-separating information structure $\{{B}\}$ ignoring equilibrium conditions.

%A $1$-separating menu $\{(p^{\ast}(B),\mathbb{E}^{\ast}_{\lambda _{B}}(X))\}$ is called {\em } if for every $1$-separating menu $\{(p(B),\mathbb{E}_{\lambda _{B}}(X))\}$ such that  $\{(p(B),\mathbb{E}_{\lambda _{B}}(X))\} \in \varphi ^{Q}(I)$ for some $1$-separating information structure $I=\{B\}$ we have  $\mathbb{E}^{\ast}_{\lambda _{B}}(X) \geq \mathbb{E}_{\lambda _{B}}(X)$ or $p^{\ast}(B) \geq p(B)$ where at least one inequality is strict. We say that a $1$-separating information structure $I$ is {\em } if there exists a $1$-separating menu $C$ such that $C \in \varphi ^{Q}(I)$ and $C$ is a  menu. Let $\mathcal{M}$ be the set of  information structures. Note the difference of this definition with the previous one of price- menus (see Section \ref{Sec: price-dis}). 

We denote by $\{B^{H}\} \in \{\{A_{1}\},\{A_{2}\},\ldots,\{A_{l}\}\}$ the information structure that generates the highest equilibrium price among the $1$-separating information structures $\{A_{1}\},\ldots,\{A_{l}\}$. That is, $\{(p(B^{H}),\mathbb{E}_{\lambda _{B^{H}}}(X))\} \in \varphi ^{Q}(\{{B^{H}}\})$ and $\{(p(B),\mathbb{E}_{\lambda _{B}}(X))\} \in \varphi ^{Q}(\{{B}\})$ imply $p(B^{H}) \geq p(B)$ for every $1$-separating information structure $\{{B}\}$ such that $\{{B}\} \in \{\{A_{1}\},\ldots,\{A_{l}\}\}$.

Theorem \ref{Thm: info structure1} shows that if 
\begin{equation} \label{Ineq: supply}
S_{\{B^{H}\}}(B^{H} ,p^{M}(B^{H})) \geq D_{\{B^{H}\}}(B^{H} ,p^{M}(B^{H}))  
\end{equation} 
and $F(m)m$ is strictly convex, then the optimal information structure is $1$-separating.  Inequality (\ref{Ineq: supply}) says that under the information structure $\{B^{H}\}$ and the price $p^{M}(B^{H})$, the supply exceeds the demand. This implies that under the information structure $\{B^{H}\}$, the equilibrium price is lower than the  optimal monopoly price that maximizes the platform's revenue, similarly to the condition discussed in section \ref{Sec: toy}. Hence, inequality (\ref{Ineq: supply}) implies  condition (ii) of the regularity definition (see Definition \ref{Def: marketN}) holds. 
 In order to prove that the optimal information structure is $1$-separating we show that condition (i) of the regularity definition also holds, and hence, the set of equilibrium menus $\mathcal{C}^{Q}$ is regular. As we discussed in section \ref{Sec: price-dis}, condition (i) means that removing low quality sellers increases the equilibrium price for high quality sellers. This is a natural condition in the context of two-sided market models. In the two-sided market model that we study in this Section we show that condition (i) holds without any further assumptions on the model's primitives. Thus, under the mild condition that ensures that the supply of high quality sellers is not too low (inequality (\ref{Ineq: supply})), we can apply Theorem \ref{Theorem: Main} to prove that the optimal information structure is $1$-separating under the convexity of $F(m)m$.

\begin{theorem} \label{Thm: info structure1}
Assume that $F(m)m$ is strictly convex on $[a,b]$. Assume that inequality (\ref{Ineq: supply}) holds.  Then, 

(i) The set $C^{Q}$ is regular. 

(ii) There exists a $1$-separating information structure $I^{ \ast }$ such that
\begin{equation*}(I^{ \ast } ,\boldsymbol{p}^{ \ast }) =\ensuremath{\operatorname*{argmax}}_{(I ,\boldsymbol{p}) \in \mathcal{W}^{Q}}\pi ^{Q} (I,\boldsymbol{p}) .
\end{equation*}
That is, there exists a $1$-separating information structure $I^{ \ast }$ that maximizes the platform's revenue.  

(iii) The pair $(I^{ \ast },p^{\ast})$ induces a menu that is maximal in $\mathcal{C}^{Q}_{1}$ and $ B^{\ast} \in I_{o}=\{A_{1},\ldots,A_{l}\}$ where $I^{\ast} = \{B^{\ast} \}$ is the optimal  information structure.\footnote{Recall that a menu $\{(p,q)\} \in \mathcal{C}_{1}^{Q}$ is maximal in $\mathcal{C}^{Q}_{1}$ if for every menu $\{(p',q')\} \in \mathcal{C}_{1}^{Q}$ such that $(p',q') \neq (p,q)$ we have $p > p'$ or $q >q'$ }  
\end{theorem}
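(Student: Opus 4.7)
The plan is to verify both regularity conditions of Definition \ref{Def: marketN} for $\mathcal{C}^{Q}$, then invoke Theorem \ref{Theorem: Main} together with the equivalence established in Section \ref{Subsec: equiv Quantity} between the revenue-maximization problem over $\mathcal{W}^{Q}$ and the constrained price-discrimination problem over $\mathcal{C}^{Q}$. Compactness of $\mathcal{C}^{Q}_{1}$ will be automatic, since $\mathbb{I}(I_{o})$ is finite and Proposition \ref{Prop: unique equilibrium} ensures at most one induced menu per information structure, so $\mathcal{C}^{Q}$ itself is finite. Parts (ii) and (iii) will then follow from Theorem \ref{Theorem: Main} once regularity is in place, combined with a dominance argument on 1-separating menus.

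For condition (i), I would fix any implementable $(I,\boldsymbol{p})$ with $I=\{B_{1},\ldots,B_{k}\}$ inducing $C=\{(p_{1},q_{1}),\ldots,(p_{k},q_{k})\}\in\mathcal{C}^{Q}_{p}$ ordered so $p_{1}\le\cdots\le p_{k}$, and take the candidate 1-separating structure to be $\{B_{k}\}$. By Lemma \ref{Lem: expected sellers} the expected quality in its unique equilibrium equals $q_{k}$. At price $p_{k}$ the supply under $\{B_{k}\}$ coincides with $S_{I}(B_{k},p_{k})$ (same sellers and cost function), while the demand is $1-F(p_{k}/q_{k})\ge D_{I}(B_{k},\boldsymbol{p})=S_{I}(B_{k},p_{k})$, since buyers who previously preferred some $B_{j}$ with $j<k$ are no longer diverted to lower options. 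Because $g(x,p)=(p/k(x))^{1/\alpha}$ is strictly increasing in $p$ and the demand is nonincreasing, market clearing forces the equilibrium price under $\{B_{k}\}$ to be at least $p_{k}$, establishing condition (i).

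For condition (ii), I first argue that the maximum equilibrium price across $\mathcal{C}^{Q}_{1}$ is attained by some singleton $\{A_{i}\}\in I_{o}$. Using the explicit $g$, the 1-separating equilibrium equation for any $\{B'\}$ simplifies to $1-F(p/\mathbb{E}_{\lambda_{B'}}(X))=p^{1/\alpha}\kappa(B')$, where $\kappa(B'):=\int_{B'}k(x)^{-1/\alpha}\phi(dx)$. If $B$ is a strict union of elements of $I_{o}$, picking $A_{j}\subseteq B$ with $\mathbb{E}_{\lambda_{A_{j}}}(X)>\mathbb{E}_{\lambda_{B}}(X)$ and evaluating both equilibrium equations at the price $p(B)$, the demand under $\{A_{j}\}$ strictly exceeds the demand under $\{B\}$ (smaller price-to-quality ratio) while the supply under $\{A_{j}\}$ is strictly smaller (since $\kappa(A_{j})<\kappa(B)$), so the equilibrium price under $\{A_{j}\}$ must strictly exceed $p(B)$. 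Hence the supremum in condition (ii) is realized by some $\{B^{H}\}$ with $B^{H}\in I_{o}$, and inequality (\ref{Ineq: supply}) immediately yields $p(B^{H})\le p^{M}(B^{H})$. Theorem \ref{Theorem: Main} will then deliver part (ii), and the same strict dominance shows that any 1-separating $\{B\}$ with $B$ a strict union is not maximal in $\mathcal{C}^{Q}_{1}$, so the optimal $B^{*}$ must lie in $I_{o}$, proving (iii).

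The principal obstacle is the equilibrium-price comparison across 1-separating information structures when lower-quality sellers are merged with higher-quality ones: this simultaneously shifts supply outward and demand inward, and the net effect on the equilibrium price is not transparent without exploiting the separable, homogeneous form of $g(x,p)$ arising from the polynomial cost structure. Capturing this comparative static cleanly, in the form used above, is what makes the regularity of $\mathcal{C}^{Q}$ reduce to the mild supply condition (\ref{Ineq: supply}).
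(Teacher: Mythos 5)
Your proposal follows essentially the same route as the paper's proof: condition (i) of regularity is obtained by passing to the highest block $\{B_{k}\}$ and comparing supply and demand at the price $p_{k}$ (with Lemma \ref{Lem: expected sellers} guaranteeing the quality is unchanged), condition (ii) by showing that any union of elements of $I_{o}$ is price-dominated by one of its constituent blocks and then invoking inequality (\ref{Ineq: supply}), and parts (ii)--(iii) then follow from Theorem \ref{Theorem: Main} and the maximality of the optimal $1$-separating menu exactly as in the paper. The only step you gloss over is the \emph{existence} of an equilibrium price for the $1$-separating structures $\{B_{k}\}$ and $\{A_{j}\}$ (needed for these menus to lie in $\mathcal{C}^{Q}_{1}$ at all, rather than merely to bound a price that is assumed to exist); the paper supplies this with an intermediate value theorem argument applied to the excess demand function on $[p_{k}, \mathbb{E}_{\lambda_{B_{k}}}(X)\,b]$, which your continuity and monotonicity observations readily furnish.
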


 Theorem \ref{Thm: info structure1} shows that there exists a unique equilibrium price  $p^{eq}(A_{j})$  that the platform can induce  when it chooses the $1$-separating information structure $I=\{A_{j}\}$, i.e., $\varphi ^{Q}(I)$ is single-valued when $I$ is a $1$-separating information structure.  Further, under the natural condition that the equilibrium price is increasing in the sellers' quality, i.e., $p^{eq}(A_{j}) \leq p^{eq}(A_{k})$ whenever $\mathbb{E}_{\lambda _{A_{j}}}(X)  <  \mathbb{E}_{\lambda _{A_{k}}}(X)$, it is simple to show that there exists only one information structure-price pair $(\{A_{l}\},p^{eq}(A_{l}) ) $ that induces a maximal menu in $\mathcal{C}^{Q}_{1}$. Hence, in this case, Theorem \ref{Thm: info structure1} implies that the optimal $1$-separating information structure is $\{A_{l}\}$. That is, banning all sellers except the highest quality sellers is optimal for the platform. The necessity conditions developed in Section \ref{Sec:Necessity} show that when $F(m)m$ is not convex on $[a,b]$, then there exist examples where a $1$-separating menu is not be optimal.

Checking if inequality (\ref{Ineq: supply}) holds is straightforward given the model's primitives.  The following example illustrates that inequality (\ref{Ineq: supply}) holds if the sellers' costs in $B^{H}$ are low enough and/or the size of the supplier set $B^{H}$ is large enough. We note that if we introduce transfers or subsidies for each side of the market then the platform can always charge buyers and pay sellers in a way that inequality (\ref{Ineq: supply}) holds and the subsidies do not influence the platform's revenue.

\begin{example} \label{Ex: check inequality}
Suppose that $F(m)$ is the uniform distribution on $[0,1]$, i.e., $F(m)=m$ on $[0,1]$. Assume also that $\alpha=1$. A direct calculation shows that $p^{M}(B)=\mathbb{E}_{\lambda _{B}}(X)/2$. Hence, inequality (\ref{Ineq: supply}) holds if and only if 
\begin{equation} \label{Ineq: Ex1}
    1- \frac{p^{M}(B^{H})}{\mathbb{E}_{\lambda _{B^{H}}}(X)} \leq p^{M}(B^{H})\int _{B^{H}} k(x)^{-1} \phi (dx) \quad \Leftrightarrow \quad 1 \leq \int _{B^{H}} x k(x)^{-1} \phi (dx)
\end{equation}
where we use the fact that $\mathbb{E}_{\lambda _{B^{H}}}(X) \int _{B^{H}} k(x)^{-1} \phi (dx) = \int _{B^{H}} x k(x)^{-1} \phi (dx)$ (see Lemma \ref{Lem: expected sellers} in the Appendix). Thus, the size of the set $B^{H}$, the sellers' qualities in $B^{H}$, and the sellers' costs in $B^{H}$ determine whether inequality (\ref{Ineq: supply}) holds. In order to determine the information structure  $\{B^{H}\}$ with the highest equilibrium price we can solve for the equilibrium price:   
\begin{equation} \label{Ineq: Example2}
    1- \frac{p^{eq}(B)}{\mathbb{E}_{\lambda _{B}}(X)} = p^{eq}(B)\int _{B} k(x)^{-1} \phi (dx) \Leftrightarrow p^{eq}(B) =  \frac{\int _{B} xk(x)^{-1} \phi (dx)}{\int _{B} k(x)^{-1} \phi (dx)(1+\int _{B} xk(x)^{-1} \phi (dx))}
\end{equation}
and choose the set $B \in \{A_{1},\ldots,A_{l}\}$ with the highest equilibrium price. 

%Assume further, as in Section \ref{Sec: toy}, that for all $A_{i} \in I_{o}$, $q_{i}$ and $c_{i}$ are the quality and cost of every seller in $x \in A_{i}$, respectively. That is, the platform knows the sellers' costs and the sellers' quality levels. Denoting $\phi _{j} := \phi (A_{j})$, inequality (\ref{Ineq: Ex1}) is equivalent to $q_{j} \phi _{j} \geq c_{j}$ where $A_{j}=B^{H}$. Note that this is exactly the same condition as the condition in the second stage of the analysis in Section \ref{Sec: toy}. To find $j$, note that the equilibrium price is given by $q_{i}c_{i}/(c_{i}+\phi _{i}q_{i})$ (see equality (\ref{Ineq: Example2})). Hence we can find $j$ by solving $j= \operatorname{argmax} _{2 \leq i \leq n} q_{i}c_{i}/(c_{i}+\phi _{i}q_{i})$.   
\end{example}

When the support of $F$ is unbounded it can be the case that inequality (\ref{Ineq: supply}) trivially holds because the supply under the price that maximizes the platform's revenue tends to infinity. For example, suppose that $F$ has the Pareto distribution, i.e.,  $F(m)=1-1/m^{\beta}$ on $[1,\infty)$. Then $F(m)m$ is convex for $\beta<1$. In this case, the support of $F$ is unbounded so $p^{M}$ is not necessarily well defined. Indeed, for every $q>0$ we have 
\begin{equation*}
    \lim _{p \to \infty} p \left (1-F\left ( \frac{p}{q}\right ) \right ) =  \lim _{p \to \infty} p \left ( \frac{q^{\beta}}{p^{\beta}}\right ) = \infty. 
\end{equation*}
Thus, the price that maximizes the platform's revenue tends to infinity which means that the supply under this price tends to infinity and inequality (\ref{Ineq: supply}) trivially holds.

As we discussed in Section \ref{Section: local}, in practice, the platform might consider only a few menus. In the model presented in this section, local results are not trivial to obtain as removing a set of sellers impacts the equilibrium prices in a complex way. Despite this, we compare any menu to the menu that is obtained by removing all the sellers except the sellers with the highest quality for the convex and concave distribution cases. The $1$-separating menu where the platform keeps only the highest quality sellers yields more revenue under local convexity and a local version of inequality (\ref{Ineq: supply}) while keeping the low quality sellers yields more revenue under local concavity and a local reverse version of inequality (\ref{Ineq: supply}). The proof follows from Proposition \ref{Prop: localB} and the fact that the equilibrium price for high quality sellers increases when the platform removes low quality sellers. 

\begin{proposition} \label{prop:localModel1}
 Let $(I,\boldsymbol{p}$) be an implementable information structure where $I =\{B_{1} ,B_{2} ,\ldots  ,B_{n}\}$ and assume that $\mathbb{E}_{\lambda _{B_{n}}}(X) > \mathbb{E}_{\lambda _{B_{i}}}(X)$ for all $i \neq n$. Suppose that $m(1-F(m))$ is strictly quasi-concave.\footnote{The assumption that  $m(1-F(m))$ is strictly quasi-concave guarantees that $p^{M}$ is well defined. }
 
 (i) If $F(m)m$ is convex on
  \begin {align*} 
 \left [\frac{p(B_{1})}{\mathbb{E}_{\lambda _{B_{1}}}(X)}, \frac{p(B_{n}) - p(B_{n-1})}{\mathbb{E}_{\lambda _{B_{n}}}(X) - \mathbb{E}_{\lambda _{B_{n-1}}}(X)} \right ] 
\end{align*}
and inequality (\ref{Ineq: supply}) holds for $B_{n}$ then $\pi ^{Q} (I) \leq \pi ^{Q} ( \{B_{n} \})$.

(ii) If $F(m)m$ is concave on
  \begin {align*} 
 \left [\frac{p(B_{1})}{\mathbb{E}_{\lambda _{B_{1}}}(X)}, \frac{p(B_{n}) - p(B_{n-1})}{\mathbb{E}_{\lambda _{B_{n}}}(X) - \mathbb{E}_{\lambda _{B_{n-1}}}(X)} \right ] 
\end{align*}
and $p(B_{n}) \geq p^{M}(B_{n})$ then $\pi ^{Q} (I) \geq \pi ^{Q} ( \{B_{n} \})$.
 
\end{proposition}

In the model we analyze in this section where sellers choose quantities, we can generally provide less local results than for the model where sellers choose prices we analyze in Section \ref{Sec: Two-sided2 prices}. The reason is that the constraint set generated by the buyers and sellers' equilibrium behavior, is harder to characterize in the current model compared to the model where sellers choose prices because of the Bertrand model's competitiveness nature. Despite this complexity, Proposition \ref{prop:localModel1} provides local results that show when a $1$-separating menu is better than other menus based on local properties.

\subsection{Supply and Demand Imbalances} \label{Subsec:ImbalancesModel1}

In the definition of equilibrium we assumed that  supply equals demand for each set in the information structure (see Definition \ref{Def:EqModel1}). This assumption restricts the prices that the platform can implement to market-clearing prices, and in fact, Proposition \ref{Prop: unique equilibrium} shows that there is a unique price vector that clears the market for a given information structure.  We assume this because we study quality selection decisions that are typically chosen for a long horizon (and do not change during that horizon), and hence, to guarantee that the market functions well, it is natural that  the prices chosen by the platform would  approximately make  total supply to equal total demand.  However, in general, a platform can set prices that may 
result in consistent imbalances between supply and demand;  specially, in the short-run sellers might have unsold products and buyers
can experience unsatisfied demand. 

In this section we analyze  the case where supply and demand are not equal. In this case, the set of feasible menus that the platform can choose from might be very large and the platform can vary prices depending on the specific information structure. We show that for the case where supply is greater than or equal to demand for each set in the information structure our results extend naturally. This is the observed case in many online marketplaces where a consumer sees available supply when interacting with the platform. In particular, ride-sharing platforms that choose prices use surge pricing mechanisms  to try to ensure that there is available  supply of drivers when a rider is interacting with the platform and searches for a ride \citep{hall2015effects}.  We now introduce formally the definition of equilibrium for a market that is demand constrained. 

\begin{definition} \label{Def:EqDemandConstModel1}
Given an information structure $I =\{B_{1},\ldots  ,B_{n}\}$ and a positive pricing function $\boldsymbol{p}=(p(B_{1}),\ldots,p(B_{n}))$, a demand constrained equilibrium is given by the buyers' demand $\{D_{I}(B_{i},\boldsymbol{p})\}_{i =1}^{n}$, sellers' supply $\{S_{I}(B_{i},p(B_{i}))\}_{i =1}^{n}$, and buyers' beliefs $\{\lambda _{B_{i}}\}_{i =1}^{n}$ that satisfy conditions (i), (ii), and (iii) in Definition \ref{Def:EqModel1} and the following condition: 

For all $B_{i} \in I$ the total supply is greater or equal to the total demand, i.e.,   
\begin{equation}\label{Eq:DemandConstrain}  S_{I}(B_{i} ,p(B_{i})) \geq D_{I}(B_{i} ,\boldsymbol{p}).
\end{equation}
\end{definition}

Note that the transformation to a constrained price discrimination problem we described in Section \ref{Subsec: equiv Quantity} still holds when supply is greater or equal to  demand for each set in the information structure. Hence, we can  leverage the tools we developed in Section \ref{Sec: price-dis} to analyze the demand constrained market too. In this case, the set of feasible menus changes and is typically much larger. In particular, Proposition \ref{Prop: unique equilibrium} does not hold when the demand is constrained as the platform has many menus to choose from for a given information structure.  In this case, the constraint set is typically not regular because very high prices keep supply above demand, and hence, are generally implementable. Despite the irregularity of the constraint set,  we show that  there exists a $1$-separating information structure that maximizes the platform's revenue under similar conditions to the conditions of Theorem \ref{Thm: info structure1}. 

\begin{proposition}\label{Prop:SupplyImbalnce1}
Consider the constrained demand market model. Assume that $F(m)m$ is strictly convex on $[a,b]$. Assume that inequality (\ref{Ineq: supply}) holds for $B^{H} = \{A_{l}\}$.  Then, 
there exists a $1$-separating information structure $I^{ \ast }$ that maximizes the platform's revenue.  \end{proposition}

The assumption that
inequality (\ref{Ineq: supply}) holds for $B^{H} = \{A_{l}\}$ has a similar interpretation the one we provide in the discussion before Theorem \ref{Thm: info structure1}. This assumption  implies that the equilibrium price of the highest quality sellers is lower than the monopoly price. 

We can also define an equilibrium for a supply constrained market. The definition is the same as Definition \ref{Def:EqDemandConstModel1} except that inequality (\ref{Eq:DemandConstrain}) is reversed. In this case the transformation to a constrained price discrimination described in Section \ref{Subsec: equiv Quantity} does not hold as the platform's revenues depend on the supply function.

\section{Two-Sided Market Model 2: Sellers Choose Prices} \label{Sec: Two-sided2 prices}

In this section we consider a model in which the sellers choose the prices and the quantities are determined in 
equilibrium. 

The platform chooses an information structure $I \in \mathbb{I}_{o}$ (see Section \ref{Sec: INFORMATION}). 
An information structure induces a game between buyers and sellers. In this game, sellers make entry decisions first. After the entry decisions, in each set of sellers that belongs to the information structure, the participating sellers engage in Bertrand competition. Buyers form beliefs about the sellers' quality and choose whether to buy a product and if so, from which set of sellers to buy it. 

Each equilibrium of the game induces a certain revenue for the platform. The platform's goal is to choose the information structure that maximizes the platform's equilibrium revenue. We now describe the sellers' and buyers' decisions in detail.

\subsection{\label{Sec: Buyers prices}Buyers}
In this section we describe the buyers' decisions. The buyers make their decisions after the sellers' entry and pricing decisions have been made. We denote by $H(B_{i}) \subseteq B_{i}$ the set of quality $x \in B_{i}$ sellers that participate in the platform and by $p_{x}$ the price that a quality $x \in  \cup _{B_{i} \in I}H(B_{i})$ seller charges.     

 As in Section \ref{Sec: Buyers-quantity},  the buyers' heterogeneity is described by a type space $[a,b] \subset \mathbb{R}_+$, and buyers' types are distributed according to a probability distribution function $F$ on $[a,b]$.  The buyers do not know the sellers' quality levels, but they know the information structure $I=\{B_{1},\ldots,B_{n}\}$ that the platform has chosen. Because the buyers do not have any information about the sellers' quality  aside from the information structure $I$, and there are no search costs or frictions, the buyers that decide to buy a product from quality $x \in B_{i}$ sellers buy it from the seller (or one of the sellers) with the lowest price in $B_{i}$. 

The preceding requirement implies that sellers cannot use prices in order to signal quality. That is, two sellers with quality levels $x_{1}$, $x_{2}$ such that  $x_{1} \in B_{i}$, $x_{2} \in B_{i}$ for some set $B_{i}$ in the information structure $I$ cannot disclose information about their quality level to the buyers.  Because the main focus of this section is examining the platform's quality selection decisions, we abstract away from information that sellers can disclose to buyers. In particular, our model abstracts away from the possibility that the sellers signal their quality through higher prices.  This may be an interesting avenue for future research.  

Given the information structure $I =\{B_{1} ,\ldots  ,B_{n}\}$ and the sets of sellers that participate in the platform $\{H(B_{i})\}_{B_{i} \in I}$, $H(B_{i}) \subseteq B_{i}$, the buyers form beliefs $\lambda _{B_{i}} \in \mathcal{P}(X)$ about the quality level of type $x \in B_{i}$ sellers.\footnote{With slight abuse of notations we use similar notations to those of Section \ref{Sec: Buyers-quantity}.} In equilibrium, the buyers' beliefs are consistent with the sellers' entry decisions   and with Bayesian updating. That is, $\lambda _{B_{i}}$ describes the conditional distribution of $\phi$ given $H(B_{i})$, i.e., $\lambda _{B_{i}}(A) =\phi (A\vert H(B_{i}))$ where $\phi (A\vert H(B_{i})) := \frac{\phi(A \cap H(B_{i}))}{\phi (H(B_{i}))}$ for every (measurable) set $A$ and all $B_{i} \in I$ such that $\phi (H(B_{i}))>0$.

We denote by $p(B_{i}) =\inf _{x \in H(B_{i})}p_{x}$ the lowest price among the sellers in the set $B_{i}$. A type $m \in [a,b]$ buyer's utility from buying a product from quality $x \in B_{i}$ sellers is given by
\begin{equation*}Z(m ,B_{i} ,p(B_{i})) = m\mathbb{E}_{\lambda _{B_{i}}}(X) - p(B_{i}).
\end{equation*} 
 $\mathbb{E}_{\lambda _{B_{i}}}(X)$ is the sellers' expected quality given the buyers' beliefs $\lambda _{B_{i}}$. A type $m$ buyer buys a product from a quality $x \in B_{i}$ seller if $Z(m ,B_{i} ,p(B_{i})) \geq 0$ and $Z(m ,B_{i} ,p(B_{i})) =\max _{B \in I} Z(m ,B ,p(B))$, and does not buy a product otherwise.

The total demand in the market for products that are sold by type $x \in B_{i}$ sellers $D_{I}(B_{i} ,p(B_{1}) ,\ldots  ,p(B_{n}))$ who charge the lowest price in $B_{i}$ is given by  
\begin{equation}D_{I}(B_{i} ,p(B_{1}) ,\ldots  ,p(B_{n})) =\int _{a}^{b}1_{\{Z(m ,B_{i} ,p(B_{i})) \geq 0\}}1_{\{Z(m ,B_{i} ,p(B_{i})) =\max _{B \in P}Z(m ,B ,p(B))\}}F(dm). \label{Demand  buyers}
\end{equation}
The total demand in the market for products that are sold by type $x \in B_{i}$   sellers that do not charge the lowest price in $B_{i}$ is zero.  

\subsection{\label{Sec: Sellers prices}Sellers}
In this section we describe the sellers' decisions. 
Sellers first choose whether to participate in the platform or not. In each set $B_{i} \in I$ that belongs to the information structure, participating sellers price their products simultaneously and engage in price competition with other sellers whose quality levels belong to the set $B_{i} \in I$. Because a buyer that decides to buy a product from a quality $x \in B_{i}$ seller buys it from the seller (or one of the sellers) who charges the lowest price in the set $B_{i}$, the price competition between the sellers resembles Bertrand competition. 

A quality $x \in B_{i} \subseteq X$  seller that participates in the platform sells a quantity given by \\  $h_{I}(B_{i},H(B_{i}) ,p_{x} ,p(B_{1}) ,\ldots  ,p(B_{n}))$ units if the set of participating sellers is $H(B_{i})$, the price that $x$ charges is $p_{x} \in \mathbb{R}_{ +}$, and $p(B_{i}) =\inf _{x \in H(B_{i}) \backslash \{x\} }p_{x}$ is the lowest price among the other sellers in the set $H(B_{i})$. We denote by $M_{I}(B_{i} , p(B_{1}) ,\ldots  ,p(B_{n}))$ the total mass of sellers whose quality levels belong to $B_{i}$ and who charge the price $p(B_{i})$. The quantity allocation function $h_{I}$ is determined in equilibrium and is given by  
\begin{align}h_{I} (B_{i},H(B_{i}) ,p_{x} ,\boldsymbol{p}) =\left \{\begin{array}{c}\infty \text{  if  }p_{x} < p(B_{i}) \text{,  \ }D_{I}(B_{i} ,\boldsymbol{p}) >0 \\
\frac{D_{I}(B_{i} ,\boldsymbol{p})}{M_I(B_{i} ,\boldsymbol{p})}\text{ if  }p_{x} =p(B_{i})\text{, \  }D_{I}(B_{i} ,\boldsymbol{p}) >0 \\
0 \text{ if } p_{x} > p(B_{i})\text{, or  }D_{I}(B_{i} ,\boldsymbol{p}) =0\end{array}\right.
\end{align}
where $\boldsymbol{p} : =(p\left (B_{1}\right ) ,\ldots  ,p(B_{n}))$ and we define $D_{I}(B_{i} ,\boldsymbol{p})/M_{I}(B_{i},\boldsymbol{p}) =\infty $ if $M_{I}(B_{i} ,\boldsymbol{p}) =0$ and $D_{I}(B_{i} ,\boldsymbol{p}) >0$. This quantity allocation resembles the quantity allocation in the standard Bertrand competition model with a continuum of sellers. In particular, when multiple active sellers' charge the same lowest price within a set, the buyers' demand splits evenly between those sellers. 

A quality $x \in B_{i} \subseteq X$ seller's utility from participating in the platform is given by 
\begin{equation*}\overline{U}(x ,H(B_{i}) ,p_{x} ,p(B_{1}) ,\ldots  ,p(B_{n})) =h_{I}(B_{i},H(B_{i}) ,p_{x} ,p(B_{1}) ,\ldots  ,p(B_{n}))( p_{x} -c(x)).
\end{equation*}
We assume that the cost function $c$ is positive and constant on each element of the partition $I_{o}$, i.e., $x_{1} ,x_{2} \in A_{i}$ and $A_{i} \in I_{o}$ imply $c(x_{1}) =c(x_{2}) =c(A_{i})$. The assumption that the cost function $c$ is constant on each element of the partition $I_{o}$ means that the cost function is measurable with respect to the platform's information, i.e., the platform knows the sellers' costs but not the sellers' quality levels. This assumption simplifies the analysis but is not essential to our results. We also assume that the cost function is increasing, i.e., $c(A_{i}) < c(A_{j})$ for $i<j$. This assumption means that producing higher quality products costs more. 
A quality $x \in X$ seller's utility from not participating in the platform is normalized to $0$.

\subsection{\label{Equilibrium prices}Equilibrium}

In this section we define the equilibrium concept that we use for the game described above. For simplicity, we focus on a symmetric equilibrium in the sense that for all $B_{i} \in I$, all the sellers that participate in the platform charge the same price. With slight abuse of notation, we denote this price by $p(B_{i})$, i.e., $p_{x} =p(B_{i})$ for all $x \in H(B_{i})$, $B_{i} \in I$.

\begin{definition}
Given an information structure $I =\{B_{1} ,\ldots  ,B_{n}\}$, an equilibrium consists of a vector of positive prices $\boldsymbol{p}=(p(B_{1}) ,\ldots  ,p(B_{n})) \in \mathbb{R}^{\vert I\vert }$, positive masses of sellers that participate in the platform $\{M_{I}(B_{i},\boldsymbol{p})\}_{B_{i} \in I}$, positive masses of demand $\{D_{I}(B_{i} ,\boldsymbol{p})\}_{B_{i} \in I}$, and buyers' beliefs $\lambda  =(\lambda _{B_{i}})_{B_{i} \in I}$ such that  

(i) Sellers' optimality: The sellers' decision are optimal. That is, 
\begin{equation*}
p(B_{i}) =\ensuremath{\operatorname*{argmax}}_{p_{x} \in \mathbb{R}_{ +}}\overline{U} (x ,H(B_{i}) ,p_{x} ,\boldsymbol{p}) \end{equation*}  
is the price that seller $x \in H(B_{i})$ charges. 
In addition, seller $x \in B_{i}$ enters the market, i.e., $x \in H(B_{i})$, if and only if 
$\overline{U}(x ,H(B_{i}) ,p(B_{i}) ,\boldsymbol{p}) \geq 0$. 

(ii) Buyers' optimality: The buyers' decisions are optimal. That is, for each buyer $m \in [a ,b]$ that buys from type $x \in B_{i}$ sellers, we have $Z(m ,B_{i} ,p(B_{i})) \geq 0$ and $Z(m ,B_{i} ,p(B_{i})) =\max _{B \in I}Z(m ,B ,p(B))$. 

(iii) Rational expectations: $\lambda _{B_{i}}(A)$ is the probability that a buyer is matched to sellers whose quality levels belong to the set $A$ given the sellers' entry decisions,  
i.e., 
\begin{equation*}\lambda _{B_{i}}(A) =\phi (A\vert H(B_{i})) = \frac{\phi(A \cap H(B_{i}))}{\phi (H(B_{i}))}
\end{equation*}
for every (measurable) set $A$ and for all $B_{i} \in I$. 

(iv) Market clearing: For all $B_{i} \in I$ we have 
\begin{equation*}
    M_{I}(B_{i},\boldsymbol{p})h_{I}(B_{i},H(B_{i}) ,p(B_{i}) ,\boldsymbol{p}) = D_{I}(B_{i} ,\boldsymbol{p}) \ ,
\end{equation*}  
where $M_{I}(B_{i},\boldsymbol{p})=\phi(H(B_{i}))$ 
 is the mass of sellers in $B_{i}$ that participate in the platform; $D_{I}(B_{i} ,\boldsymbol{p})$ and $h_{I}(B_{i},H(B_{i}) ,p(B_{i}) ,\boldsymbol{p})$ are defined in Sections \ref{Sec: Buyers prices} and \ref{Sec: Sellers prices}, respectively.  
\end{definition}

We say that an information structure $I$ is {\em implementable} if there exists an equilibrium $(\boldsymbol{p}, D ,M ,\lambda )$ under $I$ where $D =\{D_{I}(B_{i} ,\boldsymbol{p})\}_{B_{i} \in I}$, $M =\{M(B_{i},\boldsymbol{p}\}_{B_{i} \in I}$, and $\lambda  =\{\lambda _{B_{i}}\}_{B_{i} \in I}$. We denote by $\mathcal{W}^{P}$ the set of all implementable information structures.  

The platform's goal is to choose an implementable information structure to maximize the total transaction value $\pi ^{P}$  given by 
\begin{equation*}\pi ^{P} (I) :=\sum _{B_{i} \in I}p(B_{i})\min \{D_{I}(B_{i} ,\boldsymbol{p}) , M_{I}(B_{i},\boldsymbol{p})h_{I}(B_{i},H(B_{i}) ,p(B_{i}) ,\boldsymbol{p})\}.
\end{equation*}

\subsection{Equivalence with Constrained Price Discrimination}

As in Section \ref{Subsec: equiv Quantity}, the platform's revenue maximization problem described above transforms into the constrained price discrimination problem that we analyzed in Section \ref{Sec: price-dis}. To see this, note that an implementable information structure $I =\{B_{1} ,B_{2} ,\ldots  ,B_{n}\}$ and an associated equilibrium price vector $\boldsymbol{p}$ induce a menu $C$ that is given by 
$$C =\{(p(B_{1}) ,\mathbb{E}_{\lambda _{B_{1}}}(X)) ,\ldots ,(p(B_{n}) ,\mathbb{E}_{\lambda _{B_{n}}}(X))\}$$ where $\mathbb{E}_{\lambda _{B_{i}}}(X)$ is the equilibrium expected quality of the sellers that belong to the set $B_{i}$ and $\boldsymbol{p}=(p(B_{1}),\ldots,p(B_{n}))$ is the vector of equilibrium prices. The implementable information structure $I$ yields the same revenue as the menu $C$ that it induces (see Section \ref{Subsec: equiv Quantity}).
 We denote by $\mathcal{C}^{P}$ the set of all menus $C$ that are induced by some implementable information structure $I \in \mathcal{W}^{P}$. With this notation, the platform's revenue maximization problem is equivalent to the constrained price discrimination problem of choosing a menu $C \in \mathcal{C}^{P}$ to maximize $\sum p_{i}D_{i}(C)$ that we studied in Section \ref{Sec: price-dis}. 

 \subsection{Results}
 In this section we present our main results regarding the two-sided market model in which the sellers choose the prices. 
 
 Let $\varphi ^{P}:\mathbb{I}(I_{o}) \rightrightarrows \mathcal{C}^{P}$ be the set-valued mapping from the set $\mathbb{I}(I_{o})$ of all possible information structures to the set of  menus $\mathcal{C}^{P}$ such that $C \in \varphi ^{P}(I)$ if and only if $C$ is a menu that is induced by the  information structure $I$. As opposed to the two-sided market model that we study in Section \ref{Sec: Model 1 quant}, the mapping $\varphi ^{P}$ can be explicitly characterized in the current setting. This is because  Bertrand competition pins down the equilibrium prices (to the lowest marginal costs within a set in the information structure).

 For an information structure $I=\{B_{1},\ldots,B_{n}\}$ let $L(I)=\{G_{1},\ldots,G_{n}\}$ be an information structure such that $G_{j} \in I_{o}$ for all $G_{j} \in L(I)$ and $G_{j}$ is the set with the lowest index among the blocks of $B_{j}$, i.e., among the sets  $\{A_{k}\}$ such that $B_{j}=\cup_{k} A_{k}$. For example, if $B_1=A_1 \cup A_2$, then $G_1=A_1$. We assume without loss of generality that $c(G_{1}) < \ldots < c(G_{n})$ for every information structure $I$. The following theorem shows that for every implementable information  structure $I$ and for every set $B_{i} \in I$, the equilibrium price for sellers in $B_{i}$ equals $c(G_{i})$. This fact follows directly from our Bertrand competition assumption. Further, using this characterization of the equilibrium prices it  follows that $\mathcal{C}^{P}$ satisfies the condition of Corollary \ref{Corr: Bert}. Hence, Theorem \ref{thm: Bertrand} provides a full characterization of the implementable information structures. 

\begin{theorem} \label{thm: Bertrand}
Let $I$ be any information structure. Suppose that $C \in \varphi ^{P}(I)$. 

(i) We have 
\begin{equation*}
C =\{(c(G_{1}) ,\mathbb{E}_{\lambda _{G_{1}}}(X)) ,\ldots  ,(c(G_{n}) ,\mathbb{E}_{\lambda _{G_{n}}}(X))\} 
\end{equation*}
where $L(I)=\{G_{1},\ldots,G_{n}\}$ and $\lambda _{G_{i}}(A) =  \phi(A \cap G_{i})/ \phi (G_{i})$ for every measurable set $A$.

(ii) We have  $\{(c(G_{n}), \mathbb{E}_{\lambda _{G_{n}}}(X))\}  \in \varphi ^{P}(\{B_{n}\})$.

(iii) Suppose that $I_{o}$ is implementable and $C_{o} \in \varphi ^{P}(I_{o})$. Then $\mathcal{C}^{P} = 2^{C_{o}}$. 

\end{theorem}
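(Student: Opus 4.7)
The plan is to exploit Bertrand competition within each block of the information structure to reduce the problem to a direct computation of equilibrium prices and beliefs. For part (i), I would first argue that for any implementable $I = \{B_1, \ldots, B_n\}$ and any $B_i \in I$, the equilibrium price must satisfy $p(B_i) = c(G_i)$. The standard undercutting argument applies: if $p(B_i) > c(G_i)$, any seller in $G_i$ could capture the entire demand in $B_i$ by pricing slightly below $p(B_i)$, contradicting sellers' optimality; if $p(B_i) < c(G_i)$, no seller in $B_i$ would find participation profitable, since $c(G_i)$ is the lowest cost in $B_i$, contradicting the positive-mass requirement. At the equilibrium price $c(G_i)$, sellers outside $G_i$ strictly lose from participating (their cost strictly exceeds the price), so $H(B_i)$ coincides with $G_i$ up to $\phi$-null sets. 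The rational expectations condition then yields $\lambda_{B_i}(A) = \phi(A \cap G_i)/\phi(G_i)$, which gives the claimed form of $C$.

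For part (ii), I would apply part (i) directly to the singleton information structure $\{B_n\}$. By definition $L(\{B_n\}) = \{G_n\}$, so the induced menu, once the structure is shown to be implementable, is $\{(c(G_n), \mathbb{E}_{\lambda_{G_n}}(X))\}$. Implementability of $\{B_n\}$ follows from implementability of $I$: the posted price and expected quality on this block are unchanged, and removing the other options can only redirect additional buyers toward $B_n$ (buyers who previously chose $B_n$ still find it their best option), so the demand at $B_n$ remains strictly positive. Positive seller mass is immediate since $\phi(G_n) > 0$.

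For part (iii), the inclusion $\mathcal{C}^P \subseteq 2^{C_o}$ is immediate from part (i): every induced menu has entries of the form $(c(G_j), \mathbb{E}_{\lambda_{G_j}}(X))$ with $G_j \in I_o$, so it is a subset of $C_o$. For the converse, any subset $S \subseteq C_o$ corresponds bijectively (via the ordering on $I_o$) to a subfamily $\{A_{i_1}, \ldots, A_{i_k}\} \subseteq I_o$, and the information structure $I = \{A_{i_1}, \ldots, A_{i_k}\}$ induces exactly the menu $S$ by part (i). To finish this direction I would verify implementability of such $I$ using the same monotonicity-of-demand argument as in part (ii): restricting the set of available options weakly increases the demand faced by any remaining option, so positive demand under $I_o$ propagates to positive demand under $I$.

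The step I expect to be the main obstacle is the careful treatment of the Bertrand equilibrium with a continuum of sellers. Because every active seller earns exactly zero profit per unit, the participating mass $\phi(H(B_i))$ is pinned down only through market clearing, and I would need to confirm that the equilibrium definition's positivity conditions are actually satisfied (not only on a set of measure zero) and that the tie-breaking convention in the sellers' and buyers' problems is compatible with the rational expectations update $\lambda_{B_i}(A) = \phi(A \cap G_i)/\phi(G_i)$. The monotonicity argument supporting implementability in parts (ii) and (iii) is intuitive, but should be checked carefully against the buyers' tie-breaking rule and the possibility that some block becomes dominated after others are removed.
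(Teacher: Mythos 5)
Your proposal is correct and follows essentially the same route as the paper: the undercutting/participation argument pinning $p(B_i)=c(G_i)$ and $H(B_i)=G_i$ for part (i), the demand-monotonicity observation $D_{\{B_n\}}(B_n,c(G_n))\geq D_I(B_n,\boldsymbol{p})>0$ for implementability in part (ii), and the bijection between subsets of $C_o$ and subfamilies of $I_o$ combined with the same monotonicity argument for part (iii). The caveats you raise about tie-breaking and market clearing are handled in the paper exactly as you anticipate, via the quantity allocation function $h_I$.
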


The proof of the following Corollary follows immediately from Theorem \ref{thm: Bertrand} and Corollary \ref{Corr: Bert}.

\begin{corollary} \label{Corr:Bertrand-1-optimal}
Assume that $F(m)m$ is convex on $[a,b]$. Then there exists a $1$-separating information structure that maximizes the platform's revenue. 
\end{corollary}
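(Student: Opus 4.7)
The plan is to reduce the claim to the constrained price discrimination problem of Section \ref{Sec: price-dis} and then invoke Corollary \ref{Corr: Bert}. By the equivalence between the platform's revenue maximization problem and the constrained price discrimination problem over $\mathcal{C}^P$ (established in the previous subsection), it suffices to produce an optimal $1$-separating menu in $\mathcal{C}^P$ and then lift it back to a $1$-separating information structure via Theorem \ref{thm: Bertrand}.

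The key step is to verify that $\mathcal{C}^P$ satisfies the two combinatorial hypotheses of Corollary \ref{Corr: Bert}. For this I would apply Theorem \ref{thm: Bertrand}(iii): assuming $I_o$ is implementable (which is the natural regime considered throughout the section), we have $\mathcal{C}^P = 2^{C_o}$ for the finite menu $C_o \in \varphi^P(I_o)$. Two consequences follow immediately. First, for any menu $C = \{(p_1,q_1),\ldots,(p_k,q_k)\} \in \mathcal{C}^P$, the singleton $\{(p_k,q_k)\}$ is a subset of $C_o$ and therefore lies in $2^{C_o} = \mathcal{C}^P$, verifying the closure-under-top-pair condition. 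Second, $\mathcal{C}^P_1$ is contained in the finite collection of singletons of $C_o$, so it is compact (indeed finite). Combined with the convexity of $F(m)m$, Corollary \ref{Corr: Bert} now applies and yields an optimal $1$-separating menu $\{(p^*,q^*)\} \in \mathcal{C}^P_1$.

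To finish, I would translate this optimal $1$-separating menu back to an information structure. By Theorem \ref{thm: Bertrand}(i), every menu in $\mathcal{C}^P_1$ has the form $\{(c(A_j), \mathbb{E}_{\lambda_{A_j}}(X))\}$ for some $A_j \in I_o$, and by Theorem \ref{thm: Bertrand}(ii) applied to the $1$-separating information structure $\{A_j\}$, this menu is induced by $\{A_j\}$ itself. Hence $\{A_{j^*}\}$ is a $1$-separating information structure whose induced menu is optimal, which completes the argument.

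I do not expect a real obstacle here: the content of the result is entirely packaged into Theorem \ref{thm: Bertrand}(iii), which gives the representation $\mathcal{C}^P = 2^{C_o}$, and Corollary \ref{Corr: Bert}, whose hypotheses become essentially tautological under that representation. The only point that requires some care is making the implicit assumption that $I_o$ is implementable (or otherwise arguing that a single implementable menu $C_o$ generates $\mathcal{C}^P$ as its power set), since without this the equality $\mathcal{C}^P = 2^{C_o}$ is not available.
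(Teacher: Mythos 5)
Your proposal is correct and follows essentially the same route as the paper: the paper's (one-line) proof is exactly the reduction to Corollary \ref{Corr: Bert} via Theorem \ref{thm: Bertrand} together with the equivalence between the information disclosure problem and the constrained price discrimination problem over $\mathcal{C}^P$. The only substantive difference is that you verify the closure-under-top-pair condition through part (iii) of Theorem \ref{thm: Bertrand}, which forces you to assume that $I_o$ is implementable --- an assumption the corollary does not make. That detour is unnecessary: part (ii) of Theorem \ref{thm: Bertrand} already states that for any implementable $I=\{B_1,\ldots,B_n\}$ inducing a menu $C$, the singleton $\{(c(G_n),\mathbb{E}_{\lambda_{G_n}}(X))\}$ belongs to $\varphi^P(\{B_n\})$ and hence to $\mathcal{C}^P$, which is precisely the hypothesis of Corollary \ref{Corr: Bert}; compactness of $\mathcal{C}^P_1$ is immediate since the set of information structures (and thus of induced menus) is finite. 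With that substitution your argument coincides with the paper's and needs no extra hypothesis.
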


Note that the only $1$-separating information structure that induces a menu that is maximal in $\mathcal{C}^{P}_{1}$ is $\{A_{l}\}$. Thus, when $I_{o}$ is implementable and the constraint set $\mathcal{C}^{P} = 2^{C_{o}}$ is regular (i.e., the equilibrium price is lower than the monopoly price under the information structure $\{A_{l}\}$), Theorem \ref{Theorem: Main} implies that the optimal information structure is $\{A_{l}\}$. That is, the optimal information structure bans all sellers except the highest quality sellers.

 As we discussed in Section \ref{Section: local}, in practice, a platform might consider only a small number of options, e.g., removing the lowest quality sellers or keeping them. In order to determine whether banning these low quality sellers is beneficial, the platform needs to measure the density function's elasticity only locally. If the density function's elasticity is bounded below by $-2$  on some local region that depends on the prices and qualities of the low quality sellers, then it is beneficial to ban these sellers. Conversely, if the density function's elasticity is bounded above by $-2$  on this local region, then it is beneficial to keep these sellers  (see Corollary \ref{Coro: Prices locally}). For many distribution functions the density function's elasticity is decreasing. In this case Corollary \ref{Coro: Prices locally} implies that the platform needs to check the density function's elasticity only at one point. For example, if at the highest point of the relevant interval (this point depends on the equilibrium prices and qualities) the density function's elasticity is greater than $-2$, then it is greater than $-2$ over the relevant interval. In practice, the platform might be able to estimate this elasticity with price experimentation. 

\begin{corollary} \label{Coro: Prices locally}

 Let $I=\{B_{1} , \ldots , B_{n} \} $ be an implementable information structure. 

Let $C = \{(p(G_{1}) ,\mathbb{E}_{\lambda _{G_{1}}}(X)) ,\ldots  ,(p(G_{n}),\mathbb{E}_{\lambda _{G_{n}}}(X))) \} \in \varphi ^{P}(I)$  where  $L(I)=\{G_{1},\ldots,G_{n}\}$. Consider the (implementable) information structure $I'=\{B_{2} , \ldots , B_{n} \} $. Then  \begin {align*} 
\pi ^{P} (I) \leq \pi ^{P} (I') \text{ if } F(m)m \text{ is convex on } \left [\frac{p(G_{1})}{\mathbb{E}_{\lambda _{G_{1}}}(X)}, \frac{p(G_{2}) - p(G_{1})}{\mathbb{E}_{\lambda _{G_{2}}}(X) - \mathbb{E}_{\lambda _{G_{1}}}(X)} \right ]  \\
\pi ^{P} (I) \geq \pi ^{P} (I') \text{ if } F(m)m \text{ is concave on } \left [\frac{p(G_{1})}{\mathbb{E}_{\lambda _{G_{1}}}(X)}, \frac{p(G_{2}) - p(G_{1})}{\mathbb{E}_{\lambda _{G_{2}}}(X) - \mathbb{E}_{\lambda _{G_{1}}}(X)} \right ]
\end{align*}

\end{corollary}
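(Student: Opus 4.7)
The plan is to reduce the comparison between $\pi^{P}(I)$ and $\pi^{P}(I')$ to the constrained price discrimination framework developed in Section \ref{Sec: price-dis}, and then invoke the local comparison result Proposition \ref{Prop: local}. Because Bertrand competition pins down equilibrium prices and beliefs via Theorem \ref{thm: Bertrand}, this reduction is essentially a translation of notation.

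First I would use Theorem \ref{thm: Bertrand}(i) to identify the menus induced by $I$ and by $I'$. Under $I$, the induced menu is exactly $C = \{(p(G_i), \mathbb{E}_{\lambda_{G_i}}(X))\}_{i=1}^{n}$ with $L(I) = \{G_1,\ldots,G_n\}$. Deleting $B_1$ to form $I'$ does not change which block of any remaining $B_i$ has lowest index, so $L(I') = \{G_2, \ldots, G_n\}$, and the rational-expectations beliefs on each $G_i$ with $i \geq 2$ are identical under $I$ and $I'$ (in both cases they are the conditional of $\phi$ on $G_i$). Hence the menu induced by $I'$ is precisely $C' := C \setminus \{(p(G_1), \mathbb{E}_{\lambda_{G_1}}(X))\}$, and the equivalence recalled earlier in this section yields $\pi^{P}(I) = \pi(C)$ and $\pi^{P}(I') = \pi(C')$.

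Next I would verify the hypotheses of Proposition \ref{Prop: local} with the identification $(p_i, q_i) := (p(G_i), \mathbb{E}_{\lambda_{G_i}}(X))$. The standing convention $c(G_1) < \cdots < c(G_n)$ combined with Theorem \ref{thm: Bertrand}(i) gives $p_1 < \cdots < p_n$, and implementability of $I$ with strictly positive demand on every block forces $q_1 < \cdots < q_n$---otherwise a strictly higher-priced option of weakly lower quality would be strictly dominated and attract no buyers, contradicting positivity of demand on every option of $C$. The interval $[p_1/q_1, (p_2 - p_1)/(q_2 - q_1)]$ appearing in Proposition \ref{Prop: local} then coincides exactly with the interval in the corollary's statement. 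Applying Proposition \ref{Prop: local} gives $\pi(C) \leq \pi(C')$ when $F(m)m$ is convex on that interval and the reverse inequality when it is concave; translating back through $\pi^{P}(I) = \pi(C)$ and $\pi^{P}(I') = \pi(C')$ completes the proof.

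The only real step is the bookkeeping that identifies $I'$'s induced menu as $C \setminus \{(p_1, q_1)\}$, which follows immediately from Theorem \ref{thm: Bertrand} and the definition of $L$. All analytical content is inherited from Proposition \ref{Prop: local}, so I do not expect any substantive technical obstacle beyond this translation.
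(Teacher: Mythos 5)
Your proposal is correct and follows exactly the route the paper intends: identify the induced menus via Theorem \ref{thm: Bertrand}(i) (so that $I'$ induces $C\setminus\{(p(G_{1}),\mathbb{E}_{\lambda_{G_{1}}}(X))\}$, since prices $c(G_{i})$ and beliefs $\lambda_{G_{i}}$ do not depend on the other blocks) and then apply the local comparison in Proposition \ref{Prop: local}. The paper gives no separate proof for this corollary precisely because it is this translation, and your verification of the hypotheses ($p_{1}<\cdots<p_{n}$ from the cost ordering, $q_{1}<\cdots<q_{n}$ from positive demand) matches the paper's conventions.
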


We also show that when $F(m)m$ is concave and $I_{o}$ is implementable, the optimal information structure is $I_{o}$, i.e., the platform reveals all the information it has about the sellers' quality. The proof of the following Corollary follows from Theorem \ref{thm: Bertrand} and Proposition \ref{Prop: localA}. Other local results that compare only two specific information structures can be obtained by applying Proposition \ref{Prop: localA}.

\begin{corollary} \label{Coro: Prices locally2}

 Assume that $I_{o}$ is implementable. Let $C_{o} = \{(p(A_{1}) ,\mathbb{E}_{\lambda _{A_{1}}}(X)) ,\ldots  ,(p(A_{l}),\mathbb{E}_{\lambda _{A_{l}}}(X)) \} \in \varphi ^{P}(I_{o})$. 
Suppose that $F(m)m$ is concave on \begin{equation*}
    \left [\frac{p(A_{1})}{\mathbb{E}_{\lambda _{A_{1}}}(X)}, \frac{p(A_{l}) - p(A_{l-1})}{\mathbb{E}_{\lambda _{A_{l}}}(X) - \mathbb{E}_{\lambda _{A_{l-1}}}(X)} \right ].
\end{equation*}
Then the optimal information structure is $I_{o}$.

\end{corollary}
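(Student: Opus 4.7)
The plan is to reduce the problem to a monotonicity statement on the constrained price discrimination problem and then iteratively apply Proposition~\ref{Prop: local}. First, since $I_o$ is implementable with $C_o \in \varphi^P(I_o)$, Theorem~\ref{thm: Bertrand}(iii) yields $\mathcal{C}^P = 2^{C_o}$. Combined with the equivalence with the constrained price discrimination problem established in Section~\ref{Sec: Two-sided2 prices}, one has $\max_{I \in \mathcal{W}^P} \pi^P(I) = \max_{C \in 2^{C_o}} \pi(C)$, and the information structure $I_o$ itself induces $C_o$. Hence it suffices to prove that $\pi(C_o) \geq \pi(C')$ for every $C' \in 2^{C_o}$, i.e., that discarding elements from the full menu $C_o$ can only weakly decrease revenue under the stated concavity hypothesis.

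For brevity write $p_i := p(A_i)$, $q_i := \mathbb{E}_{\lambda_{A_i}}(X)$, and $m_i := (p_i - p_{i-1})/(q_i - q_{i-1})$ for $i = 1, \ldots, l$ (with $p_0 = q_0 = 0$, so $m_1 = p(A_1)/\mathbb{E}_{\lambda_{A_1}}(X)$). Since $C_o$ lies in $\mathcal{C}^P_p$, the cut-points satisfy $m_1 < m_2 < \cdots < m_l$, and the hypothesis is concavity of $F(m)m$ on $[m_1, m_l]$. Proceed by induction on $|C_o| - |C'|$: it is enough to show that for any submenu $C'' \subseteq C_o$ and any element $(p_i, q_i) \in C''$, the menu $C'' \setminus \{(p_i, q_i)\}$ yields weakly less revenue than $C''$. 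This single-element comparison is exactly the content of Proposition~\ref{Prop: local} (when $(p_i, q_i)$ is the lowest-priced element of $C''$), and more generally of its appendix version Proposition~\ref{Prop: localA}. In each application, the relevant local interval is determined by ratios of the form $(p_b - p_a)/(q_b - q_a)$ where $a < b$ are the indices in $\{0, 1, \ldots, l\}$ bracketing $i$ within $C''$ (and its implicit ``zero'' outside option).

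The key technical step is to verify that every such ratio $(p_b - p_a)/(q_b - q_a)$ with $0 \leq a < b \leq l$ lies in $[m_1, m_l]$, so that global concavity on $[m_1, m_l]$ passes to concavity on each local interval. This follows from a standard interlacing lemma: since $(p_b - p_a)/(q_b - q_a)$ is a weighted average of the consecutive ratios $m_{a+1}, m_{a+2}, \ldots, m_b$ (with positive weights $(q_j - q_{j-1})/(q_b - q_a)$), we obtain
\begin{equation*}
m_1 \leq m_{a+1} \leq \frac{p_b - p_a}{q_b - q_a} \leq m_b \leq m_l.
\end{equation*}
Hence the concavity of $F(m)m$ on $[m_1, m_l]$ covers every relevant subinterval encountered during the iterated removals, and chaining the inequalities from Proposition~\ref{Prop: local}/\ref{Prop: localA} yields $\pi(C_o) \geq \pi(C')$ for all $C' \in \mathcal{C}^P$. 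Therefore $C_o$ maximizes revenue over $\mathcal{C}^P$, which means $I_o$ is optimal.

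The main obstacle is the interlacing/containment step: one must be careful that when removing an interior element of a submenu (rather than the lowest-priced one), the ``new'' cut-point that appears is still controlled by the original endpoints $m_1$ and $m_l$. The weighted-average representation above handles this cleanly, but it is the only nontrivial piece — once it is in hand, Theorem~\ref{thm: Bertrand} and Proposition~\ref{Prop: local} combine to give the result mechanically.
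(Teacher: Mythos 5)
Your proof is correct and follows the paper's intended route: Theorem \ref{thm: Bertrand}(iii) reduces the problem to showing $\pi(C_{o})\geq \pi(C')$ for every $C'\in 2^{C_{o}}$, which the local propositions deliver under the stated concavity hypothesis. The only real difference is that you iterate one-element removals, which forces the (correct) interlacing/weighted-average step to control the cut-points of the intermediate menus; a single application of Proposition \ref{Prop: localA} to the pair $(C_{o},C')$ directly (after adjoining the top element $(p(A_{l}),\mathbb{E}_{\lambda_{A_{l}}}(X))$ to $C'$ if it is missing, using Step 4 of the proof of Theorem \ref{Theorem: Main}) makes that step unnecessary, since every relevant interval $[m_{\mu(j-1)+1}(C_{o}),m_{\mu(j)}(C_{o})]$ is already a subinterval of the hypothesized concavity region $[m_{1}(C_{o}),m_{l}(C_{o})]$.
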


We note that the discussion in Section \ref{Subsec:ImbalancesModel1} applies also to the model where sellers choose prices, i.e., our results hold also for  demand constrained markets. We omit the details for brevity. 

The results we provide in Section \ref{Sec: price-dis} show that the conditions of Corollary \ref{Corr:Bertrand-1-optimal} are necessary for the optimality of a $1$-separating menu. When $F(m)m$ is not convex, we can find parameters of the model such that a $1$-separating menu is not optimal. In addition, Proposition \ref{Corr:convex-concave} implies that we can analyze the important case where $F(m)m$ is convex on an interval $[a,m^{*}]$ and concave on the complementary interval $[m^{*},b]$ for some $m^{*} \in [a,b]$ (see Section \ref{sec:convex-concave} for more details). The following Corollary follows directly from Proposition  \ref{Corr:convex-concave}. 
\begin{corollary}
    Suppose that $I_{o}$ is implementable and let $C_{o} = \{(p(A_{1}) ,\mathbb{E}_{\lambda _{A_{1}}}(X)) ,\ldots  ,(p(A_{l}),\mathbb{E}_{\lambda _{A_{l}}}(X)) \} \in \varphi ^{P}(I_{o})$. Assume that $F(m)m$ is strictly convex on  $[a,m^{*}]$ and strictly concave on $[m^{*},b]$ for some $m^{*} \in [a,b]$. Let $1 \leq k \leq n-1$ be an integer. Then the optimal menu is $k$-separating or $k+1$-separating when 
    $$m^{*} \in  \left [\frac{p(A_{n-k}) - p(A_{n-k-1})}{\mathbb{E}_{\lambda _{A_{n-k}}}(X) - \mathbb{E}_{\lambda _{A_{n-k-1}}}(X)} , \frac{p(A_{n-k+1}) - p(A_{n-k})}{\mathbb{E}_{\lambda _{A_{n-k+1}}}(X) - \mathbb{E}_{\lambda _{A_{n-k}}}(X)} \right ].$$ 
\end{corollary}

\section{Conclusions} \label{Sec: summary}

In this paper we study optimal information disclosure policies for online platforms. We introduce two distinct two-sided market models. In the first model the sellers choose quantities, and in the second model the sellers make entry and pricing decisions. The platform's information disclosure  problem transforms into a constrained price discrimination problem, where the constraints are given by the equilibrium requirements and depend on the specific two-sided market model being studied. We use this equivalence to provide conditions on model parameters, under which a simple information structure  where the platform removes a certain portion of low quality sellers 
and does not share any information about the other sellers 
is revenue-optimal for the platform.

There are some interesting potential extensions for future work. For example, in practice, the platform and the buyers learn the sellers'  quality as they make their decisions. One possible extension of our work would be to incorporate learning into our setting. Another direction for future work is to introduce competition between platforms. In many industries, fierce competition between platforms has a first order effect on the market design choices made by platforms. Another possible extension is to introduce signaling by sellers. In some platforms sellers have tools to signal their own quality that can impact equilibrium outcomes.

Finally, a fourth interesting direction for future research is to incorporate search frictions in our setting. In some platforms (e.g., e-commerce platforms) search frictions play a significant role. In some of these platforms, because of rating inflation, the sellers' star rating does not provide substantial information about the sellers' quality (see, e.g., \cite{tadelis2016reputation}). In this case, the menu observed in practice sometimes looks similar to a $2$-separating menu: certified sellers, sellers that are not certified, and sellers that are banned. While the results in this paper show that a $1$-separating menu is optimal under an appropriate condition on demand elasticity, we conjecture that extending our setting to incorporate search costs would change the optimal menu. In particular, in order to mitigate the impact of search, a $2$-separating menu might be more attractive.

\newpage
\bibliographystyle{ecta}
\bibliography{Quality}

\newpage 

\appendix

\section{Appendix}

\subsection{Proofs of Section \ref{Sec: price-dis}}

We first introduce some definitions. A menu $C \in \mathcal{C}_{p}$ is called \textit{price-M} if for all $(p,q) \in [0,\infty) \times [0,\infty)$ such that $C \cup \{ (p,q) \} \in \mathcal{C}_{p}$, we have $p \leq p'$ for some $(p',q') \in C$. In words, a menu $C$ is price-M if it is not feasible to add a price-quality pair to $C$ with positive demand and a higher price than all the other prices in the menu $C$.

Step 3 in the proof of Theorem \ref{Theorem: Main} shows that the optimal menu (if it exists) is price-M. This also shows that Theorem \ref{Theorem: Main} holds under the following weaker version of the first condition of Definition \ref{Def: marketN} (the  regularity condition): For every price-M menu 
 $C =\left \{\left (p_{1} ,q_{1}\right ) ,\ldots  ,\left (p_{k} ,q_{k}\right )\right \}$ there exists a $1$-separating menu $\{ (p,q)\} \in \mathcal{C}_{1}$ such that $p \geq p_{k}$ and $q \geq q_{k}$.   
 
Recall that given some quality $q >0$, the price that maximizes the platform's revenue $p^{M}(q)$ is given by 
\begin{equation*} 
    p^{M}(q)=  \operatorname{argmax}_{p \geq 0}p \left (1 - F\left (\frac{p}{q} \right ) \right ).
    \end{equation*}
  Note that $p^{M}(q)$ is single-valued under the assumptions of Theorem \ref{Theorem: Main}. A $1$-separating menu $\{(p,q)\}$ is maximal in $\mathcal{C}_{1}$ if for every $ \{(p',q') \} \in \mathcal{C}_{1}$ such that $(p',q') \neq (p,q)$ we have $p > p'$ or $q > q'$. 
%\begin{lemma} \label{Lemma: price maximal}
%Let $C \in \mathcal{C}$ be a menu. If $C$ is not price-M then $C$ is not optimal.
%\end{lemma}

\begin{proof}[Proof of Theorem \ref{Theorem: Main}]
Let $C =\{(p_{i} ,q_{i})_{i =1}^{n}\} \in \mathcal{C}$ be a menu such that $p_{k} \leq p_{j}$ for all $k <j$ and $n>1$. We can assume\protect\footnote{
If for some $\left (p_{k} ,q_{k}\right )$ in $C$ we have $D_{k}(C) =0$, then the menu $C\backslash \{\left (p_{k} ,q_{k}\right )\}$ has the same total transaction value as the menu $C$. Thus, we can consider the menu $C\backslash \{\left (p_{k} ,q_{k}\right )\}$ instead of the menu $C$. 
} that the demand for each price-quality pair in $C$ has a positive mass. That is
\begin{equation*}D_{i}(C) =\int _{a}^{b}1_{\{mq_{i} -p_{i} \geq 0\}}1_{\{mq_{i} -p_{i} =\max _{i =1 ,\ldots  ,n}mq_{i} -p_{i}\}}F(dm) >0
\end{equation*}for all $1 \leq i \leq n$. Note that $D_{i}(C) >0$ for all $1 \leq i \leq n$ implies that $q_{k} <q_{j}$ for all $k <j$. 

\textbf{Step 1}. The total transaction value from the menu $C$ is given by \begin{equation*}\pi \left (C\right ) =\sum _{i =1}^{n}p_{i}\left (F(m_{i +1}) -F\left (m_{i}\right )\right )
\end{equation*}
where $m_{n +1} =b$ and the numbers $\{m_{i}\}_{i =2}^{n}$ satisfy $m_{i} \in [a ,b]$ for all $2 \leq i \leq n$ and
\begin{equation*}m_{i}q_{i} -p_{i} =m_{i}q_{i-1} -p_{i -1}
\end{equation*}where $q_{0} =p_{0} =0$. 
The number $m_{1}$ satisfies $m_{1}= \max \{a,p_{1}/q_{1}\}$.

\textbf{Proof of Step 1}. The proof of Step 1 is standard (see \cite{maskin1984monopoly}). We provide it here for completeness. 

Because $q_{n} >q_{j}$ for all $1 \leq j \leq n -1$, if for some $1 \leq j \leq n -1$ and $m \in [a ,b]$ we have 
\begin{equation*}m\left (q_{n} -q_{j}\right ) \geq p_{n} -p_{j}
\end{equation*}
then 
\begin{equation*}m^{ \prime }\left (q_{n} -q_{j}\right ) \geq p_{n} -p_{j}
\end{equation*}
for all $m^{ \prime } \in [m ,b]$. Thus, if for some $m \in [a ,b]$ we have 
\begin{equation} \label{Step 2.1} mq_{n} -p_{n} \geq \max \{\max _{1 \leq j \leq n -1}mq_{j} -p_{j} ,0\} 
\end{equation}
then inequality (\ref{Step 2.1}) holds for all $m^{ \prime } \in [m ,b]$. In other words, if a type $m$ chooses the price-quality pair $(p_{n} ,q_{n})$, then every type $m^{ \prime }$ with $m \leq m^{ \prime } \leq b$ chooses the price-quality pair $(p_{n} ,q_{n})$.  

Let 
\begin{equation*}W_{n} : =\{m \in [a ,b] :mq_{n} -p_{n} \geq \max \{\max _{1 \leq j \leq n -1}mq_{j} -p_{j} ,0\}\}
\end{equation*}
be the set of types that choose the price-quality pair $(p_{n} ,q_{n})$. Define $m_{n} =\min W_{n}$. $D_{n}(C) >0$ implies that the set $W_{n}$ is not empty. From the fact that $m \in W_{n}$ implies $m^{ \prime } \in W_{n}$ for all $m \leq m^{ \prime } \leq b$, $W_{n}$ equals the interval $[m_{n} ,b]$. Thus,
\begin{equation*}D_{n}(C) =\int _{a}^{b}1_{W_{n}}(m)F(dm) =F\left (b\right ) -F\left (m_{n}\right ) =F\left (m_{n +1}\right ) -F\left (m_{n}\right )
\end{equation*}
where $m_{n +1} : =b$ so \ $F(m_{n +1}) =1$. 

Define $m_{i} =\min W_{i}$ where we define the sets \begin{equation*}W_{i} : =\{m \in [a ,m_{i +1}] :mq_{i} -p_{i} \geq \max \{\sup _{1 \leq j \leq i -1}mq_{j} -p_{j} ,0\}
\end{equation*} 
for all $1 \leq i \leq n -1$. $D_{i}(C)>0$ implies that $W_{i}$ is not empty. Thus, $m_{i}$ is well defined. From the same argument as the argument above, if a type $m \in W_{i}$ chooses the price-quality pair $(p_{i} ,q_{i})$, then every type $m^{ \prime }$ with $m \leq m^{ \prime } \leq m_{i +1}$   chooses the price-quality pair $(p_{i} ,q_{i})$. Thus, $W_{i}$ equals the interval $[m_{i},m_{i +1}]$ and \begin{equation*}
    D_{i}(C)= \int_{a}^{b} 1_{W_{i}} (m) F(dm) = F(m_{i +1}) -F(m_{i}) > 0
 \end{equation*} for all $1 \leq i \leq n$.

Note that $W_{1} =\{m \in [a ,m_{2}] :mq_{1} -p_{1} \geq 0\}$. The continuity of the function $mq_{1} -p_{1}$ implies that $m_{1} =\min W_{1}$ satisfies $m_{1}= \max \{a,p_{1}/q_{1}\}$. Using continuity again and the definition of $m_{2}$ we conclude that $m_{2}q_{2} -p_{2} =m_{2}q_{1} -p_{1}$. Similarly, $m_{i}q_{i} -p_{i} =m_{i}q_{i-1} -p_{i -1}$ for all $2 \leq i \leq n$.

Thus, the total transaction value from the menu $C$ is given by \begin{equation*}\pi (C) =\sum _{i =1}^{n}p_{i}D_{i}(C) =\sum _{i =1}^{n}p_{i}\left (F\left (m_{i +1}\right ) -F\left (m_{i}\right )\right )
\end{equation*}
where $m_{n +1} =b$ and the numbers $\{m_{i}\}_{i =1}^{n}$ satisfy $m_{i} \in [a ,b]$ for all $1 \leq i \leq n$ and $m_{i}q_{i} -p_{i} =m_{i}q_{i-1} -p_{i -1}$, $q_{0} =p_{0} =0$.

%\textbf{Step 2}. The function $f(x ,y) =xF \left (\frac{x}{y} \right )$ is convex on $E =\{\left (x ,y\right ) :x/y \in [a ,b] ,$ $y >0\}$. 

%\textbf{Proof of Step 2. }Recall that the perspective function $\overline{f}(x ,y) =yg\genfrac{(}{)}{}{}{x}{y}$ is convex on $E$ whenever $g$ is convex on $[a ,b]$. Suppose that $g\left (x\right ) =F(x)x$. Then $g$ is convex on $[a ,b]$ from the theorem's assumption. Thus, \begin{equation*}\overline{f}(x ,y) =yg\genfrac{(}{)}{}{}{x}{y} =yF\genfrac{(}{)}{}{}{x}{y}\frac{x}{y} =xF\genfrac{(}{)}{}{}{x}{y} =f(x ,y)
%\end{equation*} is convex on $E$.

\textbf{Step 2}. Let $0 =d_{0} <d_{1} <\ldots  <d_{k}$ and $0 =z_{0} <\ldots  <z_{k}$. Assume that $\left (z_{i} -z_{i -1}\right )/\left (d_{i} -d_{i -1}\right ) \in [a ,b]$ for all $1 \leq i \leq k$. Then \begin{equation} \label{Fundem ineq} z_{k}F\genfrac{(}{)}{}{}{z_{k}}{d_{k}} \leq \sum _{i =1}^{k}\left (z_{i} -z_{i -1}\right )F\genfrac{(}{)}{}{}{z_{i} -z_{i -1}}{d_{i} -d_{i -1}} . 
\end{equation}

\textbf{Proof of Step 2. }
Note that the function $f(x ,y) =xF \left (\frac{x}{y} \right )$ is convex on $E =\{\left (x ,y\right ) :x/y \in [a ,b] ,$ $y >0\}$. To see this, recall that the perspective function 
\begin{equation*}yF\genfrac{(}{)}{}{}{x}{y}\frac{x}{y} =xF\genfrac{(}{)}{}{}{x}{y} =f(x ,y)
\end{equation*}
is convex on $E$ when $F(x)x$ is convex on $[a,b]$. 

From Jensen's inequality we have \begin{equation*}k^{ -1}\sum _{i =1}^{k}x_{i}F\genfrac{(}{)}{}{}{k^{ -1}\sum _{i =1}^{k}x_{i}}{k^{ -1}\sum _{i =1}^{k}y_{i}} =f\left (k^{ -1}\sum _{i =1}^{k}\left (x_{i} ,y_{i}\right )\right ) \leq k^{ -1}\sum _{i =1}^{k}f\left (x_{i} ,y_{i}\right ) =k^{ -1}\sum _{i =1}^{k}x_{i}F\genfrac{(}{)}{}{}{x_{i}}{y_{i}}
\end{equation*}
for all $(x_{1} ,\ldots  ,x_{k})$ and $(y_{1} ,\ldots  ,y_{k})$ such that $(x_{i} ,y_{i}) \in E$ for all $i =1 ,\ldots  ,k$. Thus, \begin{equation*}\sum _{i =1}^{k}x_{i}F\genfrac{(}{)}{}{}{\sum _{i =1}^{k}x_{i}}{\sum _{i =1}^{k}y_{i}} \leq \sum _{i=1}^{k}x_{i}F\genfrac{(}{)}{}{}{x_{i}}{y_{i}}.
\end{equation*}
Let $z_{i} -z_{i -1} =x_{i} \geq 0$ and $d_{i} -d_{i -1} =y_{i} >0$. Note that $\sum _{i =1}^{k}x_{i} =z_{k}$ and $\sum _{i =1}^{k}y_{i} =d_{k}$ to conclude that inequality (\ref{Fundem ineq}) holds.

\textbf{Step 3.} The menu that maximizes the total transaction value is price-M (see the beginning of this section for a definition of price-M menus).

\textbf{Proof of Step 3.} Assume that $C$ is not price-M. Then there exists a price-quality pair $\{(p_{n+1},q_{n+1}) \}$ such that $p_{n+1} > p_{n}$ and $C \cup \{(p_{n+1},q_{n+1}) \}$ belongs to $\mathcal{C}_{p}$, i.e., $D_{i}(C)>0$ for all $1\leq i \leq n+1$. From Step 1, we have $m_{i}q_{i} -p_{i} =m_{i}q_{i-1} -p_{i -1}$ for all $i$ (recall that $q_{0} =p_{0} =0$). This implies that  

\begin{equation*}m_{i} =\frac{p_{i} -p_{i -1}}{q_{i} -q_{i-1}}.
\end{equation*}
for all $i$. We have 
\begin{align*}\pi(C \cup \{(p_{n+1},q_{n+1}) \}) - \pi (C) &  = \sum \limits _{i =1}^{n}p_{i}\left (F\left (m_{i +1}\right ) -F(m_{i})\right ) +   p_{n+1}(1-F(m_{n+1})) \\
 & - \sum \limits _{i =1}^{n-1}p_{i}\left (F\left (m_{i +1}\right ) -F(m_{i})\right ) - p_{n}(1-F(m_{n})) \\
 &  = p_{n} \left (F\left (\frac{p_{n+1}-p_{n}}{q_{n+1}-q_{n}} \right ) - F\left (\frac{p_{n}-p_{n-1}}{q_{n}-q_{n-1}} \right ) \right ) + p_{n+1} \left (1 - F\left (\frac{p_{n+1}-p_{n}}{q_{n+1}-q_{n}} \right ) \right ) \\
 &  - p_{n} \left (1 - F\left (\frac{p_{n}-p_{n-1}}{q_{n}-q_{n-1}} \right ) \right )  > 0.\end{align*}
 Thus, $C$ is not optimal. The inequality follows from the facts that $p_{n+1} > p_{n}$ and $D_{n+1} = 1 - F\left ((p_{n+1}-p_{n})/(q_{n+1}-q_{n}) \right ) > 0$. We conclude that the menu that maximizes the total transaction value (if it exists) is price-M.

\textbf{Step 4.} Let $C^{ \ast } =\{(p_{n} ,q_{n})\}$. We have\begin{equation*}\pi (C) \leq \pi (C^{ \ast }).
\end{equation*}  

\textbf{Proof of Step 4.} From Step 1 we have 
\begin{align*}\pi (C) &  =\sum \limits _{i =1}^{n}p_{i}\left (F\left (m_{i +1}\right ) -F(m_{i})\right ) \\
 &  =\sum \limits _{i =1}^{n -1}p_{i}\left (F\left ( \frac{p_{i +1} -p_{i}}{q_{i +1} -q_{i}} \right ) -F \left ( \frac {p_{i} -p_{i -1}}{q_{i} -q_{i-1}} \right ) \right ) +p_{n}\left (1 -F\genfrac{(}{)}{}{}{p_{n} -p_{n -1}}{q_{n} -q_{n-1}}\right ) \\
 &  =p_{n} -\sum \limits _{i =1}^{n} (p_{i} -p_{i -1}) F \left ( \frac{p_{i} -p_{i -1}}{q_{i} -q_{i-1}} \right ) .\end{align*}
 The first equality follows from Step 1. In the second equality we use the fact that $F(m_{n +1}) =F(b) =1$.

 Let $C^{ \ast } =\{(p_{n} ,q_{n})\}$. Using Step $1$ again we have 
 \begin{equation*}\pi \left (C^{ \ast }\right ) =p_{n}\left (1 -F\genfrac{(}{)}{}{}{p_{n}}{q_{n}}\right )
\end{equation*}
Thus, we have $\pi (C) \leq \pi (C^{ \ast })$ if and only if 
\begin{equation} \label{Ineq:pfT1}
p_{n}F\genfrac{(}{)}{}{}{p_{n}}{q_{n}} \leq \sum _{i =1}^{n}\left (p_{i} -p_{i -1}\right )F\genfrac{(}{)}{}{}{p_{i} -p_{i -1}}{q_{i} -q_{i-1}}.
\end{equation}
From Step 1, $m_{i} =\left (p_{i} -p_{i -1}\right )/\left (q_{i} -q_{i-1}\right ) \in [a ,b]$ for all $1 \leq i \leq n$. Thus, from Step 2, inequality (\ref{Ineq:pfT1}) holds. We conclude that $\pi (C) \leq \pi (C^{ \ast })$.

\textbf{Step 5.} We have $p^{M}(q) \geq p$
for every $1$-separating menu $\{ (p,q) \}$ that is maximal in $\mathcal{C}_{1}$. 

\textbf{Proof of Step 5.}
We first show that for any two $1$-separating menus $\{(p,q)\}$ and $\{(p',q')\}$ we have $p^{M}(q) \geq p^{M}(q')$ whenever $q \geq q' >0 $. 

 Because $F(m)m$ is strictly convex on $[a,b]$, $p^{M}(q)$ is single-valued. In addition, we clearly have $a \leq p^{M}(q)/q < b$. Hence, we have 
 \begin{equation*}
 \max _{p \geq 0} p\left (1-F \left ( \frac{p}{q} \right ) \right )  = \max _{ qa \leq p } p\left (1-F \left ( \frac{p}{q} \right ) \right ).  
 \end{equation*}

Assume in contradiction that $p^{M}(q) < p^{M}(q')$ and $q \geq q'$. Then $ p^{M}(q)/q< p^{M}(q')/q'$. The first order conditions for the optimality of $p^{M}$  and the fact that the strict convexity of $F(m)m$ on $[a,b]$ implies that the function $F(m)+mf(m)$ is strictly increasing on $[a,b]$ yield 
\begin{align*}
    0 & \geq 1 - \left (F \left (\frac{p^{M}(q)}{q} \right ) + \frac{p^{M}(q)}{q} f \left (\frac{p^{M}(q)}{q} \right )   \right )  \\
    & > 1 - \left (F \left (\frac{p^{M}(q')}{q'} \right ) + \frac{p^{M}(q')}{q'} f \left (\frac{p^{M}(q')}{q'} \right )   \right ) = 0 
\end{align*}
which is a contradiction. We conclude that $p^{M}(q) \geq p^{M}(q')$ whenever $q \geq q' >0 $.

 Let $\{(p^{H},q^{H})\} \in \mathcal{C}_{1}$ be such that $p^{H} \geq p'$ for all $\{(p',q')\} \in \mathcal{C}_{1}$. 
and let $\{(p,q)\}$ be a maximal element in $\mathcal{C}_{1}$. From the definition of $p^{H}$ we have $p^{H} \geq p$. Because $\{(p,q)\}$ is maximal in $\mathcal{C}_{1}$ we have $q \geq q^{H}$. Thus, we have $p^{M}(q) \geq p^{M}(q^{H})$.  Because $\mathcal{C}$ is regular we have  $p^{M}(q^{H})  \geq  p^{H}$. We conclude that 
\begin{equation*}
    p \leq p^{H} \leq p^{M}(q^{H}) \leq p^{M}(q)
\end{equation*}
which proves Step 5.

\textbf{Step 6.} There exists a $1$-separating menu $C^{ \prime } \in \mathcal{C}$ such that $\pi (C^{ \ast }) \leq \pi (C^{ \prime })$ where $C^{ \ast } =\{(p_{n} ,q_{n})\}$. 

\textbf{Proof of Step 6.} Because $\mathcal{C}$ is regular and $C =\{(p_{i} ,q_{i})_{i =1}^{n}\} \in \mathcal{C}_{p}$, there exists a $1$-separating menu $\{(p',q')\} \in \mathcal{C}_{1}$ such that $p' \geq p_{n}$ and $q' \geq q_{n}$. We consider two cases.

\textbf{Case 1}. $\{(p',q')\}$ is maximal in $\mathcal{C}_{1}$. 

From Step 5 we have $p' \leq p^{M}(q')$. We conclude that $p_{n} \leq p' \leq p^{M}(q')$. The convexity of $F(m)m$ on $[a,b]$ implies that $p \left (1 -F \left (\frac{p}{q' } \right ) \right )$ is increasing in $p$ on $[p_{n},p^{M}(q')]$. Thus,  
\begin{align*} p_{n} \left (1 -F \left (\frac{p_{n}}{q_{n}} \right ) \right ) 
 &  \leq  p_{n} \left (1 -F \left (\frac{p_{n}}{q'} \right ) \right ) \leq   p' \left (1 -F \left (\frac{p'}{q'} \right ) \right ). 
 \end{align*}
Thus, the menu $\{(p',q')\} \in \mathcal{C}_{1}$ yields more total transaction value than the menu  $\{(p_{n},q_{n})\}$.

 \textbf{Case 2}. $\{(p',q')\}$ is not maximal in $\mathcal{C}_{1}$.

In this case, because $\mathcal{C}_{1}$ is compact, there exists a menu $\{(p,q)\} \in \mathcal{C}_{1}$ such that $p \geq p'$ and $q \geq q'$, and  $\{(p,q)\}$ is maximal in $\mathcal{C}_{1}$. From Step 5 we have $p \leq p^{M}(q)$. 
 
 Hence, we have $p_{n} \leq p \leq p^{M}(q)$ which implies  
 \begin{align*} p_{n} \left (1 -F \left (\frac{p_{n}}{q_{n}} \right ) \right ) 
 &  \leq  p_{n} \left (1 -F \left (\frac{p_{n}}{q} \right ) \right ) \leq   p \left (1 -F \left (\frac{p}{q} \right ) \right ). 
 \end{align*}
That is, the menu $\{(p,q)\} \in \mathcal{C}_{1}$ yields more total transaction value than the menu  $\{(p_{n},q_{n})\}$. This proves Step 6.

Step 4 and Step 6 prove that for any menu $C =\{(p_{i} ,q_{i})_{i =1}^{n}\} \in \mathcal{C}$ there exists  a $1$-separating menu $C^{ \prime } \in \mathcal{C}$ such that $\pi (C) \leq \pi (C')$.  Thus, 
\begin{equation*}\sup _{C \in \mathcal{C}}\pi \left (C\right ) \leq \max _{C \in \mathcal{C}_{1}}\pi \left (C\right )
\end{equation*}
which proves the Theorem. The maximum on the right side of the last inequality is attained because the distribution function $F$ is continuous and $\mathcal{C}_{1}$ is a compact set.

From Case 2 in Step 6, for every $1$-separating menu $C$ that is not maximal in $\mathcal{C}_{1}$ there exists a $1$-separating menu that is maximal in $\mathcal{C}_{1}$ that yields more total transaction value than $C$. We conclude that the optimal $1$-separating menu is maximal in $\mathcal{C}_{1}$. 
\end{proof}

\begin{proof}[Proof of Proposition \ref{Prop: localA}]
Clearly $C \in \mathcal{C}_{p}$ implies $C' \in \mathcal{C}_{p}$.

Assume that $\mu(k) = n$.  From Step 1 in the proof of Theorem \ref{Theorem: Main} and using the fact that $\mu(k)=n$ we have 
\begin{align*}
    \pi(C) - \pi(C') & = p_{n} -\sum \limits _{i =1}^{n} (p_{i} -p_{i -1}) F \left ( \frac{p_{i} -p_{i -1}}{q_{i} -q_{i-1}} \right ) \\
    & - \left( p_{\mu(k)} -\sum \limits _{i = i}^{k} (p_{\mu(i)} -p_{\mu(i -1)}) F \left ( \frac{p_{\mu(i)} -p_{\mu(i-1)}}{q_{\mu(i)} -q_{\mu(i-1)}} \right )\right) \\
    & = -\sum \limits _{i =1}^{n} (p_{i} -p_{i -1}) F \left ( \frac{p_{i} -p_{i -1}}{q_{i} -q_{i-1}} \right ) + \sum \limits _{i = 1}^{k} (p_{\mu(i)} -p_{\mu(i -1)}) F \left ( \frac{p_{\mu(i)} -p_{\mu(i-1)}}{q_{\mu(i)} -q_{\mu(i-1)}} \right ).
    \end{align*}
    Let $j$ be such that $\mu(j) - \mu(j-1) = d > 1$ and assume that $F(m)m$ is convex on $[m_{\mu(j-1)+1}(C),m_{\mu(j)}(C)] $.  From Step 2 in the proof of Theorem \ref{Theorem: Main} the function $f(x ,y) =xF \left (\frac{x}{y} \right )$ is convex on $E=\{\left (x ,y\right ) :x/y \in [m_{\mu(j-1)+1}(C),m_{\mu(j)}(C)] ,$ $y >0\}$. Hence, using Jensen's inequality with the points $(x_{i},y_{i})=(p_{i} - p_{i-1},q_{i}-q_{i-1}) \in E$ for $i=\mu(j-1)+1,\ldots,\mu(j)$ yields 
    $$\sum _{i=\mu (j-1)+1} ^{\mu (j)} d^{-1}f(x_{i},y_{i}) \geq f \left ( d^{-1}\sum _{i=\mu (j-1)+1} ^{\mu (j)} x_{i}, d^{-1} \sum _{i=\mu (j-1)+1} ^{\mu (j)} y_{i} \right ),$$ 
    i.e., 
    $$   \sum \limits _{i =\mu(j-1)+1}^{\mu(j)} (p_{i} -p_{i -1}) F \left ( \frac{p_{i} -p_{i -1}}{q_{i} -q_{i-1}} \right ) \geq  
    (p_{\mu(j)} -p_{\mu(j-1)}) F \left ( \frac{p_{\mu(j)} -p_{\mu(j-1)}}{q_{\mu(j)} -q_{\mu(j-1)}} \right ). $$
   Summing the last inequality over all $j$ such that $\mu(j) - \mu(j-1) > 1$ shows that $\pi(C') \geq \pi(C)$. The case where $F(m)m$ is concave is proven by an analogous argument. 
    \end{proof}

    \begin{proof}[Proof of Proposition \ref{Prop: localB}]\label{Proof of Prop: localB}
    Assume that $F(m)m$ is convex on $[m_{1}(C),m_{n}(C)]$ and $m_{n}(C') \leq p^{M} (1)$. From Proposition \ref{Prop: localA} the menu $C'' = \{(p_{n} , q_{n} )\}$ yields more revenue than the menu $C$. We will now show that the menu $C'$ yields more revenue than the menu $C''$.
    
    Differentiating $\pi (C'')$ with respect to $p_{n}$ yields 
 $1 - f(m_{n}(C''))m_{n}(C'')-F(m_{n}(C'')$. Note that $m_{n}(C'') \leq m_{n}(C') \leq  p^{M}(1)$. 
 
 From the first order condition we have $1 - f(p^{M}(1))p^{M}(1) - F(p^{M}(1)) = 0$. Quasi-concavity of $m(1-F(m))$ implies that $1 - f(m_{n}(C''))m_{n}(C'')-F(m_{n}(C'')) \geq 0$ as $m_{n}(C'') \leq p^{M}(1)$. Because $m_{n}(C') \leq  p^{M}(1)$ we conclude that the partial derivative of $\pi(C'')$ with respect to $p_{n}$ is positive on  $[p_{n},p_{n}']$.  Hence, $\pi(C') \geq \pi(C'')$ which concludes the proof. The concave case is proven in a similar manner and is therefore omitted. 
    \end{proof}

\begin{proof}[Proof of Proposition \ref{Prop: Converse}]
We prove the result for $k=2$. The proof follows from the same arguments for $k > 2$. Suppose that $g\left (z\right ) =F(z)z$ is not convex on $(a ,b)$. Then there exist non-negative numbers $z_{1} \in (a ,b)$, $z_{2} \in (a ,b)$ and $0 <\lambda  <1$ such that \begin{equation*}g\left (\lambda z_{1} +\left (1 -\lambda \right )z_{2}\right ) >\lambda g\left (z_{1}\right ) +\left (1 -\lambda \right )g\left (z_{2}\right ) .
\end{equation*}Let $k_{1} ,k_{2} ,d_{1} ,d_{2}$, and $0 <\theta  <1$ be such that $k_{1} \geq 0$, $k_{2} \geq 0$, $d_{1} >0$, $d_{2} >0$, $d_{1}z_{1} =k_{1}$, $d_{2}z_{2} =k_{2}$, and $\theta d_{1} =\lambda \left (\theta d_{1} +\left (1 -\theta \right )d_{2}\right )$. 

Note that $1 -\lambda  =\left (1 -\theta \right )d_{2}/\left (\theta d_{1} +\left (1 -\theta \right )d_{2}\right )$. 

Denote $d_{\theta } : =\theta d_{1} +\left (1 -\theta \right )d_{2}$ and $k_{\theta } : =\theta k_{1} +\left (1 -\theta \right )k_{2}$. Note that \begin{equation*}\lambda z_{1} +\left (1 -\lambda \right )z_{2} =\frac{\theta d_{1}}{d_{\theta }}\frac{k_{1}}{d_{1}} +\frac{\left (1 -\theta \right )d_{2}}{d_{\theta }}\frac{k_{2}}{d_{2}} =\frac{k_{\theta }}{d_{\theta }} .
\end{equation*}

We have 
\begin{equation*}\theta d_{1}g\genfrac{(}{)}{}{}{k_{1}}{d_{1}} +\left (1 -\theta \right )d_{2}g\genfrac{(}{)}{}{}{k_{2}}{d_{2}} =d_{\theta }\left (\frac{\theta d_{1}}{d_{\theta }}g\genfrac{(}{)}{}{}{k_{1}}{d_{1}} +\frac{\left (1 -\theta \right )d_{2}}{d_{\theta }}g\genfrac{(}{)}{}{}{k_{2}}{d_{2}}\right ) <d_{\theta }g\genfrac{(}{)}{}{}{k_{\theta }}{d_{\theta }} .
\end{equation*}We conclude that the function $f(x ,y) : =yg\genfrac{(}{)}{}{}{x}{y} =xF\genfrac{(}{)}{}{}{x}{y}$ is not convex on $E^{ \ast } =\{\left (x ,y\right ) :x/y \in (a ,b) ,y >0\}$.

 Since $f$ is continuous and not convex it is not midpoint convex.\protect\footnote{
Recall that the function $f :E^{ \ast } \rightarrow \mathbb{R}$ is midpoint convex if for all $e_{1} ,e_{2} \in E^{ \ast }$ we have $f\left (\left (e_{1} +e_{2}\right )/2\right ) \leq \left (f(e_{1}) +f(e_{2})\right )/2$. A continuous midpoint convex function is convex. We conclude that $f$ is not midpoint convex.
}

 Thus, there exists $\left (x_{1} ,y_{1}\right ) \in E^{ \ast }$ and $\left (x_{2} ,y_{2}\right ) \in E^{ \ast }$ such that 
 \begin{equation} \label{Ineq3}
 f\left (\frac{\left (x_{1} ,y_{1}\right )}{2} +\frac{\left (x_{2} ,y_{2}\right )}{2}\right ) >\frac{f\left (x_{1} ,y_{1}\right )}{2} +\frac{f\left (x_{2} ,y_{2}\right )}{2}. 
\end{equation} 
If $x_{1} =x_{2} =0$ then the left-hand-side and the right-hand-side of the last inequality equal $0$ which is a contradiction, so we have $x_{1} +x_{2} >0$. 

Assume in contradiction that $\frac{x_{2}}{y_{2}} =\frac{x_{1}}{y_{1}}$. We have
\begin{align*}f\left (\frac{\left (x_{1} ,y_{1}\right )}{2} +\frac{\left (x_{2} ,y_{2}\right )}{2}\right ) >\frac{f\left (x_{1} ,y_{1}\right )}{2} +\frac{f\left (x_{2} ,y_{2}\right )}{2} \\
 \Leftrightarrow \left (x_{1} +x_{2}\right )F\genfrac{(}{)}{}{}{x_{1} +x_{2}}{y_{1} +y_{2}} >x_{1}F\genfrac{(}{)}{}{}{x_{1}}{y_{1}} +x_{2}F\genfrac{(}{)}{}{}{x_{2}}{y_{2}} \\
 \Leftrightarrow F\genfrac{(}{)}{}{}{x_{1} +x_{2}}{y_{1} +y_{2}} >F\genfrac{(}{)}{}{}{x_{1}}{y_{1}} \\
 \Rightarrow \frac{x_{1} +x_{2}}{y_{1} +y_{2}} >\frac{x_{1}}{y_{1}} \Leftrightarrow \frac{x_{2}}{y_{2}} >\frac{x_{1}}{y_{1}} ,\end{align*}which is a contradiction. Thus, $\frac{x_{2}}{y_{2}} \neq \frac{x_{1}}{y_{1}}$.

Assume without loss of generality that $\frac{x_{2}}{y_{2}} >\frac{x_{1}}{y_{1}}$. Then $x_{2} >0$. 

 Let $p_{2} >p_{1}$ and $q_{2} >q_{1}$ be such that $p_{2} - p_{1}=x_{2} >0$, $p_{1}=x_{1}$, $q_{2} - q_{1} = y_{2}$ and  $y_{1} =q_{1}$. Define the menus $C =\{\left (p_{1} ,q_{1}\right ) ,\left (p_{2} ,q_{2}\right )\}$,  $C^{ \ast } =\{\left (p_{1} ,q_{1}\right )\}$, and $C^{ \ast  \ast } =\{\left (p_{2} ,q_{2}\right )\}$. Let $\mathcal{C} =\{C ,C^{ \ast } ,C^{ \ast  \ast }\}$. We now show that $D_{1}(C) >0$, $D_{2}\left (C\right ) >0$ and that $C$ yields more total transaction value than the $1$-separating menus $C^{ \ast }$ and $C^{ \ast  \ast }$.

Note that $\frac{x_{2}}{y_{2}} >\frac{x_{1}}{y_{1}}$ implies
\begin{equation*} m_{2} =\frac{p_{2} -p_{1}}{q_{2} -q_{1}} >\frac{p_{1}}{q_{1}} =m_{1}
\end{equation*}
where $m_{1}$ and $m_{2}$ are defined in Step 1 in the proof of Theorem \ref{Theorem: Main}. 

Since $F$ is supported on $[a ,b]$, $F$ is strictly increasing on $[a ,b]$. Note that $m_{1}$ and $m_{2}$ belong to $(a ,b)$ so $m_{2} >m_{1}$ implies that $F(m_{2}) >F(m_{1})$. We have $D_{1}(C) =F(m_{2}) -F(m_{1}) >0$. In addition, because $m_{2} =x_{2}/y_{2}$ and $\left (x_{2} ,y_{2}\right ) \in E^{ \ast }$ we have $m_{2} <b$, so $D_{2}\left (C\right ) =1 -F(m_{2}) >0$.

Inequality (\ref{Ineq3}) implies that
\begin{equation*}p_{2}F\genfrac{(}{)}{}{}{p_{2}}{q_{2}} >\left (p_{2} -p_{1}\right )F\genfrac{(}{)}{}{}{p_{2} -p_{1}}{q_{2} -q_{1}} +p_{1}F\genfrac{(}{)}{}{}{p_{1}}{q_{1}} .
\end{equation*}
Because $D_{1}(C) >0$ and $D_{2}\left (C\right ) >0$, from Step 4 in the proof of Theorem \ref{Theorem: Main}, the last inequality implies $\pi (C) >\pi (C^{ \ast \ast })$ where $C^{ \ast  \ast } =\{\left (p_{2} ,q_{2}\right )\}$. 

The menu $C^{ \ast } =\{\left (p_{1} ,q_{1}\right )\}$ does not maximize the total transaction value  because\begin{equation*}\pi (C^{  \ast }) =p_{1}\left (1 -F\genfrac{(}{)}{}{}{p_{1}}{q_{1}}\right ) <p_{2}\left (1 -F\genfrac{(}{)}{}{}{p_{2} -p_{1}}{q_{2} -q_{1}}\right ) +p_{1}\left (F\genfrac{(}{)}{}{}{p_{2} -p_{1}}{q_{2} -q_{1}} -F\genfrac{(}{)}{}{}{p_{1}}{q_{1}}\right ) =\pi \left (C\right )
\end{equation*}
where the equalities follow from Step 1 in the proof of Theorem \ref{Theorem: Main}. 

We conclude that the $2$-separating menu $C$ yields more total transaction value than the $1$-separating menus $C^{ \ast }$ and $C^{ \ast  \ast }$.   
\end{proof}

\begin{proof}[Proof of Proposition \ref{prop:converse-regular}]
(i) Let $q_{2} >q_{1}$, $ bq_{2} > p_{2} >p_{1}$,  and $C =\{\left (p_{1} ,q_{1}\right ) ,\left (p_{2} ,q_{2}\right )\} \in \mathcal{C}_{2}$,  $C^{ \ast } =\{\left (p_{3} ,q_{2}\right )\} \in \mathcal{C}_{1}$ where $p_{3} = (b  - \epsilon)q_{2} $ for some $\epsilon>0$ . Let $\mathcal{C} =\{C ,C^{ \ast } \}$. Then we can find a small $\epsilon>0$ such that  condition (i) of Definition \ref{Def: marketN} is satisfied and
\begin{equation*}\pi (C^{ \ast   }) =p_{3}\left (1 -F\genfrac{(}{)}{}{}{p_{3}}{q_{2}}\right ) <p_{2}\left (1 -F\genfrac{(}{)}{}{}{p_{2} -p_{1}}{q_{2} -q_{1}}\right ) +p_{1}\left (F\genfrac{(}{)}{}{}{p_{2} -p_{1}}{q_{2} -q_{1}} -F\genfrac{(}{)}{}{}{p_{1}}{q_{1}}\right ) =\pi \left (C\right ).
\end{equation*}

(ii) Let $p_{2} >p_{1}$ and $q_{2} >q_{1}$ by such that $p^{M}_{1} > p_{1}$ and  $C =\{\left (p_{1} ,q_{1}\right ) ,\left (p_{2} ,q_{2}\right )\} \in \mathcal{C}_{1}$, $C^{ \ast } =\{\left (p_{1} ,q_{1}\right )\} \in \mathcal{C}_{2}$. Let $\mathcal{C} =\{C ,C^{ \ast } \}$. Then condition (ii) of Definition \ref{Def: marketN} is satisfied and
 \begin{equation*}\pi (C^{ \ast   }) =p_{1}\left (1 -F\genfrac{(}{)}{}{}{p_{1}}{q_{1}}\right ) <p_{2}\left (1 -F\genfrac{(}{)}{}{}{p_{2} -p_{1}}{q_{2} -q_{1}}\right ) +p_{1}\left (F\genfrac{(}{)}{}{}{p_{2} -p_{1}}{q_{2} -q_{1}} -F\genfrac{(}{)}{}{}{p_{1}}{q_{1}}\right ) =\pi \left (C\right )
\end{equation*}
which completes the proof. 
\end{proof}

   \begin{proof} [Proof of Proposition \ref{Corr:convex-concave}]
 Suppose that $m^{*} \in [m_{n-k}(C),m_{n-k+1}(C)].$  Let $C^{*}$ be the optimal menu. Then $(p_{n},q_{n}) \in C^{*}$ (see Step 3 in the proof of Theorem \ref{Theorem: Main}). Assume in contradiction that $C^{*}$ is a $(k+d)$-separating menu for $d \geq 2$.  Consider the $k+1$-separating menu $C'$ that consists of the $k+1$ highest price-quality pairs in $C^{*}$. Then Proposition \ref{Prop: localA} implies that $C'$ yields more revenue than $C^{*}$ if $F(m)m$ is convex on $[a,m_{d}(C^{*})]$. 
 
 We will now show that $m_{d}(C^{*}) \leq m_{n-k}(C)$. First note that $d \leq n-k$ so it is enough to show that $m_{d}(C^{*}) \leq m_{d}(C)$. Let $x_{i}/y_{i}$ be an increasing sequence of numbers. Then $x_{k}/y_{k} \geq \sum _{j \in W_{k}} x_{j}/ \sum _{j \in W_{k}} y_{j}$ for any $k$ and $W_{k} \subseteq \{1,\ldots,k\}$. Letting $x_{i} = p_{i}-p_{i-1}$, $y_{i} = q_{i}-q_{i-1}$ we note that $x_{i}/y_{i}$ is increasing and we can find a set $W_{d} \subseteq \{1,\ldots,d\}$ such that $ \sum _{j \in W_{d}} x_{j}/ \sum _{j \in W_{d}} y_{j} = m_{d}(C^{*})$. Hence,  $m_{d}(C) = x_{d}/y_{d} \geq  \sum _{j \in W_{d}} x_{j}/ \sum _{j \in W_{d}} y_{j} = m_{d}(C^{*})$ for a suitable $W_{d}$. 
 
 An analogous argument shows that if $C^{*}$ is $(k-d)$-separating for some $d \geq 1$ we can find a $k$-separating menu $C''$ that yields more revenue than $C^{*}$. We conclude that the optimal menu is $k$-separating or $k+1$-separating. 
\end{proof}

\subsection{Proofs of Section \ref{Sec: Model 1 quant}}

We first prove the following Lemma: 
 \begin{lemma} \label{Lem: expected sellers}
 Fix an information structure $I =\{B_{1} ,B_{2} ,\ldots  ,B_{n}\}$ in $\mathbb{I}(I_{o})$. Then, for every positive pricing function $\boldsymbol{p}$ we have \begin{equation*}\mathbb{E}_{\lambda _{B_{i}}}(X)  =\frac{\int _{B_{i}}x(k(x))^{ -1/\alpha}\phi (dx)}{\int _{B_{i}}(k(x))^{ -1/\alpha}\phi (dx)}.
\end{equation*} 
The probability measure $\lambda_{B_{i}}$ is given in Equation (\ref{Eq: lambda}) in Section \ref{Sec: Model 1 quant}. 
That means the expected sellers' qualities do not depend on the prices. 
  \end{lemma}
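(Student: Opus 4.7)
The plan is to compute the equilibrium quantity function $g(x, p(B_i))$ explicitly from the sellers' first-order conditions, substitute it into the rational expectations formula (Equation (\ref{Eq: lambda})), and observe that the price factor cancels out of both the numerator and denominator.

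First I would solve the sellers' problem: since $U(x, h, p(B_i)) = h p(B_i) - k(x) h^{\alpha+1}/(\alpha+1)$ is strictly concave in $h$ (because $\alpha > 0$ and $k(x) > 0$), the first-order condition $p(B_i) - k(x) h^{\alpha} = 0$ pins down the unique maximizer
\begin{equation*}
g(x, p(B_i)) \;=\; \bigl(p(B_i)\bigr)^{1/\alpha}\, \bigl(k(x)\bigr)^{-1/\alpha}.
\end{equation*}
Crucially, this factorizes as a function of the price times a function of the seller's type.

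Next I would substitute this expression into the definition of $\lambda_{B_i}$ given in Equation (\ref{Eq: lambda}). For any measurable $A \subseteq B_i$,
\begin{equation*}
\lambda_{B_i}(A) \;=\; \frac{\int_A g(x, p(B_i))\, \phi(dx)}{\int_{B_i} g(x, p(B_i))\, \phi(dx)} \;=\; \frac{\bigl(p(B_i)\bigr)^{1/\alpha}\int_A (k(x))^{-1/\alpha}\, \phi(dx)}{\bigl(p(B_i)\bigr)^{1/\alpha}\int_{B_i} (k(x))^{-1/\alpha}\, \phi(dx)}.
\end{equation*}
Since $p(B_i) > 0$ (the pricing function is positive) and $k$ is bounded below by a positive number (so both integrals are well-defined and the denominator is positive), the factor $(p(B_i))^{1/\alpha}$ cancels. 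Thus $\lambda_{B_i}$ is independent of the pricing function.

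Finally, I would take expectations: since $\lambda_{B_i}$ has the Radon--Nikodym-style density $(k(x))^{-1/\alpha}/\int_{B_i}(k(y))^{-1/\alpha}\phi(dy)$ with respect to $\phi$ restricted to $B_i$, integration yields
\begin{equation*}
\mathbb{E}_{\lambda_{B_i}}(X) \;=\; \int_{B_i} x\, \lambda_{B_i}(dx) \;=\; \frac{\int_{B_i} x\,(k(x))^{-1/\alpha}\, \phi(dx)}{\int_{B_i} (k(x))^{-1/\alpha}\, \phi(dx)},
\end{equation*}
which is the claimed formula. There is no real obstacle here: the entire argument hinges on the multiplicative separability of $g(x, p(B_i))$ in $x$ and $p(B_i)$, which is a direct consequence of the chosen cost structure $k(x)h^{\alpha+1}/(\alpha+1)$.
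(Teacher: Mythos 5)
Your proposal is correct and follows essentially the same route as the paper's proof: solve the seller's first-order condition to get $g(x,p(B_i)) = (p(B_i)/k(x))^{1/\alpha}$, substitute into the rational-expectations formula for $\lambda_{B_i}$, and cancel the common factor $(p(B_i))^{1/\alpha}$. The only (welcome) addition is that you make the concavity of $U$ in $h$ and the positivity of the denominator explicit, which the paper leaves implicit.
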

 
\begin{proof}[Proof of Lemma \ref{Lem: expected sellers}]
Fix an information structure $I =\{B_{1} ,B_{2} ,\ldots  ,B_{n}\}$ in $\mathbb{I}(I_{o})$. 

Given a positive pricing function $\boldsymbol{p}$, the optimal quantity of a seller $x$ in $B_{i}$, $g(x ,p(B_{i})) =\ensuremath{\operatorname*{argmax}}_{h \in \mathbb{R}_+}U(x ,h ,p(B_{i}))$ is given by
\begin{equation} \label{supply function}  
g(x,p(B_{i})) =\left (\frac{p(B_{i})}{k(x)}\right )^{1/\alpha }.
\end{equation}
Hence, we have
\begin{equation*}\mathbb{E}_{\lambda _{B_{i}}}(X) =\int _{B_{i}}x\lambda _{B_{i}}(dx) =\frac{\int _{B_{i}}xg(x ,p(B_{i}))\phi (dx)}{\int _{B_{i}}g(x ,p(B_{i}))\phi (dx)} =\frac{\int _{B_{i}}x(k(x))^{ -1/\alpha}\phi (dx)}{\int _{B_{i}}(k(x))^{ -1/\alpha}\phi (dx)}.
\end{equation*}
Thus, the expected sellers' quality $\mathbb{E}_{\lambda _{B_{i}}}(X)$ does not depend on the prices when the pricing function is positive.  
\end{proof}
 
 \begin{proof}[Proof of Proposition \ref{Prop: unique equilibrium}] 
For the rest of the proof except for Step 3, we fix an information structure $I =\{B_{1} ,B_{2} ,\ldots  ,B_{n}\}$ in $\mathbb{I}(I_{o})$ and assume that $\mathbb{E}_{\lambda _{B_{1}}}(X) < \ldots  < \mathbb{E}_{\lambda _{B_{n}}}(X)$ where the expected sellers' quality $\mathbb{E}_{\lambda _{B_{i}}}(X)$ is given in Lemma \ref{Lem: expected sellers}. 

Let $\boldsymbol{P}$ be the set of all pricing functions such that the demand for each set $B_{i} \in I$, $D_{I}(B_{i},\boldsymbol{p})$ is greater than $0$, each price is greater than $0$, and the prices are ordered according to an ascending order.
That is, 
\begin{equation*}\boldsymbol{P} =\{\boldsymbol{p} \in \mathbb{R}_{+}^{n} :D_{I}(B_{i} ,\boldsymbol{p}) > 0 \text{ for all }  i =1 ,\ldots  ,n, 0 < p(B_{1})<\ldots<p(B_{n})\} .
\end{equation*}

To simplify notation, for the rest of the proof we denote $p_{i} =p(B_{i})$, $p^{\prime}_{i} =p^{\prime}(B_{i})$, $s_{i}(p_{i}) =S_{I}(B_{i} ,p(B_{i}))$, $\mathbb{E}_{\lambda _{B_{i}}}(X)=q_{i}$, and $d_{i}(\boldsymbol{p}) =D_{I}(B_{i} ,\boldsymbol{p})$. Note that $\boldsymbol{p} \in \boldsymbol{P}$ implies $0<q_{1}<\ldots < q_{n}$ (recall that Lemma \ref{Lem: expected sellers} implies that the expected sellers' quality $q_{i}$ does not depend on the prices).  

%The abbreviations conv, cl and int denote convex hull, closure, and interior, respectively. 

Define the function $\psi :\boldsymbol{P} \rightarrow \mathbb{R}$ by  
\begin{equation}\psi (\boldsymbol{p}) =\sum _{i =1}^{n}\frac{p_{i}^{\frac{\alpha  +1}{\alpha }}\int _{B_{i}}k(x)^{-1/\alpha }\phi (dx)}{(1 +1/\alpha )} - p_{n} +\sum _{i =0}^{n -1}F_{2}\genfrac{(}{)}{}{}{p_{i +1} -p_{i}}{q_{i +1} -q_{i}}(q_{i +1} -q_{i})
\end{equation}
where $F_{2}(x) =\int _{a}^{x}F(m)dm$ is the antiderivative of $F$ and $q_{0}=p_{0}=0$. Note that $\boldsymbol{p} \in \boldsymbol{P}$ implies that for every $1 \leq i \leq n-1$ we have $a \leq (p_{i+1}-p_{i})/(q_{i+1}-q_{i}) \leq b$ (see Step 1 in the proof of Theorem \ref{Theorem: Main}). Because the function $F$ is continuous, the fundamental theorem of calculus implies that the function $F_{2}$ is differentiable and $F_{2}^{ \prime } =F$.  Thus, $\psi $ is continuously differentiable. 

Let $\nabla \psi$ be the gradient of $\psi$ and let $\nabla _{i}\psi$ be the $i$th element of the gradient. A direct calculation shows that for $1 \leq i \leq n-1$ we have
\begin{align*} \nabla _{i}\psi (\boldsymbol{p}) &  = p_{i}^{1/\alpha }\int _{B_{i}}k(x)^{-1/\alpha} \phi (dx) -F_{2}^{ \prime }\genfrac{(}{)}{}{}{p_{i +1} -p_{i}}{q_{i +1} -q_{i}} +F_{2}^{ \prime }\genfrac{(}{)}{}{}{p_{i} -p_{i -1}}{q_{i} -q_{i -1}} \\
 & =  p_{i}^{1/\alpha }\int _{B_{i}}k(x)^{-1/\alpha} \phi (dx) -F\genfrac{(}{)}{}{}{p_{i +1} -p_{i}}{q_{i +1} -q_{i}} +F\genfrac{(}{)}{}{}{p_{i} -p_{i -1}}{q_{i} -q_{i -1}} \\
 &  =s_{i}(p_{i}) -d_{i}(\boldsymbol{p}) .\end{align*} 
 The last equality follows from Step 1 and Step 4 in the proof of Theorem 1, the fact that $\boldsymbol{p} \in \boldsymbol{P}$, and Equation (\ref{supply function}) (see the proof of Lemma \ref{Lem: expected sellers}).  Similarly, 
 \begin{equation*} \nabla _{n}\psi (\boldsymbol{p}) = p_{n}^{1/\alpha }\int _{B_{n}}k(x)^{-1/\alpha} \phi (dx) -1 + F\genfrac{(}{)}{}{}{p_{n} - p_{n-1}}{q_{n} - q_{n -1}} =s_{n}(p_{i}) -d_{n}(\boldsymbol{p}).
\end{equation*}
Thus, the excess supply function is given by $\nabla \psi (\boldsymbol{p}) =\left (\nabla _{1}\psi(\boldsymbol{p}) ,\ldots  ,\nabla _{n}\psi(\boldsymbol{p})\right )$ where $\nabla _{i}\psi(\boldsymbol{p})=s_{i}(p_{i}) -d_{i}(\boldsymbol{p})$
for all $i$ from $1$ to $n$. Note that $\nabla \psi (\boldsymbol{p})=0$ implies that $(I,\boldsymbol{p})$ is implementable.  

Our goal is to prove that $(I,\boldsymbol{p})$ is implementable if and only if $\boldsymbol{p}$ is the unique minimizer of $\psi$.    
 To show that $\psi$ has at most one minimizer we prove that $\psi$ is strictly convex on the convex set $\boldsymbol{P}$. We proceed with the following steps:
 
 \textbf{Step 1.} The set $\boldsymbol{P}$ is bounded, convex and open in $\mathbb{R}^{n}$.   
 
 \textbf{Proof of Step 1.}
We first show that $\boldsymbol{P}$ is bounded. Let $\overline{p}=q_{n}b$ and let  $\boldsymbol{p}=(p_{1},\ldots,p_{n})$ be a vector such that $p_{i} > \overline{p}$ for some $1 \leq i \leq n$. Then
\begin{equation*}
mq_{i}-p_{i} \leq bq_{n} - p_{i} < bq_{n} - \overline{p}. 
\end{equation*}
Hence $d_{i}(\boldsymbol{p})=0$. That is, $\boldsymbol{p}$ does not belong to $\boldsymbol{P}$. We conclude that $(\overline{p},\ldots, \overline{p})$ is an upper bound of $\boldsymbol{P}$ under the standard product order on $\mathbb{R}^{n}$. Clearly, $\boldsymbol{P}$ is bounded from below. Hence, $\boldsymbol{P}$ is bounded.   

We now show that the set $\boldsymbol{P}$ is convex in $\mathbb{R}^{n}$. Let $\boldsymbol{p}$,$\boldsymbol{p}^{ \prime } \in \boldsymbol{P}$ and $0<\lambda < 1$. 

We need to show that $\lambda \boldsymbol{p} + (1-\lambda)\boldsymbol{p}^{\prime} \in \boldsymbol{P}$. First note that 
\begin{equation*}
    0< \lambda p_{1} + (1-\lambda)p_{1}^{\prime} < \ldots < \lambda p_{n} + (1-\lambda)p_{n}^{\prime}
\end{equation*}
so we only need to show that $d_{i}(\lambda \boldsymbol{p} + (1-\lambda)\boldsymbol{p}^{\prime}) > 0$ for all $i=1,\ldots,n$. 
Let $1\leq i \leq n-1$. Because $d_{i}(\boldsymbol{p})>0$ and $d_{i}(\boldsymbol{p^{\prime}})>0$ we have $F\genfrac{(}{)}{}{}{p_{i +1} -p_{i}}{q_{i +1} -q_{i}} - F\genfrac{(}{)}{}{}{p_{i} -p_{i -1}}{q_{i} -q_{i -1}}>0$ and $F\genfrac{(}{)}{}{}{p^{\prime}_{i +1} -p^{\prime}_{i}}{q_{i +1} -q_{i}} - F\genfrac{(}{)}{}{}{p^{\prime}_{i} -p^{\prime}_{i -1}}{q_{i} -q_{i -1}}>0$. Strict monotonicity of $F$ on its support implies $\frac{p_{i +1} -p_{i}}{q_{i +1} -q_{i}}>\frac{p_{i} -p_{i -1}}{q_{i} -q_{i -1}}$ and $\frac{p^{\prime}_{i +1} -p^{\prime}_{i}}{q_{i +1} -q_{i}} > \frac{p^{\prime}_{i} -p^{\prime}_{i -1}}{q_{i} -q_{i -1}}$. Hence, 
\begin{equation*}
    \frac{\lambda p_{i +1} + (1-\lambda)p^{\prime}_{i +1} -(\lambda p_{i} + (1-\lambda)p^{\prime}_{i})}{q_{i +1} -q_{i}} >  \frac{\lambda p_{i} + (1-\lambda)p^{\prime}_{i} -(\lambda p_{i-1} + (1-\lambda)p^{\prime}_{i-1})}{q_{i } -q_{i-1}}.
\end{equation*}
Using again the strict monotonicity of $F$ we conclude that 
\begin{equation*}
    F\left ( \frac{\lambda p_{i +1} + (1-\lambda)p^{\prime}_{i +1} -(\lambda p_{i} + (1-\lambda)p^{\prime}_{i})}{q_{i +1} -q_{i}} \right) -  
    F\left( \frac{\lambda p_{i} + (1-\lambda)p^{\prime}_{i} -(\lambda p_{i-1} + (1-\lambda)p^{\prime}_{i-1})}{q_{i } -q_{i-1}} \right ) > 0.
\end{equation*}
That is, $d_{i}(\lambda \boldsymbol{p} + (1-\lambda)\boldsymbol{p}^{\prime}) > 0$. Similarly we can show that  $d_{n}(\lambda \boldsymbol{p} + (1-\lambda)\boldsymbol{p}^{\prime}) > 0$. Thus, $\boldsymbol{P}$ is a convex set.

Because $d_{i}(\boldsymbol{p})$ is continuous on $\boldsymbol{P}$ for all $1 \leq i \leq n$, it is immediate that the set  $\boldsymbol{P}$ is an open set in $\mathbb{R}^{n}$.

\textbf{Step 2}. The function $\psi$ is strictly convex on $\boldsymbol{P}$. 

\textbf{Proof of Step 2.} We claim that $\nabla \psi$ is strictly monotone on $\boldsymbol{P}$, i.e., for all $\boldsymbol{p} =(p_{1} ,\ldots  ,p_{n})$ and $\boldsymbol{p}^{\prime } =(p^{ \prime }_{1} ,\ldots ,p^{ \prime }_{n})$ that belong to $\boldsymbol{P}$ and satisfy $\boldsymbol{p} \neq \boldsymbol{p}^{ \prime }$, we have 
\begin{equation*}\left \langle \nabla \psi (\boldsymbol{p}) -\nabla \psi (\boldsymbol{p}^{ \prime }) ,\boldsymbol{p} -\boldsymbol{p}^{ \prime }\right \rangle  > 0
\end{equation*} 
where $\left \langle \boldsymbol{x} ,\boldsymbol{y}\right \rangle  : =\sum _{i =1}^{n}x_{i}y_{i}$ denotes the standard inner product between two vectors $\boldsymbol{x}$ and $\boldsymbol{y}$ in $\mathbb{R}^{n}$. 
Because $\boldsymbol{P}$ is a convex set it is well known that $\nabla \psi$ is strictly monotone on $\boldsymbol{P}$ if and only if $\psi$ is strictly convex on $\boldsymbol{P}$. 

Let $\boldsymbol{p}$,$\boldsymbol{p}^{ \prime } \in \boldsymbol{P}$ and assume that $\boldsymbol{p} \neq \boldsymbol{p}^{ \prime }$. 

 Because $g$ is strictly increasing in $p_{i}$, $k$ is a positive function,  and  $\phi(B_{i})>0$, the supply function  $s_{i}(p_{i}) =  p_{i}^{1/\alpha }\int _{B_{i}}k(x)^{-1/\alpha} \phi (dx)$ is strictly increasing in the price $p_{i}$. Thus, $s_{i}(p_{i}) >s_{i}(p_{i}^{ \prime })$ if and only if $p_{i} >p_{i}^{ \prime }$. Combining the last inequality with the fact that $\boldsymbol{p} \neq \boldsymbol{p}^{ \prime }$ implies
 \begin{equation*}\sum _{i =1}^{n}(p_{i} -p_{i}^{ \prime })(s_{i}(p_{i}) -s_{i}(p_{i}^{ \prime })) >0.
\end{equation*}

Let $p_{0}=p^{\prime}_{0}=0$. We have 

\begin{align*} \sum _{i =1}^{n}(p_{i} -p_{i}^{ \prime })(d_{i}(\boldsymbol{p}) -d_{i}(\boldsymbol{p}^{ \prime })) &  = \sum _{i =1}^{n-1}(p_{i} -p_{i}^{ \prime }) \left (F\left ( \frac{p_{i +1} -p_{i}}{q_{i +1} -q_{i}} \right ) -F \left ( \frac {p_{i} -p_{i -1}}{q_{i} -q_{i-1}} \right ) \right )  \\
& -  \sum _{i =1}^{n-1}(p_{i} -p_{i}^{ \prime })\left (F\left ( \frac{p^{\prime}_{i +1} -p^{\prime}_{i}}{q_{i +1} -q_{i}} \right ) -F \left ( \frac {p^{\prime}_{i} -p^{\prime}_{i -1}}{q_{i} -q_{i-1}} \right ) \right ) \\ 
& + (p_{n}-p^{\prime}_{n})\left (F\genfrac{(}{)}{}{}{p^{\prime}_{n} -p^{\prime}_{n -1}}{q_{n} -q_{n-1}} -F\genfrac{(}{)}{}{}{p_{n} -p_{n -1}}{q_{n} -q_{n-1}}\right ) \\
 & =    \sum _{i =1}^{n}(p_{i} - p_{i-1} -(p_{i}^{ \prime }- p_{i-1}^{\prime}))\left (F\left ( \frac{p^{\prime}_{i} -p^{\prime}_{i-1}}{q_{i} -q_{i-1}} \right ) -F \left ( \frac {p_{i} -p_{i -1}}{q_{i} -q_{i-1}} \right ) \right )  \\ 
& \leq 0.
 \end{align*}
The last inequality follows from the monotonicity of $F$. Thus, 
\begin{align*} \left \langle \nabla \psi (\boldsymbol{p}) -\nabla \psi (\boldsymbol{p}^{ \prime }) ,\boldsymbol{p} -\boldsymbol{p}^{ \prime }\right \rangle & =\sum _{i =1}^{n}(s_{i}(p_{i}) -d_{i}(\boldsymbol{p})) -(s_{i}(p_{i}^{ \prime }) -d_{i}(\boldsymbol{p}^{ \prime }))(p_{i} -p_{i}^{ \prime }) \\
 & =\sum _{i =1}^{n}(p_{i} -p_{i}^{ \prime })(s_{i}(p_{i}) -s_{i}(p_{i}^{ \prime })) - \sum _{i =1}^{n}(p_{i} -p_{i}^{ \prime })(d_{i}(\boldsymbol{p}) -d_{i}(\boldsymbol{p}^{ \prime })) \\ & > 0. \end{align*}
We conclude that $\nabla \psi$ is strictly monotone on the convex set $\boldsymbol{P}$. Hence, $\psi$ is strictly convex on $\boldsymbol{P}$.

\textbf{Step 3.} $(I,\boldsymbol{p})$ is implementable if and only if $\boldsymbol{p}$ is the unique minimizer of $\psi$.   

\textbf{Proof of Step 3.} Suppose that $(I,\boldsymbol{p})$ is implementable where $I =\{B_{1} ,B_{2} ,\ldots  ,B_{n}\}$ and  $\boldsymbol{p}=(p(B_{1}),\ldots,p(B_{n}))$. Let $D =\{D_{I}(B_{i} ,\boldsymbol{p})\}_{B_{i} \in I}$, $S =\{S(B_{i},p(B_{i})\}_{B_{i} \in I}$, and $\lambda  =\{\lambda _{B_{i}}\}_{B_{i} \in I}$ be an equilibrium under $(I,\boldsymbol{p})$.

 Because $(I,\boldsymbol{p})$ is implementable we have $p(B_{i})>0$ for all $B_{i} \in I$ and 
\begin{equation*}
D_{I}(B_{i} ,\boldsymbol{p})=S_{I}(B_{i} ,p(B_{i})) = \int _{B_{i}}g(x ,p(B_{i}))\phi (dx) >0 
\end{equation*}
where the last inequality follows because $g$ is positive (see the proof of Lemma \ref{Lem: expected sellers}) and $\phi(B_{i})>0$. We can assume without loss of generality that $\mathbb{E}_{\lambda _{B_{1}}}(X) < \ldots  < \mathbb{E}_{\lambda _{B_{n}}}(X)$. To see this, note that
 if $\mathbb{E}_{\lambda _{B_{i}}}(X) = \mathbb{E}_{\lambda _{B_{j}}}(X)$ for some $i<j$ then $\min \{D_{I}(B_{i} ,\boldsymbol{p}), D_{I}(B_{j} ,\boldsymbol{p})\}=0$ which contradicts the implementability of  $(I,\boldsymbol{p})$. Thus, relabeling if needed, we can assume $\mathbb{E}_{\lambda _{B_{i}}}(X) < \mathbb{E}_{\lambda _{B_{j}}}(X)$ for all $i<j$. This implies that $p(B_{i}) < p(B_{j})$ for all $i<j$. Thus, $\boldsymbol{p}$ belongs to $\boldsymbol{P}$. Hence, $\nabla  \psi(\boldsymbol{p})=0$ for some $\boldsymbol{p} \in \boldsymbol{P}$. Because $\psi$ is strictly convex on the convex set $\boldsymbol{P}$, there is at most one $\boldsymbol{p} \in \boldsymbol{P}$ such that $\nabla \psi(\boldsymbol{p})=0$.  We conclude that for every information structure $I \in \mathbb{I}(I_{o})$ there exists at most one pricing function $\boldsymbol{p}$ such that $(I,\boldsymbol{p})$ is implementable. 
 
 Furthermore, because the set $\boldsymbol{P}$ is an open set, we have $\nabla  \psi(\boldsymbol{p})=0$ if and only if $\boldsymbol{p}$ is the unique minimizer of the strictly convex function $\psi$ on $\boldsymbol{P}$. We conclude that    
$(I,\boldsymbol{p})$ is implementable if and only if $\boldsymbol{p}$ is the unique minimizer of $\psi$.     
\end{proof}

%\begin{proof}[Proof of Lemma \ref{Lemma: B H}]

%\end{proof}

\begin{proof}[Proof of Theorem \ref{Thm: info structure1}]
We show that $\mathcal{C}^{Q}$ is regular. Then, Theorem \ref{Theorem: Main} implies that the optimal menu is $1$-separating, and hence, the optimal information structure consists of one set of sellers. We proceed with the following steps:

\textbf{Step 1.} Let $\{B\}$ be a $1$-separating information structure and let  $\{(p(B),\mathbb{E}_{\lambda _{B}}(X))\} \in \varphi ^{Q}(\{B\})$. Then for every $p>0$ we have $S_{\{B\}}(B ,p) \geq D_{\{B\}}(B,p)$ if and only if $p \geq p(B)$.

\textbf{Proof of Step 1.} Assume in contradiction that $p(B) > p>0$ and $S_{\{B\}}(B ,p) \geq D_{\{B\}}(B,p)$. Recall that the sellers' expected quality $\mathbb{E}_{\lambda _{B}}(X)$ does not depend on the price (see Lemma \ref{Lem: expected sellers}). We have
\begin{align*}  
 1 -F\left ( \frac {p}{\mathbb{E}_{\lambda _{B}}(X)} \right ) & = D_{\{B\}}(B,p) \leq S_{\{B\}}(B,p)  \\
& = \int _{B}g(x ,p)\phi (dx) \\
 & < \int _{B}g(x ,p(B))\phi (dx) \\
 &  = 1 -F\left ( \frac {p(B)}{\mathbb{E}_{\lambda _{B}}(X)} \right ) \end{align*}
 which is a contradiction to the fact that $F$ is increasing. The strict inequality follows because $g$ is strictly increasing in the price and $\phi(B)>0$ (see the proof of Lemma \ref{Lem: expected sellers}). The last equality follows from the fact that $\{(p(B),\mathbb{E}_{\lambda _{B}}(X))\} \in \varphi ^{Q}(\{B\})$. This proves that $S_{\{B\}}(B ,p) \geq D_{\{B\}}(B,p)$ implies $p \geq p(B)$. The other direction is proven in a similar manner.

\textbf{Step 2.} Suppose that $(\{B\},p(B))$  induces a menu that is maximal in $\mathcal{C}_{1}^{Q}$. Then $B \in I_{o}=\{A_{1},\ldots,A_{l}\}$. 

%Every $1$-separating information structure that is not an element of $\{\{A_{1}\},\ldots,\{A_{l}\}\}$ is not maximal.

\textbf{Proof of Step 2.} Let $I=\{B\}$ be a $1$-separating information structure and assume that $B \neq A_{i}$ for all $A_{i} \in I_{o}$. Thus, $B$ is a union of at least two elements of $I_{o}$. Let $k$ be highest index among these elements. Hence, $\mathbb{E}_{\lambda _{A_{j}}}(X) \leq \mathbb{E}_{\lambda _{A_{k}}}(X)$ for all $A_{j} \subseteq B$, $A_{j} \in I_{o}$. We have   
\begin{align*}\mathbb{E}_{\lambda _{B}}(X)  & =\frac{\int _{B}x(k(x))^{ -1/\alpha}\phi (dx)}{\int _{B}(k(x))^{ -1/\alpha}\phi (dx)} \\ 
& =  \frac{\sum _{A_{i}:A_{i} \subseteq B, A_{i} \in I_{o}} \int _{A_{i}}x(k(x))^{ -1/\alpha}\phi (dx)}{\sum _{A_{i}:A_{i} \subseteq B, A_{i} \in I_{o}} \int _{A_{i}}(k(x))^{ -1/\alpha}\phi (dx)} \\
& \leq \frac{\int _{A_{k}}x(k(x))^{ -1/\alpha}\phi (dx)}{\int _{A_{k}}(k(x))^{ -1/\alpha}\phi (dx)} \\
& = \mathbb{E}_{\lambda _{A_{k}}}(X) .
\end{align*}
The first and last equalities follow from Lemma \ref{Lem: expected sellers}. The inequality follows from the elementary inequality $\sum _{i=1}^{n} x_{i} / \sum _{i=1}^{n} y_{i} \leq \max _{1\leq i \leq n} x_{i}/y_{i}$ for positive numbers $x_{1},\ldots,x_{n}$ and $y_{1},\ldots,y_{n}$. 

Assume that $(I,p(B))$ is implementable and that it induces the menu \{$(p(B),\mathbb{E}_{\lambda _{B}}(X))$\}. Then the arguments above imply $\mathbb{E}_{\lambda _{B}}(X) \leq \mathbb{E}_{\lambda _{A_{k}}}(X)$.   

We claim that $p(B) < p(A_{k})$ where $p(A_{k})$ is the (unique) equilibrium price under the information structure $\{A_{k}\}$ (the existence of this equilibrium price follows from the arguments in Step 3). To see this, note that 
\begin{align*}  S_{A_{k}}(B,p(B)) & = \int_{A_{k}} \left ( \frac{p(B)}{k(x)} \right)^{1/\alpha}\phi (dx)  \\
& < \int_{B} \left ( \frac{p(B)}{k(x)} \right)^{1/\alpha} \phi (dx)  \\
& = S_{I}(B,p(B)) = D_{I}(B,p(B)) \\
& =1-F\left( \frac{p(B)}{\mathbb{E}_{\lambda _{B}}(X)} \right ) \\ 
& \leq  1-F\left( \frac{p(B)}{\mathbb{E}_{\lambda _{A_{k}}}(X)} \right ) \\
& = D_{A_{k}}(B,p(B)).
\end{align*}
The first inequality follows from the facts that  $k$ is a positive function, $B \supseteq A_{k}$, and $\phi(B \setminus A_{k})>0$. The second inequality follows from the fact that $F$ is increasing. Hence, the demand exceeds the supply under the price $p(B)$. From Step 1 we have $p(B) < p(A_{k})$. Thus, the information structure-price pair  $(\{B\},p(B))$ does not induce a menu that is maximal in $\mathcal{C}^{Q}_{1}$.

\textbf{Step 3.} $\mathcal{C}^{Q}$ is regular.

\textbf{Proof of Step 3.} Let $(I ,\boldsymbol{p})$ be implementable where $I =\{B_{1} ,B_{2} ,\ldots  ,B_{n}\}$. Let $$C=\{(p(B_{1}),\mathbb{E}_{\lambda _{B_{1}}}(X)),\ldots,(p(B_{n}),\mathbb{E}_{\lambda _{B_{n}}}(X))\}$$ be the menu that is induced by $(I ,\boldsymbol{p})$. Suppose that $(D ,S ,\lambda )$ implements $(I,\boldsymbol{p})$. We can assume that $D(B_{i} ,\boldsymbol{p}) >0$ for all $B_{i} \in I$ and $0 <p(B_{1}) < \ldots  < p(B_{n})$ (see the proof of Proposition \ref{Prop: unique equilibrium}). Note that $D(B_{i} ,p) >0$ for $B_{i} \in I$ implies $0 <\mathbb{E}_{\lambda _{B_{1}}}(X) < \ldots  < \mathbb{E}_{\lambda _{B_{n}}}(X)$. 

Consider the $1$-separating information structure $I^{ \prime } =\{B_{n}\}$. 

We claim that there exists a $ p^{eq}(B_{n}) \geq p(B_{n})$ such that $(I^{ \prime } ,p^{eq}(B_{n}))$ is implementable and $(I^{ \prime } ,p^{eq}(B_{n}))$ induces the menu $\{(p^{eq}(B_{n}) ,\mathbb{E}_{\lambda _{B_{n}}}(X))\}$.

 From Step 1 in the proof of Theorem \ref{Theorem: Main}, we have $D_{I^{ \prime }}(B_{n} ,p(B_{n})) =1 -F \left ( \frac {p(B_{n})}{\mathbb{E}_{\lambda _{B_{n}}}(X)} \right )$. Note that there exists a $\overline{p} >p(B_{n})$ such that $D_{I^{ \prime }}(B_{n} ,\overline{p}) =0$ (for example we can choose $\overline{p} =\mathbb{E}_{\lambda _{B_{n}}}(X)b$). 

Define the excess demand function $\tau  :[p(B_{n}) ,\overline{p}] \rightarrow \mathbb{R}$ by $\tau ( \cdot ) =D_{I^{ \prime }}(B_{n} , \cdot ) -S_{I^{ \prime }}(B_{n}, \cdot )$. From the definition of $\overline{p}$ we have $\tau (\overline{p}) <0$.

 Note that 
\begin{align*}\tau (p(B_{n})) &  =D_{I^{ \prime }}(B_{n},p(B_{n})) -S_{I^{ \prime }}(B_{n} ,p(B_{n})) \\
 &  =D_{I^{ \prime }}(B_{n} ,p(B_{n})) -S_{I}(B_{n} ,p(B_{n})) \\
 &  \geq D_{I}(B_{n} ,\boldsymbol{p}) -S_{I}(B_{n} ,p(B_{n})) =0\end{align*}
 The first equality follows from the definition of $\tau $. The second equality follows from the fact that $S_{I}(B_{n} ,p(B_{n})) =S_{I^{ \prime }}(B_{n} ,p(B_{n})) =\int _{B_{n}}g(x ,p(B_{n}))\phi (dx)$, i.e., seller $x$'s optimal quantity decision does not change when the information structure changes. The inequality follows from the definition of the demand function. The last equality follows from the fact that $(I ,\boldsymbol{p})$ is implementable.

Because the distribution function $F$ and the optimal quantity function $g$ are continuous in the price, the excess demand function $\tau $  is continuous on $[p(B_{n}) ,\overline{p}]$. Thus, from the intermediate value theorem, there exists a $ p^{eq}(B_{n})$ in $[p(B_{n}) ,\overline{p}]$ such that $\tau (p^{eq}(B_{n})) =0$. We conclude that $(I^{ \prime } ,p^{eq}(B_{n}))$ is implementable and that $ p^{eq}(B_{n}) \geq p(B_{n})$. Thus, the menu $\{(p^{eq}(B_{n}) ,\mathbb{E}_{\lambda _{B_{n}}}(X))\}$ is a $1$-separating menu that belongs to $\mathcal{C}^{Q}_{1}$ and condition (i) of Definition \ref{Def: marketN} holds. 

Condition (ii) of Definition \ref{Def: marketN} immediately follows from using Step 2 to conclude that $B^{H} \in I_{o}$, and applying Step 1 to the information structure $\{B^{H}\}$. Thus, $\mathcal{C}^{Q}$ is regular. 

Theorem \ref{Theorem: Main} implies that the optimal $1$-separating menu is maximal. Combining this with Step 2 imply that the optimal $1$-separating information structure-price pair induces a menu that is maximal in $\mathcal{C}^{Q}_{1}$ and $ B^{\ast} \in I_{o}=\{A_{1},\ldots,A_{l}\}$ where $ I^{ \ast } := \{B^{\ast} \} $ is the optimal information structure. This concludes the proof of the Theorem. 
 \end{proof}
 
 \begin{proof}[Proof of Proposition \ref{prop:localModel1}]
 Let $(I ,\boldsymbol{p})$ be implementable where $I =\{B_{1} ,B_{2} ,\ldots  ,B_{n}\}$. Let $$C=\{(p(B_{1}),\mathbb{E}_{\lambda _{B_{1}}}(X)),\ldots,(p(B_{n}),\mathbb{E}_{\lambda _{B_{n}}}(X))\}$$ be the menu that is induced by $(I ,\boldsymbol{p})$.

%(i) From Proposition \ref{Prop: localA} the menu $C' = \{ (p(B_{n}),\mathbb{E}_{\lambda _{B_{n}}}(X))\}$ yields more revenue than the menu $C$. 

(i) From the proof of Theorem \ref{Thm: info structure1} there exists a menu $C''= \{ (p^{eq}(B_{n}),\mathbb{E}_{\lambda _{B_{n}}}(X))\}$ with $p^{eq}(B_{n}) \geq p(B_{n} )$ that is feasible. Inequality (\ref{Ineq: supply}) together with Step 1 in the proof of Theorem \ref{Thm: info structure1} imply that $p^{eq}(B_{n}) \leq p^{M}(B_{n})$. Hence, the conditions of Proposition \ref{Proof of Prop: localB} are satisfied and the result follows. The proof of part (ii) is analogous. 
\end{proof}

 \begin{proof} [Proof of Proposition \ref{Prop:SupplyImbalnce1}]
Let $(I ,\boldsymbol{p})$ be implementable where $I =\{B_{1} ,B_{2} ,\ldots  ,B_{n}\}$. Let $$C=\{(p(B_{1}),\mathbb{E}_{\lambda _{B_{1}}}(X)),\ldots,(p(B_{n}),\mathbb{E}_{\lambda _{B_{n}}}(X))\}$$ be a menu that is induced by $(I ,\boldsymbol{p})$. From the proof Theorem \ref{Theorem: Main} the menu $ \{ (p(B_{n}), \mathbb{E}_{\lambda _{B_{n}}}(X)) \} $ yields more revenue than the menu $C$. 
Consider the $1$-separating information structure $I^{ \prime } =\{A_{l}\}$. Then from the Theorem's assumption $(I',p^{M}(A_{l})$ is implementable. We have 
$$ \pi^{Q} (I ,\boldsymbol{p}) \leq p(B_{n}) \left (1- F \left ( \frac{p(B_{n})} {\mathbb{E}_{\lambda _{B_{n}}}(X)} \right ) \right ) \leq p(B_{n}) \left (1- F \left ( \frac{p(B_{n})} {\mathbb{E}_{\lambda _{A_{l}}}(X)} \right ) \right ) \leq p^{M}(A_{l}) \left (1- F \left ( \frac{p^{M}(A_{l})} {\mathbb{E}_{\lambda _{A_{l}}}(X)} \right ) \right )  $$
which proves that there is a $1$-separating information structure that yields more revenue than the menu $C$. 
\end{proof}
%\begin{proof}[Proof of Corollary \ref{Coro: ban low qu}]
%Let $I=\{B\}$ be a $1$-separating information structure and assume that $B \neq A_{l}$. Assume that $(I,p(B))$ is implementable and that it induces the menu \{$(p(B),\mathbb{E}_{\lambda _{B}}(X))$\}. From a similar argument to the arguments in the proof of Theorem \ref{Thm: info structure1} (see Step 3) we have  $\mathbb{E}_{\lambda _{B}}(X) \leq \mathbb{E}_{\lambda _{A_{l}}}(X)$. Let $p(A_{l})$ be the (unique) equilibrium price under the information structure $\{A_{l}\}$. We have 
%\begin{align*}  S_{A_{l}}(B,p(B)) & = \int_{A_{l}} \left ( \frac{p(B)}{k(x)} \right)^{1/\alpha} \phi (dx)  \\
%& \leq \int_{B} \left ( \frac{p(B)}{k(x)} \right)^{1/\alpha} \phi(dx) \\
%& \leq D_{A_{l}}(B,p(B)).
%\end{align*}
%The first inequality follows from inequality (\ref{Eq: low supply Al}) and the fact that  $B \supseteq A_{k}$ for some set $A_{k} \in I_{o}$. The second inequality follows from the same arguments as the arguments in the proof Theorem \ref{Thm: info structure1} (see Step 3). From Step 2 in the proof of Theorem \ref{Thm: info structure1} we have $p(B) \leq p(A_{l})$. Thus, the set of maximal information structures $\mathcal{M}$ consists of one element $\{A_{l}\}$. From Theorem \ref{Thm: info structure1} the optimal information structure is $1$-separating.  
%Theorem \ref{Thm: info structure1} also implies that the optimal information structure belongs to $\mathcal{M}$. Thus, $A_{l}$ is the optimal information structure.
%\end{proof}

\subsection{Proofs of Section \ref{Sec: Two-sided2 prices}}

\begin{proof}[Proof of Theorem \ref{thm: Bertrand}]
Let $I=\{B_{1},\ldots,B_{n}\}$ be an information structure and let $L(I)=\{G_{1},\ldots,G_{n}\}$. 

 (i) Suppose that $C \in \varphi ^{P}(I)$. Let $\boldsymbol{p}=(p(B_{1}),\ldots,p(B_{n}))$ be the equilibrium price vector that is associated with the menu $C$. We claim that $p(B_{i})=c(G_{i})$. 

If $p(B_{i}) < c(G_{i})$ then for every seller $x \in B_{i}$ we have $\overline{U}(x ,H(B_{i}) ,p(B_{i}) ,\boldsymbol{p}) < 0$ so the mass of sellers that participate in the platform equals to $0$ which contradicts the implementability of $I$. 
If $p(B_{i}) > c(G_{i})$ then the sellers' pricing decisions are not optimal. Sellers in $G_{i} \subseteq B_{i}$ can decrease their price and increase their utility. Thus, $I$ is not implementable. We conclude that $p(B_{i})=c(G_{i})$ for all $B_{i} \in I$. 

Let $B_{i} \in I$. Because $c(A_{i}) < c(A_{j})$ whenever $i<j$ we have $\overline{U}(x ,H(B_{i}) ,p(B_{i}) ,\boldsymbol{p}) < 0$ for sellers $x \in B_{i} \setminus G_{i}$ under the equilibrium price vector $\boldsymbol{p}=(c(G_{1}),\ldots,c(G_{n}))$. Thus, sellers in $B_{i} \setminus G_{i}$ do not participate in the platform and only the sellers in $G_{i} \subseteq B_{i}$ participate in the platform. This completes the proof of part (i).

(ii)  First note that $D_{\{B_{n}\}}(B_{n} ,c(G_{n})) \geq D_{I}(B_{n} ,(c(G_{1}),\ldots,c(G_{n}))) > 0$ (see the proof of Theorem \ref{Thm: info structure1}). Furthermore, under the price $c(G_{n})$, it is optimal for all the sellers in $G_{n} \subseteq B_{n}$ to participate in the platform and for all the sellers in $B_{n} \setminus G_{n}$ to not participate in the platform. So $\mathbb{E}_{\lambda _{G_{n}}}(X)$ is the sellers' expected quality given the sellers' optimal entry decisions and the price  $c(G_{n})$. Also, it is easy to see that the price $c(G_{n})$ maximizes the participating sellers' utility. From the quantity allocation function $h_{I}$ it follows immediately that the market clearing condition is satisfied. We conclude that $\{(c(G_{n}), \mathbb{E}_{\lambda _{G_{n}}}(X))\}  \in \varphi ^{P}(\{B_{n}\})$.

(iii) From part (i) we have $C_{o} =\{(c(A_{1}) ,\mathbb{E}_{\lambda _{A_{1}}}(X)) ,\ldots  ,(c(A_{l}) ,\mathbb{E}_{\lambda _{A_{l}}}(X))\}$. Let $C \in \varphi ^{P} (I)$. Then part (i) implies that $C =\{(c(G_{1}) ,\mathbb{E}_{\lambda _{G_{1}}}(X)) ,\ldots  ,(c(G_{n}) ,\mathbb{E}_{\lambda _{G_{n}}}(X))\}$. Thus $C \in 2^{C_{o}}$. We conclude that $\mathcal{C}^{P} \subseteq 2^{C_{o}}$. Now consider a menu  $C' = \{(c(A_{\mu_{1}}) ,\mathbb{E}_{\lambda _{A_{\mu_{1}  }}}(X)) ,\ldots  ,(c(A_{\mu_{j}}) ,\mathbb{E}_{\lambda _{A_{\mu_{j}}}}(X))\}  \in 2^{C_{o}}$ for sum increasing numbers $\{ \mu_{k} \}_{k=1}^{j}$.  Consider the information structure $I'=\{A_{\mu_{1}} , \ldots , A_{\mu_{j}} \} $. Because $I_{o}$ is implementable we have $D_{I'}(A_{\mu _{i}} ,(c(A_{ \mu _{1} }),\ldots,c(A_{ \mu _{j} }))) \geq D_{I_{o}}(A_{\mu _{i}}  ,(c(A_{1}),\ldots,c(A_{l}))) > 0$ for all $A_{\mu _{i} } \in I'$. An analogous argument to the argument in part (ii) shows that $I'$ is implementable and $ C' \in \varphi ^{P}(I')$. That is, $2^{C_{o}}  \subseteq \mathcal{C}^{P}$. We conclude that $2^{C_{o}}  = \mathcal{C}^{P}$ which proves part (iii). 
\end{proof}

\end{document}